\let\csname equation*\endcsname\relax
\let\csname endequation*\endcsname\relax
\newtheorem{theorem}{Theorem}
\newtheorem{lemma}{Lemma}
\newtheorem{corollary}{Corollary}
\newtheorem{observation}{Observation}
\newtheorem{definition}{Definition}
\newtcolorbox[auto counter]{mybox}[2][]{
	enhanced,
	breakable,
	colback=blue!5!white,
	colframe=blue!75!black,
	fonttitle=\bfseries,
	title=Box \thetcbcounter: #2,#1
}
\begin{document}
\title{Interplay among entanglement, measurement incompatibility, and nonlocality}
\author{Yuwei Zhu$^{1,2}$, Xingjian Zhang$^2$, Xiongfeng Ma$^{2,*}$}
\address{$^1$ Yau Mathematical Sciences Center, Tsinghua University, Beijing 100084, P.~R.~China}
\address{$^2$ Center for Quantum Information, Institute for Interdisciplinary Information Sciences, Tsinghua University, Beijing 100084, P.~R.~China}
\address{$^*$ Author to whom any correspondence should be addressed.}
\eads{\mailto{xma@tsinghua.edu.cn}}

\begin{abstract}
Nonlocality, manifested by the violation of Bell inequalities, indicates entanglement within a joint quantum system. A natural question is how much entanglement is required for a given nonlocal behavior. Here, we explore this question by quantifying entanglement using a family of generalized Clauser-Horne-Shimony-Holt-type Bell inequalities. Given a Bell-inequality violation, we derive analytical lower bounds on the entanglement of formation, a measure related to entanglement dilution. The bounds also lead to an analytical estimation of the negativity of entanglement. In addition, we consider one-way distillable entanglement tied to entanglement distillation and derive tight numerical estimates. With the additional assumptions of qubit-qubit systems, we find that the relationship between entanglement and measurement incompatibility is not simply a trade-off under a fixed nonlocal behavior. Furthermore, we apply our results to two realistic scenarios --- non-maximally entangled and Werner states. We show that one can utilize the nonlocal statistics by optimizing the Bell inequality for better entanglement estimation.
\end{abstract}

\noindent{\it Keywords\/}: device-independent, entanglement quantification, nonlocality, Bell inequality, incompatibility

\maketitle

\section{Introduction}
In the early development of quantum mechanics, Einstein, Podolsky and Rosen noticed that the new physical theory leads to a ``spooky action'' between separate observables that is beyond any possible classical correlation~\cite{einstein1935can}. Later, Bell formalizes such a quantum correlation via an experimentally feasible test that is now named after him~\cite{bell1964einstein}. In one of the simplest settings, the Clauser-Horne-Shimony-Holt (CHSH) Bell test~\cite{clauser1969proposed}, two distant experimentalists, Alice and Bob, each has a measurement device and share a pair of particles. While they may not know their devices and physical system \emph{a priori}, they can each take random measurements and later evaluate the Bell expression as shown in Fig.~\ref{fig:CHSH}, 
\begin{equation}
 S=\sum_{a,b,x,y}ab(-1)^{xy}p(a,b|x,y)=\sum_{x,y}(-1)^{xy}\mathbb{E}(ab|x,y),
 \label{CHSH}
\end{equation}
where $x,y\in\{0,1\}$ represent their random choices of measurement settings and $a,b\in\{+1,-1\}$ denote their measurement outcomes, $p(a,b|x,y)$ denotes the outcome probability conditioned on the inputs, and $\mathbb{E}(ab|x,y)$ is the expected value of the product value, $ab$, conditioned on the tuple of inputs, $(x,y)$.
If Alice and Bob observe a value of $S>2$, then they cannot explain the observed correlation using any physical theory that follows local realism. We call such an observation the violation of a Bell inequality, and the statistics exhibit Bell nonlocality. To demonstrate such a nonlocal behavior, the physical systems must exhibit a non-classical feature, wherein quantum theory, entanglement is such an ingredient~\cite{schrodinger1935discussion}. The CHSH expression has a maximal value of $2\sqrt{2}$, which requires the maximally entangled state in a pair of qubits, $\ket{\Phi^+}=(\ket{00}+\ket{11})/\sqrt{2}$~\cite{cirelson1980quantum}. We also term this state the Bell state.

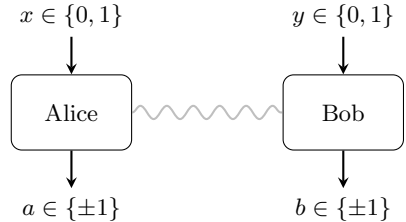
\begin{figure}[hbt!]
\centering
\tikzstyle{arrow} = [thick,->,>=stealth]
\tikzstyle{block} = [rectangle, rounded corners, align=center, minimum width=1.6cm, minimum height=1cm,text centered,draw=black]
\begin{tikzpicture}[node distance = 2cm, auto]	
	\node [block]  (A) {Alice};
	\node [block, right =2cm of A] (B) {Bob};

	\draw[arrow,<-] (A.north) --++ (0,.5) node[above] {$x\in\{0,1\}$};
	\draw[arrow] (A.south) --++ (0,-.5) node[below] {$a\in\{\pm 1\}$};
	\draw[arrow,<-] (B.north) --++ (0,.5) node[above] {$y\in\{0,1\}$};
	\draw[arrow] (B.south) --++ (0,-.5) node[below] {$b\in\{\pm 1\}$};
	\draw[-,decorate,decoration={snake},thick,gray,opacity=0.5] (A.east) -- (B.west);
\end{tikzpicture}
\caption{A diagram of the CHSH Bell test. Two space-like separated users, Alice and Bob, share an unknown quantum state and own untrusted devices. In each round of the CHSH Bell test, Alice applies the measurement determined by the random input $x\in\{0,1\}$ and outputs her measurement result, $a\in\{\pm 1\}$. The measurement process is similar on Bob’s side, with input $y$ and output $b$. As the round of tests accumulated, the CHSH Bell value $S$ in Eq.~\eqref{CHSH} can be evaluated.}
\label{fig:CHSH}
\end{figure}

Entanglement characterizes a joint physical state among multiple parties that cannot be generated through local operations and classical communication (LOCC)~\cite{guhne2009entanglement,horodecki2009quantum}. Beyond its role in understanding quantum foundations, entanglement is a useful resource in a variety of quantum information processing tasks, including quantum communication~\cite{curty2004entanglement}, quantum computation~\cite{jozsa2003role}, and quantum metrology~\cite{giovannetti2011advances}. 
With a resource-theoretic perspective, a large class of information processing operations can be interpreted as entanglement conversion processes under LOCC, wherein one may quantify the participation of entanglement using appropriate measures~\cite{vedral1997quantifying,lo1999unconditional,shor2000simple,bennett1996mixed}.
Therefore, the fundamental question is to detect and quantify entanglement in a system. While state tomography fully reconstructs the information of a quantum state and hence the entanglement properties~\cite{raymer1994complex,leonhardt1996discrete,leonhardt1997measuring}, 
the validity of results relies on the trustworthiness of the detection probes. As detection loss and environmental noise are inevitable in practice, the realistic probes may deviate from the ideal ones~\cite{goh2019experimental}. Even worse, when the malfunction is too severe or the measurements are controlled by adversaries, tomography can lead to false entanglement detection for separable states~\cite{xu2014implementation,yuan2016reliable}.

Fortunately, quantum nonlocality provides a way to bypass the problem. Note that in the Bell test, one does not need to characterize the quantum devices \emph{a priori}, and thus, the indication of entanglement from Bell nonlocality is a device-independent (DI) conclusion~\cite{mayers1998quantum,acin2007device}. This observation leads to the question of what the minimum amount of entanglement is necessary for a given nonlocal behavior.
In other words, Bell tests can serve as a DI entanglement estimation tool. 
In the literature, there are already endeavors into the question~\cite{verstraete2002entanglement,liang2011semi,moroder2013device,toth2015evaluating,arnon2017noise,chen2018exploring,arnon2019device}. The quantitative results provide us with tools for devising novel quantum information processing tasks. A notable investigation is the analysis of DI quantum key distribution~\cite{ekert1992quantum,mayers1998quantum,acin2007device}. With the link among nonlocality, entanglement, and secure communication, we can quantify the key privacy solely from Bell nonlocality~\cite{arnon2018practical,zhang2020efficient,zhang2021quantum}.


Despite the physical intuition for entanglement estimation via Bell nonlocality, the quantitative relation between entanglement and nonlocality can be subtle~\cite{acin2012randomness}. Above all, while the notion of Bell nonlocality arises from the observation of correlations, entanglement is defined by the opposite of a restricted state preparation process. In fact, the Bell nonlocality is a stronger notion than entanglement. Though a nonlocal behavior necessarily requires the presence of entanglement, not all entangled states can unveil a nonlocal correlation~\cite{werner1989quantum,barrett2002nonsequantial}. The conceptual difference even leads to some counter-intuitive results, where a series of works aimed at characterizing their exact relation, such as the discussions on the Peres conjecture --- whether Bell nonlocality is equivalent to distillability of entanglement~\cite{peres1999all,vertesi2014disproving}.

In addition, different entanglement measures may enjoy distinct operational meanings. In general, these measures are not identical to each other. Particularly, there exist quantum states of which the entanglement cost is strictly higher than the distillable entanglement~\cite{lami2023no}. The two measures correspond to the operations of entanglement dilution and entanglement distillation, respectively. Such a phenomenon exhibits the irreversibility of the entanglement theory. Furthermore, estimations of different entanglement measures from the same nonlocal behavior can differ. Taking the CHSH Bell expression as an example, it witnesses a non-trivial value of the negativity of entanglement as long as there is a Bell-inequality violation~\cite{moroder2013device}. On the other hand, the estimation of the negative conditional entropy of entanglement remains zero for a range of the Bell-inequality violation that is not high enough~\cite{arnon2019device}.


In this work, we systematically study entanglement estimation via a family of generalized CHSH-type Bell inequalities. We treat the measurement devices as black boxes. Different implementations can lead to the same observed nonlocal behavior. As depicted in Fig.~\ref{fig:interplay}, a nonlocal behavior necessarily needs both entanglement and incompatible local measurements. In other words, a system with separable states or compatible local measurements definitely fails to observe nonlocality. One may expect a trade-off relationship between state entanglement and measurement incompatibility for a given nonlocal behavior. Hence, we explore the interplay among entanglement, nonlocality, and measurement incompatibility with different entanglement measures. 

\begin{figure}[hbt!]
\centering 
\tikzstyle{arrow} = [thick,->,>=stealth]
\tikzstyle{block} = [rectangle, rounded corners, align=center, minimum width=2cm, minimum height=1cm,text centered,draw=black]
\begin{tikzpicture}[node distance = 2cm, auto]	
	\node[block]  (N) {Nonlocality};
	\node[block] (E) at ($(N)+(220:3cm)$) {Entanglement};
	\node[block] (M) at ($(N)+(-40:3cm)$) {Measurement \\ incompatibility};
	\node[below =.5cm of N] {interplay};

	\draw[arrow,->] (N.south) -- (E.north);
	\draw[arrow,->] (N.south) -- (M.north);
	\draw[latex-latex,thick,red] (E.east) -- (M.west) node[midway,below,black] {\color{red} trade-off};
\end{tikzpicture}
\caption{The interplay among nonlocality, entanglement, and measurement incompatibility. A nonlocal behavior necessarily indicates both entanglement and incompatible local measurements. A system with separable states or compatible local measurements fails in exhibiting nonlocality. Intuitively, under a given nonlocal behavior, one may expect a trade-off relationship between entanglement and measurement incompatibility. In this work, we start from the entanglement estimation via nonlocality, from which we realize that the relation between entanglement and measurement incompatibility is subtler than a simple trade-off. We study the interplay among nonlocality, entanglement, and measurement incompatibility in detail.}

\label{fig:interplay}
\end{figure}
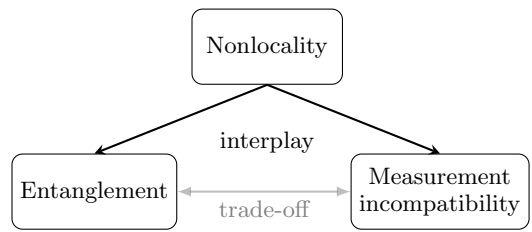

The rest of the paper is organized as follows. In Sec.~\ref{sc:Pre}, we review the necessary concepts in nonlocality and entanglement theories. In Sec.~\ref{sc:framework}, we present the general framework for estimating entanglement in the underlying system using the set of generalized CHSH-type Bell inequalities, namely the tilted CHSH Bell inequalities. Then, we consider three special entanglement measures: the entanglement of formation (EOF), the one-way distillable entanglement, and the negativity of entanglement. For a given Bell-inequality violation, we derive analytical estimation results for the EOF and negativity measures. For one-way distillable entanglement, we obtain tight numerical estimation results. In Sec.~\ref{sc:interplay}, we utilize the entanglement estimation results and investigate the interplay among nonlocality, entanglement, and measurement incompatibility. Particularly, when the underlying state is known to be a pair of qubits, we observe the relation between entanglement and measurement incompatibility under a given nonlocal behavior to be more complex than a simple trade-off. From a practical perspective, in Sec.~\ref{sc:numerical}, we also simulate statistics that arise from pure entangled states and Werner states and examine the performance of our results.

\section{Preliminary and previous work}\label{sc:Pre}
\subsection{General CHSH-type Bell tests}

In this work, we consider the family of generalized CHSH-type Bell tests. Under quantum mechanics, the Bell expression is given by~\cite{acin2012randomness}
\begin{equation}
\label{CHSH_type}
\begin{split}
    S &=\Tr[\rho_{AB}\left(\alpha\hat{A_0}\otimes \hat{B_0}+\alpha \hat{A}_0\otimes \hat{B}_1+\hat{A}_1\otimes \hat{B}_0-\hat{A}_1\otimes \hat{B}_1\right)] \\
    &=\Tr(\rho_{AB}\hat{S}_{\alpha}),
\end{split}
\end{equation}
where $\rho_{AB}$ is the underlying bipartite quantum state, $\hat{A}_{x}$ and $\hat{B}_{y}$ are the observables measured by Alice and Bob, according to their measurement choices, $x,y\in\{0,1\}$, respectively. The family of Bell expressions is parameterized by $\alpha\geq1$, which tilts the contributions of $\hat{A}_0\otimes(\hat{B}_0+\hat{B}_1)$ and $\hat{A}_1\otimes(\hat{B}_0-\hat{B}_1)$ to the Bell value. When $\alpha=1$, Eq.~\eqref{CHSH_type} degenerates to the original CHSH expression defined by Eq.~\eqref{CHSH}~\cite{clauser1969proposed}. For simplicity, we call the expression under the fixed parameter of $\alpha$ as $\alpha$-CHSH expression and $\hat{S}_{\alpha}$ the $\alpha$-CHSH operator. If the underlying quantum state is separable or the local measurement observables are compatible, the $\alpha$-CHSH expression is upper bounded by $S(\alpha)\leq 2\alpha$, which commits a local hidden variable model to reproduce the correlation. In quantum theory, the largest value of $\alpha$-CHSH expression is $2\sqrt{\alpha^2+1}$~\cite{acin2012randomness}. Observation of a Bell value in $(2\alpha,2\sqrt{\alpha^2+1}]$, termed Bell-inequality violation, necessarily implies the existence of entanglement between two local systems and measurement incompatibility between the local measurement observables~\cite{acin2012randomness}. The generalized CHSH-type Bell inequalities have been used for parameter estimation in tasks of DI randomness generation~\cite{acin2012randomness,wooltorton2022tight} and quantum key distribution~\cite{Woodhead2021deviceindependent}, which can outperform the original CHSH inequality in certain practical cases.

In the study of Bell nonlocality, we do not put prior trust in the underlying physical systems. In particular, we do not assume a bounded system dimension. Nevertheless, the simplicity of CHSH-type Bell expressions allows us to apply Jordan's lemma to effectively reduce the system to a mixture of qubit pairs~\cite{acin2007device}. We shall explain how to apply this result when we come to the part of main results. 

\subsection{Entanglement measures}
\label{subsc:ent measures}

In our work, we study entanglement estimation in a bipartite system, $\rho_{AB}\in\mathcal{D}(\mathcal{H}_A\otimes\mathcal{H}_B)$, where we use $\mathcal{D}$ to denote the set of all density operators acting on the associating Hilbert space. 
The first measure we consider is the EOF, $E_{\mathrm{F}}(\rho_{AB})$~\cite{bennett1996mixed}. Operationally, this measure provides a computable bound on the entanglement cost, which quantifies the optimal state conversion rate of diluting maximally entangled states into the desired states of $\rho_{AB}$ under LOCC~\cite{bennett1996mixed}. For a pure state, the EOF equals the entanglement entropy, $E_{\mathrm{F}}(\ket{\phi}_{AB})=H(\rho_A)=H(\rho_B)$, where $\rho_A$ and $\rho_B$ denote the partial state of system $A$ and $B$, respectively, and $H(\cdot)$ represents the von Neumann entropy. When extended to a general state, the EOF is defined via a convex-roof construction,
\begin{equation}
    E_{\mathrm{F}}(\rho_{AB})=\min_{\{p_i,\ket{\phi}_i\}_i}\sum_{i}p_i E_{\mathrm{F}}(\ket{\phi_i}_{AB}),
    \label{EOF}
\end{equation}
where the optimization is taken over all possible pure-state decomposition, $\rho_{AB}=\sum_i p_i \ketbra{\phi_i}_{AB},\sum_ip_i=1,\forall p_i,p_i\geq0$. When restricting $\rho_{AB}$ to the region of two-qubit states, the EOF measure takes a closed form~\cite{hill1997entanglement}, 
\begin{equation}
    E_{\mathrm{F}}(\rho_{AB})=h\left(\frac{1+\sqrt{1-C^2(\rho_{AB})}}{2}\right),
\label{eof}
\end{equation}
where $h(p)=-p\log p-(1-p)\log (1-p)$ is the binary entropy function for $p\in[0,1]$, and $C(\rho_{AB})$ is the concurrence of $\rho_{AB}$, a useful entanglement monotone~\cite{hill1997entanglement,rungta2001universal}. For a general two-qubit quantum state, $\rho_{AB}$, its concurrence is given by
\begin{equation}\label{eq:concurrence}
    C(\rho_{AB})=\max\{0,\lambda_1-\lambda_2-\lambda_3-\lambda_4\},
\end{equation}
where values $\lambda_i$ are the decreasingly ordered square roots of the eigenvalues of the matrix
\begin{equation}
    X(\rho_{AB})=\sqrt{\rho_{AB}}(\sigma_y\otimes\sigma_y)\rho_{AB}^*(\sigma_y\otimes\sigma_y)\sqrt{\rho_{AB}}.
\end{equation}
Here, the density matrix of $\rho_{AB}$ is written on the computational basis of $\{\ket{00},\ket{01},\ket{10},\ket{11}\}$, where $\ket{0}$ and $\ket{1}$ are the eigenstates of $\sigma_z$, and $\rho_{AB}^*$ is the complex conjugate of $\rho_{AB}$.


As opposed to the entanglement dilution process, the entanglement distillation process defines another entanglement measure, the distillable entanglement~\cite{bennett1996mixed}. In this process, given sufficiently many copies of a given state, $\rho_{AB}$, the distillable entanglement is the maximal state conversion rate of distilling maximally entangled states under LOCC. While calculating this measure for a general state remains open, a well-studied lower bound is the one-way distillable entanglement, where classical communication is restricted to a one-way procedure between the two users. In the Shannon limit, where one takes infinitely many independent and identical  copies of the quantum state, the average distillation rate under one-way LOCC can be calculated by the negative conditional entropy~\cite{wilde2017converse} (see Sec.~VIB therein),
\begin{equation}
    E_D^\rightarrow(\rho_{AB})=-H(A|B)_{\rho},
\end{equation}
where $H(A|B)_{\rho}=H(\rho_{AB})-H(\rho_B)$. This result generalizes the finding in Ref.~\cite{bennett1996mixed}, which shows one-way distillable entanglement in the Shannon limit is $1-H(\rho_{AB})$ when $\rho_{AB}$ is a mixture of Bell states (see Sec.~IIIB3 therein). When the underlying state is clear from the context, we shall omit the subscript for simplicity. 

Another entanglement measure we aim to quantify is the negativity of entanglement. This measure is defined in terms of the violation of the positive partial transpose (PPT) criteria~\cite{vidal2002computable},
\begin{equation}
\mathcal{N}(\rho_{AB})=\frac{\|\rho_{AB}^{\mathrm{T}_{A}}\|_1-1}{2}=\sum_{\lambda_i(\rho_{AB}^{\mathrm{T}_{A}})<0}|\lambda_i(\rho_{AB}^{\mathrm{T}_{A}})|,
\label{negativity}
\end{equation}
where $(\cdot)^{\mathrm{T}_{A}}$ is the partial trace operation on subsystem $A$ on the computational basis and $\|\cdot\|_1$ is the trace norm of a matrix. In the second equality of Eq.~\eqref{negativity}, $\lambda_i(\cdot)$ represents the eigenvalues of a matrix. Note that a related measure, namely the logarithm of negativity, $E_{\mathcal{N}}(\rho_{AB})=\log\|\rho_{AB}^{\mathrm{T}_{A}}\|_1$, upper-bounds the distillable entanglement and is hence no less than the negative conditional entropy of entanglement~\cite{guhne2009entanglement}. Notably, the negativity of entanglement is closely related to the concurrence for a pair of qubits given in Eq.~\eqref{eq:concurrence}. Consider the underlying state to be a mixture of two-qubit Bell states,
\begin{equation}
\label{Bell_diagonal}
    \rho_{\lambda}=\lambda_{1}\ketbra{\Phi^+}+\lambda_{2}\ketbra{\Phi^-}+\lambda_{3}\ketbra{\Psi^+}+\lambda_{4}\ketbra{\Psi^-},
\end{equation}
with $\ket{\Phi^\pm}=(\ket{00}\pm\ket{11})/\sqrt{2},\ket{\Psi^\pm}=(\ket{01}\pm\ket{10})/\sqrt{2}$. We term such a state a Bell-diagonal state. Without loss of generality, we assume $\lambda_1\geq\lambda_2\geq\lambda_3\geq\lambda_4$, since we can relabel the eigenvalues corresponding to the Bell-basis states with local unitary operations. The negativity of entanglement for $\rho_{\lambda}$ is given by
\begin{equation}
    \mathcal{N}(\rho_\lambda)=\max\left\{\frac{1}{2}(\lambda_1-\lambda_2-\lambda_3-\lambda_4),0\right\}=\max\left\{\lambda_1-\frac{1}{2},0\right\}.
\end{equation}
Meanwhile, the concurrence of the state is
\begin{equation}
    C(\rho_{\lambda})=\max\{2\lambda_1-1,0\},
\end{equation}
which is exactly twice the negativity of entanglement.



\subsection{Previous work and summary of our results}
In this subsection, we briefly overview the previous findings and summarize our contributions regarding the estimation of entanglement via nonlocality in Table~\ref{tabel:previous_work}. Note that the results in Ref.~\cite{verstraete2002entanglement} and~\cite{liang2011semi} pose an additional assumption on the system dimension, and a part of the results in Ref.~\cite{toth2015evaluating} utilize steering inequalities with full trust on the measurements of one of the parties in a nonlocal setting. Most works simply deal with probabilities, where both the amount of entanglement and the Bell value are taken as expected values. There are a few exceptional works that deal with finite data, including Ref.~\cite{arnon2017noise}, where the authors consider a parallel repetition of a Bell test, and Ref.~\cite{arnon2019device}, where a single-shot estimation of one-way distillable entanglement is given.

In this work, we will estimate the three entanglement measures listed above using the family of tilted CHSH Bell expressions defined in Eq.~\eqref{CHSH_type}. For simplicity, we focus on the expected values. We utilize the tilted CHSH inequalities and obtain tight estimation results. For the negativity of entanglement, we obtain an analytical tight lower bound, proving a conjecture raised from numerical evidence in Ref.~\cite{moroder2013device}. For the EOF, in comparison to the loose estimation in Ref.~\cite{arnon2017noise}, we obtain tight estimation results for the family of tilted CHSH expressions. For the one-way distillable entanglement in the Shannon limit, we obtain tight numerical lower bounds via generalized CHSH Bell expressions in Eq.~\eqref{CHSH_type}. The result for the original CHSH expression coincides with the analytical result obtained in Ref.~\cite{arnon2019device}. 

Note that the DI estimation result is given in terms of the expected values in our study. That is, we present lower bounds on the state entanglement, given the underlying expected Bell value. When implementing entanglement estimation in an experiment, one needs to estimate the Bell value from a finite sample. Also, in a fully DI scenario, the samples may not follow an independent and identical distribution (i.i.d.). For this purpose, one needs to apply statistical methods valid for non-i.i.d. statistics. Notably, the entropy accumulation theorem (EAT) allows us to deal with entropic-based entanglement measures~\cite{arnon2019device}. Using EAT, our one-way distillable entanglement estimation result can be lifted to a finite data-size version over non-i.i.d. statistics~\cite{wilde2017converse} when the Bell test is sequentially repeated in an experiment. In addition, martingale-based techniques may also be applied~\cite{zhang2023quantum}.


\begin{table}[!h]
\begin{center}
\caption{ Entanglement estimation results via nonlocality. In the works that utilize the Navascu\'es-Pironio-Ac\'in-type (NPA-type) hierarchy~\cite{Navascués2008a}, a numerical method, the results numerically converged. The other works give tight bounds on entanglement, except for the results in Ref.~\cite{arnon2017noise}, which utilize the rigidity property of Bell expressions or robust self-testing. }
\label{tabel:previous_work}
\resizebox{\columnwidth}{!}{
\begin{tabular}{ccccc}
\hline
\hline
 Results & Nonlocality Feature & Entanglement measure & Assumption & Main technique\\ 
 \hline
 \cite{verstraete2002entanglement} & CHSH inequality & Concurrence & Dimension & Analytical \\
 \cite{liang2011semi} & Modified CH inequalities & Concurrence & Dimension & Analytical \\ 
 \cite{moroder2013device} & Multipartite Bell inequalities & Negativity & DI& NPA hierarchy\\
\cite{chen2018exploring} & Multipartite Bell inequalities & Robustness of entanglement & DI&NPA hierarchy\\
\cite{toth2015evaluating} & Steering inequalities & Linear entropy & One-sided DI & NPA hierarchy\\
& Bell inequalities &Linear entropy & DI & NPA hierarchy\\
\cite{arnon2017noise} & Threshold quantum games
& EOF & DI & Rigidity\\
\cite{arnon2019device} & CHSH inequality & One-way distillable entanglement & DI & Analytical\\
Our results & Tilted CHSH inequalities & Concurrence & Dimension & Analytical\\
& & EOF & Dimension \& DI & Analytical\\
& & Negativity & Dimension \& DI & Analytical\\
& & One-way distillable entanglement & Dimension \& DI & Numerical\\
\hline
\end{tabular}}
\end{center}
\end{table}




\section{Device-independent entanglement estimation}\label{sc:framework}
\subsection{Entanglement estimation via optimization}
In this section, we formulate the problem of entanglement estimation via Bell nonlocality. Using the nomenclature in quantum cryptography, we also term it DI entanglement estimation. After specifying a particular entanglement measure, $E$, we ask the minimal amount of entanglement in the initial quantum system that supports the observed Bell expression value,
\begin{equation}
\label{original_optm}
\begin{split}
E_{\rm est} &=\min_{\rho_{AB},\hat{A}_0,\hat{A}_1,\hat{B}_0,\hat{B}_1}  E(\rho_{AB}), \\
\text{s.t.}\quad
S &=\Tr\left(\rho_{AB}\hat{S}_{\alpha}\right),\\
\rho_{AB}&\geq 0, \\ \Tr(\rho_{AB})&=1.
\end{split}
\end{equation}
Here, we denote the estimated entanglement measure of $E$ from Bell nonlocality as $E_{\rm est}$. As clarified above, $\hat{S}_{\alpha}$ is the $\alpha$-CHSH operator, an operator function of the measurement observables.

The optimization problem is difficult to solve directly. First, it involves multiple variables, including the underlying quantum state and the measurement observables. Second, the system dimension is unknown, as reflected in Eq.~\eqref{original_optm} where the dimension of $\rho_{AB}$ is unspecified. To address these challenges, we undertake several steps, as illustrated in Fig.~\ref{fig:flowchart}. In the original formulation of Eq.~\eqref{original_optm}, we do not make any assumption on the measurements, which are general measurements characterized by positive operator-valued measures (POVMs). Given that there is no constraint on the system's dimension, Naimark's dilation theorem~\cite{neumark1943representation} allows us to incorporate all local degrees of freedom and extend the measurements to projective ones without any loss of generality. We elaborate this further in \ref{appendix:jordan}.

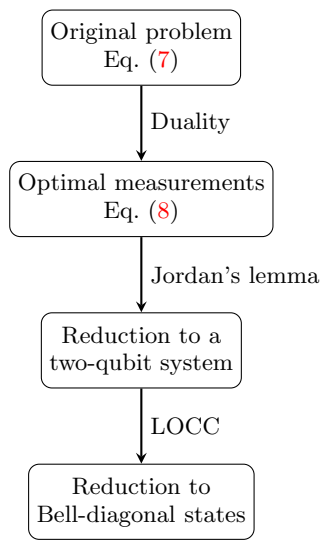
\begin{figure}[hbt!]
    \centering
        \tikzstyle{arrow} = [thick,->,>=stealth]
        \tikzstyle{block} = [rectangle, rounded corners, align=center, minimum width=2cm, minimum height=1cm,text centered,draw=black]
    \begin{tikzpicture}[node distance = 2cm, auto]       
        \node [block]  (b1) {Original problem \\ Eq.~\eqref{original_optm}};
        \node [block, below =1cm of b1] (b2) {Optimal measurements \\ Eq.~\eqref{dual_optm}};
        \node [block, below =1cm of b2] (b3) {Reduction to a\\ two-qubit system};
        \node [block, below =1cm of b3] (b4) {Reduction to \\ Bell-diagonal states};
           \draw [arrow] (b1.south) -- node {Duality} (b2.north);
            \draw [arrow] (b2.south) -- node {Jordan's lemma} (b3.north);
            \draw [arrow] (b3.south) -- node {LOCC} (b4.north);
    \end{tikzpicture}
    \caption{Steps for estimating entanglement via CHSH-type Bell inequalities. Step 1: The original entanglement estimation problem is formulated as Eq.~\eqref{original_optm}. The only constraint is the observed Bell value, $S$. Step 2: Using a duality argument, we consider the optimization problem in Eq.~\eqref{dual_optm}, which can be interpreted as maximizing the Bell value for a given quantum state, $\rho_{AB}$.
    The arguments in the optimal solution are regarded as the ``optimal measurements'' that lead to the maximal Bell value for the state.
    Step 3: By applying Jordan's lemma, we can view the measurement process as resulting from a convex combination of pairs of qubits.
    Step 4: We can further restrict the qubit pairs to Bell-diagonal states in solving the optimization problem. We show that in the CHSH Bell test, any two-qubit state can be transformed to a Bell-diagonal state through LOCC without changing the $\alpha$-CHSH Bell value.} 
    \label{fig:flowchart}
\end{figure}


In the first step, we use duality arguments and transform Eq.~\eqref{original_optm}. Note that the objective function in Eq.~\eqref{original_optm}, $E_{\rm est}:=f(S)$, is continuous and monotonously increasing in its argument $S$, hence having a well-defined inverse function. Consider the following problem,
\begin{equation}
\label{dual_optm}
\begin{split}
S^* &=\max_{\rho_{AB},\hat{A}_0,\hat{A}_1,\hat{B}_0,\hat{B}_1}  \Tr\left(\rho_{AB}\hat{S}_{\alpha}\right), \\
\text{s.t.}\quad
E(\rho_{AB})&=E_{\rm est}, \\
\rho_{AB}&\geq0, \\
\Tr(\rho_{AB})&=1,
\end{split}
\end{equation}
where the objective function in Eq.~\eqref{dual_optm} comes from the inverse function of the original optimization problem, $S^*:=f^{-1}(E_{\rm est})$. In solving Eq.~\eqref{dual_optm}, as the objective function is bilinear in $\rho_{AB}$ and Bell operator $\hat{S}_{\alpha}$, the optimization equals the maximization over the two arguments individually, $S^* =\max_{\rho_{AB}} \max_{\hat{A}_0,\hat{A}_1,\hat{B}_0,\hat{B}_1}  \Tr\left(\rho_{AB}\hat{S}_{\alpha}\right)$. 
For the inner optimization, denote $S^*(\rho_{AB})=\max_{\hat{A}_0,\hat{A}_1,\hat{B}_0,\hat{B}_1} \Tr\left(\rho_{AB}\hat{S}_{\alpha}\right)$, which can be seen as the maximal $\alpha$-CHSH Bell value that can be obtained with $\rho_{AB}$. Then, we may equivalently solve Eq.~\eqref{original_optm} with the following optimization, 
\begin{equation}
\label{equivalent_optm}
\begin{split}
E_{\rm est}&=\min_{\rho_{AB}}E(\rho_{AB}), \\
\text{s.t.}\quad
S^*(\rho_{AB}) &= S,\\
\rho_{AB}&\geq0, \\
\Tr(\rho_{AB})&=1.
\end{split}
\end{equation}
For simplicity, we call the measurements that yield the maximal $\alpha$-CHSH Bell value for $\rho_{AB}$ the ``optimal measurements''.
\begin{definition}
The optimal measurements of state $\rho_{AB}$ are the observables that maximize the $\alpha$-CHSH expression in Eq.~\eqref{CHSH_type} for $\rho_{AB}$, i.e., $\operatorname{argmax}_{\hat{A}_0,\hat{A}_1,\hat{B}_0,\hat{B}_1} \Tr\left(\rho_{AB}\hat{S}_{\alpha}\right)$.
\end{definition}

To bypass the dimension problem, we utilize Jordan's lemma. 
We leave the detailed analysis in \ref{appendix:jordan}. Here, we briefly state the indication of Jordan's lemma in our work. In the CHSH-type Bell test, we can effectively view the measurement process as first performing local operations to transform the underlying quantum state into an ensemble of qubit pairs, $\{p^\mu,\rho_{AB}^{\mu}\}$, with $p^{\mu}$ a probability distribution, and then measuring each pair of qubits with associate qubit observables. The measurement on each pair of qubits corresponds to a Bell value, $S^{\mu}$, and the observed Bell value is the average of these values, $S=\sum_{\mu}p^{\mu}S^{\mu}$. Guaranteed by the convexity property of an entanglement measure, we can lower-bound the amount of entanglement in the initial system by studying the average amount of entanglement in the ensemble of qubit-pairs, $\sum_{\mu}p^{\mu}E(\rho_{AB}^{\mu})$. In this way, we can essentially focus on quantifying entanglement in a pair of qubits. 

By further utilizing the non-increasing property under LOCC of an entanglement measure and choosing proper local computational bases, we may further restrict the pair of qubits to a Bell-diagonal state in Eq.~\eqref{Bell_diagonal} for simplicity. We have the following lemma.
\begin{lemma}
    In a CHSH Bell test, under a fixed computational basis, a two-qubit state, $\rho_{AB}$, can be transformed into a Bell-diagonal state, $\rho_\lambda$, through LOCC, with the $\alpha$-CHSH Bell value unchanged.
\label{lemma:LOCC}
\end{lemma}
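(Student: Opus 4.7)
The plan is to construct an explicit LOCC protocol that maps $\rho_{AB}$ to a Bell-diagonal state $\rho_\lambda$ while leaving the $\alpha$-CHSH expression invariant. I will work in the Pauli basis, writing
\[
    \rho_{AB}=\frac{1}{4}\Bigl(I\otimes I+\vec{r}_A\cdot\vec{\sigma}\otimes I+I\otimes\vec{r}_B\cdot\vec{\sigma}+\sum_{i,j=1}^{3}T_{ij}\,\sigma_i\otimes\sigma_j\Bigr),
\]
and noting that the optimal qubit measurements for any $\alpha$-CHSH expression are projective with $\pm 1$ outcomes, so they correspond to traceless observables $\hat{A}_x=\vec{a}_x\cdot\vec{\sigma}$ and $\hat{B}_y=\vec{b}_y\cdot\vec{\sigma}$ with unit vectors $\vec{a}_x,\vec{b}_y\in\mathbb{R}^3$. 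A short computation gives $\Tr[\rho_{AB}(\hat{A}_x\otimes\hat{B}_y)]=\vec{a}_x^{\,\top}T\vec{b}_y$, so the $\alpha$-CHSH value depends on $\rho_{AB}$ only through the $3\times 3$ correlation matrix $T$ and is completely insensitive to the local Bloch vectors $\vec{r}_A,\vec{r}_B$.

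The first stage of the protocol uses local unitaries. Taking the singular value decomposition $T=O_A\tilde{T}\,O_B^{\,\top}$ with $O_A,O_B\in SO(3)$ and $\tilde{T}=\mathrm{diag}(t_1,t_2,t_3)$, Alice and Bob each apply the local unitary whose adjoint action on $\vec{\sigma}$ implements $O_A^{\,\top}$ and $O_B^{\,\top}$, respectively. This brings the correlation matrix to the diagonal form $\tilde{T}$, and any apparent change in the Bell value can be absorbed by an equal and opposite rotation of the measurement vectors $\vec{a}_x,\vec{b}_y$; the maximum $\alpha$-CHSH value is therefore unchanged.

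The second stage uses shared classical randomness. Alice and Bob draw a common index $k\in\{0,1,2,3\}$ uniformly and each apply $\sigma_k$, implementing the Pauli-twirl channel $\rho\mapsto \tfrac{1}{4}\sum_{k=0}^{3}(\sigma_k\otimes\sigma_k)\rho(\sigma_k\otimes\sigma_k)$. Using the identities $\sigma_k\sigma_i\sigma_k=\pm\sigma_i$, one verifies term by term that the twirl annihilates every Pauli contribution except $I\otimes I$ and the diagonal entries $\sigma_i\otimes\sigma_i$ for $i=1,2,3$, whose coefficients $t_i$ are preserved. The resulting state has the Bell-diagonal form in Eq.~\eqref{Bell_diagonal}, and a concluding permutation of the Bell basis via local unitaries orders the eigenvalues as $\lambda_1\geq\lambda_2\geq\lambda_3\geq\lambda_4$. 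Because the $\alpha$-CHSH value depends only on the (preserved) diagonal of $T$, the Bell value is unchanged throughout the entire protocol.

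The main subtlety I expect is justifying the restriction to traceless observables: if one allowed the local measurements to carry an identity component, the Bloch terms $\vec{r}_A,\vec{r}_B$ would contribute to the Bell value and erasing them via twirling could decrease it. I will therefore include a short preliminary argument that extremal projective $\pm 1$-valued qubit measurements are without loss of generality in the $\alpha$-CHSH optimization, after which the correlation-matrix-only description is valid and the diagonalize-and-twirl construction above closes the proof.
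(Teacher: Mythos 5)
Your proof is correct, but it takes a genuinely different route from the paper's. The paper fixes the two observables per side in the $x$--$z$ plane of the given computational basis and constructs an explicit three-step transformation: a probabilistic joint $\sigma_y\otimes\sigma_y$ flip, local $y$-axis rotations chosen so that the real parts of the Bell-basis off-diagonal elements vanish, and finally mixing the state with its complex conjugate; at each step it checks that the expectations of $\sigma_i\otimes\sigma_j$ with $i,j\in\{1,3\}$ are untouched, so the Bell value with the same in-plane measurements is preserved. You instead work in the correlation-matrix picture: local unitaries realizing the singular-value decomposition diagonalize $T$, and the correlated Pauli twirl $\rho\mapsto\frac{1}{4}\sum_k(\sigma_k\otimes\sigma_k)\rho(\sigma_k\otimes\sigma_k)$ erases the local Bloch vectors and off-diagonal correlations while preserving the diagonal, which is exactly the data the $\alpha$-CHSH value of traceless observables depends on. Your construction is manifestly LOCC throughout (local unitaries plus shared randomness), whereas the paper's final step of mixing with $\rho^*$ is not literally an LOCC map on the given state (it is harmless only because $E(\rho^*)=E(\rho)$ and entanglement measures are convex); on the other hand, the paper keeps the measurement operators themselves fixed, while you reproduce the Bell value only up to a known compensating rotation of the measurement directions --- which is all the subsequent step (maximizing the Bell value of a Bell-diagonal state over measurements, Lemma~\ref{lemma:maximal_value}) actually needs. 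Two minor points to tidy: with $O_A,O_B\in SO(3)$ you may have to allow some diagonal entries $t_i<0$ (absorbing a sign from the SVD), which costs nothing; and your final eigenvalue-ordering step permutes the diagonal of $T$, so its Bell-value invariance is again "up to compensated measurements" (or simply invariance of the maximum), not preservation of the diagonal itself. Your closing caveat about degenerate, identity-component observables is well taken --- the paper silently assumes in-plane, non-degenerate observables --- and is easily dispatched since such settings cannot exceed the local bound $2\alpha$.
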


The lemma indicates that an observed Bell value can always be interpreted as arising from a Bell-diagonal state. Furthermore, the operations in the lemma are restricted to LOCC and state mixing. Since these operations do not increase entanglement, we can hence restrict our analysis of lower-bounding entanglement to the set of Bell-diagonal states. The LOCC transformation in this result was first constructed in Ref. \cite{Pironio2009device} (see Lemma 3 therein). Here, we verify the unchanged $\alpha$-CHSH Bell value through the LOCC transformation. We present proof of the lemma in \ref{appendix:Bell-diagonal}. 

With the above simplifications, we have the following lemma in solving the problem in Eq.~\eqref{equivalent_optm} and leave the proof in \ref{appendix:Bell-diagonal}.

\begin{lemma}
The maximal value of the $\alpha$-CHSH expression in Eq.~\eqref{CHSH_type} for a Bell-diagonal state shown in Eq.~\eqref{Bell_diagonal}, $\rho_{\lambda}$, is given by
\begin{equation}
    S= 2\sqrt{\alpha^2(\lambda_{1}+\lambda_{2}-\lambda_{3}-\lambda_{4})^2+(\lambda_{1}-\lambda_{2}+\lambda_{3}-\lambda_{4})^2},
\end{equation}
where $\lambda_i$ is the $i$-th largest eigenvalue of $\rho_{\lambda}$.
\label{lemma:maximal_value}
\end{lemma}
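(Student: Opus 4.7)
The plan is to reduce the maximisation to a finite-dimensional geometric problem on the Bloch sphere using the Horodecki-type parametrisation of Bell-diagonal states, and then carry out the maximisation in stages: first over Alice's observables, then over Bob's, and finally over orthogonal frames on $\mathbb{R}^3$.

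First I would write the Bell-diagonal state in the Fano form $\rho_\lambda = \tfrac14\bigl(I\otimes I + \sum_{i\in\{x,y,z\}} T_{ii}\,\sigma_i\otimes\sigma_i\bigr)$, reading off the correlation tensor $T=\mathrm{diag}(T_{xx},T_{yy},T_{zz})$ with
\begin{equation}
T_{xx}=\lambda_1-\lambda_2+\lambda_3-\lambda_4,\quad T_{yy}=-\lambda_1+\lambda_2+\lambda_3-\lambda_4,\quad T_{zz}=\lambda_1+\lambda_2-\lambda_3-\lambda_4.
\end{equation}
Since any $\pm 1$-valued qubit observable has the form $\hat A_x=\vec a_x\cdot\vec\sigma$ and $\hat B_y=\vec b_y\cdot\vec\sigma$ with unit vectors $\vec a_x,\vec b_y\in\mathbb{R}^3$, the expectation values reduce to $\Tr[\rho_\lambda(\hat A_x\otimes\hat B_y)]=\vec a_x^{\,\top} T\vec b_y$, so
\begin{equation}
S=\vec a_0^{\,\top}T\vec c+\vec a_1^{\,\top}T\vec d,\qquad \vec c:=\alpha(\vec b_0+\vec b_1),\quad \vec d:=\vec b_0-\vec b_1.
\end{equation}
A direct check gives $\vec c\cdot\vec d=\alpha(|\vec b_0|^2-|\vec b_1|^2)=0$, so $\vec c\perp\vec d$.

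Next I would maximise over Alice's vectors: since $\vec a_0,\vec a_1$ are independent unit vectors, $\max_{\vec a_0,\vec a_1}S=\|T\vec c\|+\|T\vec d\|$. To handle Bob's side I parameterise the orthogonal pair as $\vec c=2\alpha\cos(\theta/2)\,\hat u$, $\vec d=2\sin(\theta/2)\,\hat v$ with $\hat u\perp\hat v$ unit vectors and $\theta\in[0,\pi]$; this is possible because $\vec c,\vec d$ are orthogonal and parametrised by the angle between $\vec b_0$ and $\vec b_1$. Substituting gives
\begin{equation}
S=2\alpha\cos(\theta/2)\,\|T\hat u\|+2\sin(\theta/2)\,\|T\hat v\|,
\end{equation}
whose maximum over $\theta$ (Cauchy--Schwarz on $(\cos\tfrac\theta2,\sin\tfrac\theta2)$) equals $2\sqrt{\alpha^2\|T\hat u\|^2+\|T\hat v\|^2}$.

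The last and most delicate step is the optimisation over orthonormal pairs $\hat u\perp\hat v$ of the weighted sum $\alpha^2\hat u^{\,\top}M\hat u+\hat v^{\,\top}M\hat v$ with $M:=T^{\top}T=\mathrm{diag}(T_{xx}^2,T_{yy}^2,T_{zz}^2)$; this is a tilted version of Ky Fan's theorem and is where I would spend the most care. Writing $\alpha^2\hat u^{\,\top}M\hat u+\hat v^{\,\top}M\hat v=(\alpha^2-1)\hat u^{\,\top}M\hat u+\mathrm{tr}(M)-\hat w^{\,\top}M\hat w$ for the third frame vector $\hat w$ and using $\alpha\geq 1$, I would align $\hat u$ with the eigenvector of the largest eigenvalue $\mu_1$ of $M$ and $\hat w$ with that of the smallest $\mu_3$, forcing $\hat v$ along the middle direction and yielding the maximum $\alpha^2\mu_1+\mu_2$. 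Finally, the assumption $\lambda_1\geq\lambda_2\geq\lambda_3\geq\lambda_4$ immediately gives $T_{zz}\geq T_{xx}\geq|T_{yy}|\geq 0$ (each follows from a simple pairwise subtraction), so $\mu_1=T_{zz}^2$ and $\mu_2=T_{xx}^2$; plugging back produces exactly $S=2\sqrt{\alpha^2(\lambda_1+\lambda_2-\lambda_3-\lambda_4)^2+(\lambda_1-\lambda_2+\lambda_3-\lambda_4)^2}$. The main obstacle is the weighted Ky Fan step, because with $\alpha>1$ one cannot simply cite the equal-weight version; I expect to either argue it as above through the complementary eigenvector or to Lagrange-minimise and verify that the diagonal alignment is indeed the unique maximiser.
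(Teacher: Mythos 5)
Your proposal is correct and follows essentially the same route as the paper's proof: writing $\rho_\lambda$ via its diagonal correlation tensor, splitting Bob's vectors into an orthogonal pair $\vec b_0\pm\vec b_1$, optimizing first over Alice's directions, then over the angle via Cauchy--Schwarz, and finally over the orthonormal pair, with the same ordering argument $|T_{zz}|\geq|T_{xx}|\geq|T_{yy}|$. The only difference is that you justify the final ``tilted Ky Fan'' step explicitly through the complementary frame vector $\hat w$, a point the paper merely asserts, so your write-up is if anything slightly more complete there.
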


In our analysis, we extend general measurements to projective ones by Naimark's dilation theorem. For two projection-valued measurements (PVMs), their incompatibility is defined as the largest inner product of their eigenvectors. Suppose two projective measurements are given by observables $\hat{M}$ and $\hat{N}$, and $\{\ket{m}\}$ and $\{\ket{n}\}$ are their eigenvectors, respectively. Then we define the incompatibility as
\begin{equation}
\label{eq:inc_overlap}
    I(\hat{M},\hat{N})=\max_{\ket{m},\ket{n}} |\braket{m}{n}|.
\end{equation}
For qubit observables, this definition can be equivalently given by the observable commutator.
Consider two qubit observables $\hat{M}=\cos\theta_1\sigma_z+\sin\theta_1\sigma_x$ and $\hat{N}=\cos\theta_2\sigma_z+\sin\theta_2\sigma_x$. Then, Eq.~\eqref{eq:inc_overlap} becomes 
\begin{equation}
\label{eq:inc_PVM}
    I(\hat{M},\hat{N})=\frac{1}{2}\abs{\cos(\theta_1-\theta_2)}+\frac{1}{2},
\end{equation}
On the other hand, the commutator between them is $[\hat{M},\hat{N}]=\sin(\theta_1-\theta_2)[\sigma_x,
    \sigma_z]$. Considering the symmetry, the commutator essentially provides a quantity as
\begin{equation}
\label{eq:com_PVM}
    I_c(\hat{M},\hat{N})=\abs{\sin(\theta_1-\theta_2)}[\sigma_x,
    \sigma_z].
\end{equation}
We can see $I(\hat{M},\hat{N})$ and $I_c(\hat{M},\hat{N})$ are of one-to-one bijection. Therefore, we use the commutator as a direct incompatibility measure in the qubit PVM case. And specifically, we use the coefficient, $\abs{\sin(\theta_1-\theta_2)}$, in Eq.~\eqref{eq:com_PVM} in the later incompatibility discussions. In proving Lemma~\ref{lemma:maximal_value}, a notable issue is that the optimal measurements may not be the most incompatible measurements. Up to a minus sign before the observables, the optimal measurements of the Bell-diagonal state in Eq.~\eqref{Bell_diagonal} are as follows,
\begin{equation}
\label{Bell_opt_meas}
 \begin{split}  
       \hat{A}_0&=\sigma_z,\\
       \hat{A}_1&=\sigma_x,\\
       \hat{B}_0&=\cos\theta\sigma_z+\sin\theta\sigma_x,\\
       \hat{B}_1&=\cos\theta\sigma_z-\sin\theta\sigma_x,
 \end{split}
\end{equation}
where $\theta$ fully determines the amount of imcompatibility of the local observables, with $\tan\theta=(\lambda_{1}-\lambda_{2}+\lambda_{3}-\lambda_{4})/[\alpha(\lambda_{1}+\lambda_{2}-\lambda_{3}-\lambda_{4})]$ determined by the Bell-diagonal state and parameter $\alpha$. While the observables on Alice's side are maximally incompatible with each other, the commutator of the observables on Bob's side is given by $[\hat{B}_0,\hat{B}_1]=\sin2\theta[\sigma_x,\sigma_z]$.
For example, when the considered state is the maximally entangled state with $\lambda_1=1,\lambda_2=\lambda_3=\lambda_4=0$ and $\alpha=1$, which corresponds to the original CHSH expression, the optimal measurements coincide with the most incompatible measurements. For more cases where $\sin2\theta$ is strictly smaller than $1$, $\hat{B}_0$ and $\hat{B}_1$ are not maximally incompatible.

\begin{observation}
The observables that yield the largest $\alpha$-CHSH Bell value for a quantum state are not the most incompatible ones in general.
\end{observation}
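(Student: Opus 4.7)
The plan is to establish the observation directly from the structure of the optimal measurements identified in the proof of Lemma~\ref{lemma:maximal_value}. By Jordan's lemma together with Lemma~\ref{lemma:LOCC}, it suffices to analyze Bell-diagonal states, and for these states Eq.~\eqref{Bell_opt_meas} already gives the explicit optimal observables. So the observation reduces to reading off the angle $\theta$ appearing in $\hat{B}_0$ and $\hat{B}_1$ and comparing it with $\pi/4$, which corresponds to maximal incompatibility on the Bloch sphere.

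Concretely, I would first translate the incompatibility criterion into a geometric statement. The qubit observables $\hat{B}_0=\cos\theta\,\sigma_z+\sin\theta\,\sigma_x$ and $\hat{B}_1=\cos\theta\,\sigma_z-\sin\theta\,\sigma_x$ correspond to Bloch unit vectors $(\sin\theta,0,\cos\theta)$ and $(-\sin\theta,0,\cos\theta)$, whose inner product is $\cos 2\theta$. Under the definition of incompatibility given right after Lemma~\ref{lemma:maximal_value}, maximal incompatibility for qubit projective measurements amounts to orthogonal Bloch vectors, i.e., $\theta=\pi/4$, equivalently $\sin 2\theta=1$, equivalently $\|[\hat{B}_0,\hat{B}_1]\|$ attaining its maximum. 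Plugging in the explicit formula $\tan\theta=(\lambda_1-\lambda_2+\lambda_3-\lambda_4)/[\alpha(\lambda_1+\lambda_2-\lambda_3-\lambda_4)]$ from Lemma~\ref{lemma:maximal_value}, maximal incompatibility therefore requires the non-generic equality $\lambda_1-\lambda_2+\lambda_3-\lambda_4=\alpha(\lambda_1+\lambda_2-\lambda_3-\lambda_4)$.

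To close the argument, I would exhibit an explicit counter-example showing that this equality fails in interesting regimes. Taking the maximally entangled state $\ket{\Phi^+}$, so that $\lambda_1=1$ and $\lambda_2=\lambda_3=\lambda_4=0$, together with any tilt parameter $\alpha>1$, one obtains $\tan\theta=1/\alpha<1$, hence $\theta<\pi/4$ and $\cos 2\theta>0$. Thus the observables maximizing the $\alpha$-CHSH expression for a maximally entangled state are strictly less than maximally incompatible whenever $\alpha\neq 1$; in the limiting original CHSH case $\alpha=1$ the two notions coincide, which also confirms that the tilt parameter is what drives the discrepancy.

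The argument is essentially an inspection once Lemma~\ref{lemma:maximal_value} and the form of the optimal observables in Eq.~\eqref{Bell_opt_meas} are in hand, so the only real subtlety is fixing a canonical notion of ``most incompatible'' for qubit projective measurements; as soon as one adopts the Bloch-vector, or equivalently commutator-norm, criterion, the observation follows. Consequently, the main obstacle lies not in the present observation but in the prior derivation of Eq.~\eqref{Bell_opt_meas}, i.e., in verifying that the tilted-angle ansatz genuinely realizes the maximum in Lemma~\ref{lemma:maximal_value}; that verification is carried out in the appendix and may be taken for granted here.
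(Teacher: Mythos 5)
Your proposal is correct and follows essentially the same route as the paper: both read off the optimal observables in Eq.~\eqref{Bell_opt_meas} with $\tan\theta=(\lambda_1-\lambda_2+\lambda_3-\lambda_4)/[\alpha(\lambda_1+\lambda_2-\lambda_3-\lambda_4)]$, identify maximal incompatibility with $\theta=\pi/4$ (equivalently $\sin 2\theta=1$, via the commutator $[\hat{B}_0,\hat{B}_1]=\sin 2\theta\,[\sigma_x,\sigma_z]$), and note this fails generically. Your explicit instance ($\ket{\Phi^+}$ with $\alpha>1$, giving $\tan\theta=1/\alpha<1$) is a valid concrete witness, complementary to the paper's remark that the coincidence occurs only in cases like $\alpha=1$ with the maximally entangled state.
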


Notwithstanding, a subtle issue is that we do not have access to the underlying probability distribution in the qubit-pair ensemble, $p^{\mu}$, or the underlying Bell value for each pair of qubits. As we only know the average Bell value over the ensemble, we need to be careful of convexity issues. Suppose the solution to Eq.~\eqref{original_optm} with the restriction of a pair of qubits takes the form $E_{\rm est}=E_{\rm est}(S)$. 
When extending the result to possibly an ensemble of qubit pairs, if $E_{\rm est}(S)$ is not concave in $S$, then 
\begin{equation}
\label{convexity}
    E_{\rm est}\left(\sum_{\mu}p^{\mu}S^{\mu}\right)\leq \sum_{\mu}p^{\mu}E_{\rm est}\left(S^{\mu}\right)\leq\sum_\mu p^{\mu} E\left(\rho_{AB}^\mu\right)\leq E(\rho_{AB}),
\end{equation}
which holds for any probability distribution $p^{\mu}$. Hence, we can directly lower-bound the amount of entanglement in the underlying state by $E_{\rm est}(S)$, where $S=\sum_{\mu}p^{\mu}S^{\mu}$ represents the observed Bell value. Yet if the function $E_{\rm est}(S)$ is concave, namely 
$[E_{\rm est}(S_1)+E_{\rm est}(S_2)]/2<E_{\rm est}[(S_1+S_2)/2]$, then the first inequality in Eq.~\eqref{convexity} no longer holds valid. Consequently, we need to take a ``convex closure'' of the function $E_{\rm est}$ to estimate the amount of entanglement from a quantum state with an unknown dimension. Here, we explain the concept of convex closure in our context.
Suppose a concave function $f(x)$ is defined on the interval $[a,b]$. Then the convex closure of $f(x)$, denoted as $f_{\mathrm{con}}(x)$, is given by
    \begin{equation}
    \label{eq:convex_closure}
        f_{\mathrm{con}}(x)=\frac{f(b)-f(a)}{b-a}(x-a)+b,
    \end{equation}
    which represents a straight line connecting points $(a,f(a))$ and $(b,f(b))$.

Another implicit issue is that we assume the entanglement measure to have a consistent definition for all dimensions, such that the last inequality in Eq.~\eqref{convexity} holds. Yet for the measure of concurrence, its definition in a high-dimensional system is subtle. 
Despite this, we may estimate the average amount of concurrence of the qubit pairs arising from the block-dephasing operation in the measurement, $\sum_{\mu}p^{\mu}C_{\rm est}\left(S^{\mu}\right)$.

Following the above discussions, we study the entanglement measures of EOF and one-way distillable entanglement, which are essentially given by concurrence and conditional entropy of entanglement, respectively. 

\subsection{Concurrence and entanglement of formation}
In this subsection, we take concurrence $C(\cdot)$ as the objective entanglement measure in Eq.~\eqref{original_optm}. For this measure, we have an analytical estimation result.
\begin{theorem}\label{thm:concurrence_est}
    Suppose the underlying quantum state is a pair of qubits. For a given tilted CHSH expression in Eq.~\eqref{CHSH_type} parametrized by $\alpha$, if the Bell expression value is $S$, then the amount of concurrence in the underlying state can be lower-bounded,
\begin{equation}
\label{concurrence_est}
    C(\rho_{AB})\geq \sqrt{\frac{S^2}{4}-\alpha^2}.
\end{equation}
The equality can be saturated when measuring a Bell-diagonal state in Eq.~\eqref{Bell_diagonal} with eigenvalues
\begin{equation}
    \begin{gathered}
        \lambda_1=\frac{1}{2}+\frac{1}{2}\sqrt{\frac{S^2}{4}-\alpha^2},\\
        \lambda_2=\frac{1}{2}-\frac{1}{2}\sqrt{\frac{S^2}{4}-\alpha^2},\\
        \lambda_3=\lambda_4=0,
    \end{gathered}
\end{equation}
using measurements in Eq.~\eqref{Bell_opt_meas} with $\theta=\arctan(\frac{1}{\alpha}\sqrt{\frac{S^2}{4}-\alpha^2})$.
\end{theorem}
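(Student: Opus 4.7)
The plan is to leverage Lemmas~\ref{lemma:LOCC} and~\ref{lemma:maximal_value} to recast the problem as an elementary constrained optimization over a Bell-diagonal state, and then solve it by reparameterizing in terms of the concurrence itself.

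First, I would invoke Lemma~\ref{lemma:LOCC} to reduce, without loss of generality, to a Bell-diagonal state $\rho_{\lambda}$ as in Eq.~\eqref{Bell_diagonal}; the $\alpha$-CHSH value is preserved, and concurrence can only decrease under LOCC. For such a state, I would verify directly that the Bell basis vectors are real in the computational basis (so $\rho_{\lambda}^{*}=\rho_{\lambda}$) and are eigenvectors of $\sigma_{y}\otimes\sigma_{y}$ with eigenvalues $\pm1$, so that $(\sigma_{y}\otimes\sigma_{y})\rho_{\lambda}^{*}(\sigma_{y}\otimes\sigma_{y})=\rho_{\lambda}$ and hence $X(\rho_{\lambda})=\rho_{\lambda}^{2}$. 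The decreasingly ordered square roots of its eigenvalues are then just $\lambda_{1},\lambda_{2},\lambda_{3},\lambda_{4}$, so using $\sum_{i}\lambda_{i}=1$ the concurrence simplifies to $C(\rho_{\lambda})=\max\{0,\lambda_{1}-\lambda_{2}-\lambda_{3}-\lambda_{4}\}=\max\{0,2\lambda_{1}-1\}$.

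Second, assuming $C>0$, I would parameterize $\lambda_{1}=(1+C)/2$ so that $\lambda_{2}+\lambda_{3}+\lambda_{4}=(1-C)/2$. A short calculation shows that the quantities appearing in Lemma~\ref{lemma:maximal_value} simplify to $u:=\lambda_{1}+\lambda_{2}-\lambda_{3}-\lambda_{4}=C+2\lambda_{2}$ and $v:=\lambda_{1}-\lambda_{2}+\lambda_{3}-\lambda_{4}=C+2\lambda_{3}$. Hence the claimed bound $C^{2}\geq S^{2}/4-\alpha^{2}$ is equivalent to the statement that, for every fixed $C$, the maximum of $\alpha^{2}(C+2\lambda_{2})^{2}+(C+2\lambda_{3})^{2}$ over the feasible region $\lambda_{1}\geq\lambda_{2}\geq\lambda_{3}\geq\lambda_{4}\geq 0$, $\sum_{i}\lambda_{i}=1$, is exactly $\alpha^{2}+C^{2}$, attained at $\lambda_{2}=(1-C)/2$, $\lambda_{3}=\lambda_{4}=0$. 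Carrying out this optimization, because the objective is increasing in each of $\lambda_{2}$ and $\lambda_{3}$, one may set $\lambda_{4}=0$ without loss. On the segment $\lambda_{2}+\lambda_{3}=(1-C)/2$ with $\lambda_{2}\geq\lambda_{3}\geq 0$, the objective is a convex quadratic in $\lambda_{3}$ (second derivative $8(\alpha^{2}+1)>0$), so its maximum is at one of the two endpoints: $\lambda_{3}=0$ yields $\alpha^{2}+C^{2}$, while $\lambda_{2}=\lambda_{3}=(1-C)/4$ yields $(\alpha^{2}+1)(1+C)^{2}/4$. The edge case $C=0$ is disposed of separately: any Bell-diagonal state with $\lambda_{1}\leq 1/2$ is separable, so the same analysis forces $S\leq 2\alpha$ and the asserted inequality is trivial. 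Saturation at the stated eigenvalues, with measurement angle $\tan\theta=v/(\alpha u)=C/\alpha$ as in Eq.~\eqref{Bell_opt_meas}, follows by direct substitution.

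The main obstacle I anticipate is the endpoint comparison in the last step: the required inequality $\alpha^{2}+C^{2}\geq(\alpha^{2}+1)(1+C)^{2}/4$ reduces to the polynomial condition $(3-\alpha^{2})C^{2}-2(\alpha^{2}+1)C+(3\alpha^{2}-1)\geq 0$ for all $\alpha\geq 1$ and $C\in[0,1]$, which fails without the constraint $\alpha\geq 1$ and becomes an equality at $C=1$, so the argument must be split according to the sign of $3-\alpha^{2}$ to conclude. Once this technicality is resolved, the remainder is a routine manipulation built on the reductions provided by Lemmas~\ref{lemma:LOCC} and~\ref{lemma:maximal_value}.
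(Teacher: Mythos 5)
Your argument is correct, and it sits inside the same framework as the paper's proof --- reduction to a Bell-diagonal state via Lemma~\ref{lemma:LOCC} (with monotonicity of concurrence under LOCC) and the maximal-Bell-value formula of Lemma~\ref{lemma:maximal_value} --- but it resolves the final optimization by a genuinely different and in fact more explicit route. The paper works in the ``primal'' direction: it fixes the Bell value $S$, eliminates $\lambda_1$ and $\lambda_4$ through the normalization and Bell-value constraints, and then asserts that the resulting two-variable problem in $(\lambda_2,\lambda_3)$ with elliptical constraints can be solved analytically, quoting the optimizer without detail. You instead fix the concurrence $C=2\lambda_1-1$ and maximize the Bell value, which collapses the problem (after pushing $\lambda_4$ to zero by monotonicity) to a one-dimensional convex quadratic in $\lambda_3$ on the segment $\lambda_2+\lambda_3=(1-C)/2$, so the maximum sits at an endpoint and everything reduces to comparing $\alpha^2+C^2$ with $(\alpha^2+1)(1+C)^2/4$. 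The obstacle you flag does close exactly as you suggest: the difference $4(\alpha^2+C^2)-(\alpha^2+1)(1+C)^2$ equals $(1-C)\left[(3\alpha^2-1)-(3-\alpha^2)C\right]$, which is nonnegative for all $\alpha\geq 1$ and $C\in[0,1]$ (trivially when $\alpha^2\geq 3$, and because $(3\alpha^2-1)/(3-\alpha^2)\geq 1$ when $1\leq\alpha^2<3$), with equality only at $C=1$; the $C=0$ case is handled by separability as you note. Your saturation check, $\tan\theta=v/(\alpha u)=C/\alpha$ at $u=1$, $v=C$, matches the paper's Eq.~\eqref{Bell_opt_meas}. Net effect: your dual parameterization in $C$ buys a fully elementary, self-contained verification of the bound, whereas the paper's variable-elimination argument is shorter on the page but leaves the decisive two-variable optimization unproved.
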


We leave the detailed derivation in  \ref{appendix:concurrence_est}. 

\begin{observation}
Given a $\alpha$-CHSH Bell value, the measurements that require the minimum entanglement are not the most incompatible measurements in general.
\end{observation}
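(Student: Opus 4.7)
The plan is to establish the observation directly from the characterization of the extremal (entanglement-minimizing) configuration obtained in Theorem~\ref{thm:concurrence_est}, combined with the explicit form of the optimal measurements in Eq.~\eqref{Bell_opt_meas}. Because the observation only asserts that the property fails ``in general,'' it suffices to exhibit a discrepancy between the angle $\theta$ that achieves the equality in Eq.~\eqref{concurrence_est} and the angle $\theta=\pi/4$ that corresponds to maximally incompatible Bob observables. Since concurrence and entanglement of formation are related by the monotone map in Eq.~\eqref{eof}, minimizing concurrence at a fixed $\alpha$-CHSH value $S$ also minimizes the entanglement of formation, so working with concurrence is without loss of generality for these standard measures.

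First, I would recall that Alice's optimal observables $\hat{A}_0=\sigma_z$, $\hat{A}_1=\sigma_x$ from Eq.~\eqref{Bell_opt_meas} are already maximally incompatible (with commutator $[\sigma_z,\sigma_x]$), so the question reduces to Bob's side, whose incompatibility is governed by $\sin 2\theta$. Maximally incompatible Bob observables correspond to $\theta=\pi/4$, i.e.\ $\hat{B}_0=(\sigma_z+\sigma_x)/\sqrt{2}$ and $\hat{B}_1=(\sigma_z-\sigma_x)/\sqrt{2}$. On the other hand, Theorem~\ref{thm:concurrence_est} asserts that the minimum-concurrence state realizing the Bell value $S$ is saturated by the rank-two Bell-diagonal state with $\lambda_1,\lambda_2$ given explicitly and $\lambda_3=\lambda_4=0$, whose optimal measurements are those in Eq.~\eqref{Bell_opt_meas} with
\begin{equation}
\theta^\star=\arctan\!\left(\frac{1}{\alpha}\sqrt{\tfrac{S^2}{4}-\alpha^2}\right).
\end{equation}

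Second, I would solve $\theta^\star=\pi/4$. This equation is equivalent to $\sqrt{S^2/4-\alpha^2}=\alpha$, that is, $S=2\sqrt{2}\,\alpha$. Combining this with the Tsirelson-type bound $S\le 2\sqrt{\alpha^2+1}$ valid for the $\alpha$-CHSH expression in quantum theory, I would observe that $2\sqrt{2}\,\alpha\le 2\sqrt{\alpha^2+1}$ holds only when $\alpha\le 1$, with equality precisely at $\alpha=1$, $S=2\sqrt{2}$ (the standard CHSH quantum maximum). Hence for any $\alpha>1$, the value $\theta^\star=\pi/4$ is never attained at any quantum-achievable $S$, and even for $\alpha=1$ the coincidence occurs only at the single extremal point $S=2\sqrt{2}$. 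For every other Bell value in the nonlocal regime $S\in(2\alpha,2\sqrt{\alpha^2+1}]$, the entanglement-minimizing configuration uses Bob observables with $\theta^\star\neq\pi/4$, i.e.\ strictly less than maximal incompatibility, establishing the observation.

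I do not anticipate a serious obstacle: the argument is essentially a comparison of two explicit expressions for $\theta$, both already in hand. The only subtlety to guard against is the direction of the inequality, namely ensuring that the extremal configuration of Theorem~\ref{thm:concurrence_est} is a genuine equality case (not merely a lower bound), which is exactly the content of the saturation statement in that theorem. With that in place, the observation follows immediately by evaluating $\theta^\star$ and noting $\theta^\star<\pi/4$ in the generic nonlocal regime.
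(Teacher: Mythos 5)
Your argument is correct and is essentially the paper's own justification: the observation follows directly from the saturation statement of Theorem~\ref{thm:concurrence_est}, whose extremal measurements have $\theta^\star=\arctan\bigl(\tfrac{1}{\alpha}\sqrt{S^2/4-\alpha^2}\bigr)$, and your comparison with the quantum bound $S\le 2\sqrt{\alpha^2+1}$ correctly shows $\theta^\star=\pi/4$ only at $\alpha=1$, $S=2\sqrt{2}$. This matches the paper's discussion (e.g.\ the identification of $\theta_C^*$ in Sec.~\ref{sc:interplay}), so no further changes are needed.
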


As the EOF can be expressed by concurrence in a closed form for a pair of qubits~\cite{hill1997entanglement}, this entanglement measure is directly lower-bounded by substituting Eq.~\eqref{concurrence_est} in Eq.~\eqref{eof},
\begin{equation}
\label{EOF_est_semiDI}
    E_{\rm F}(\rho_{AB})\geq h\left(\frac{1}{2}+\frac{1}{2}\sqrt{1+\alpha^2-\frac{S^2}{4}}\right).
\end{equation}

\begin{figure}[hbt!]
\centering
\includegraphics[scale=0.57]{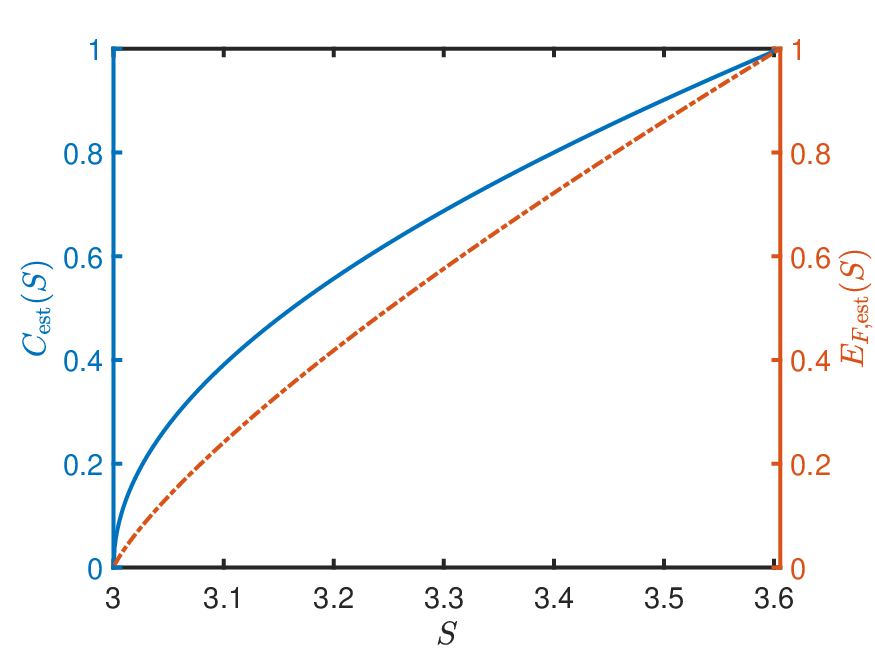}
\caption{Diagram of concurrence and entanglement formation estimation results when the CHSH-type expression in Eq.~\eqref{CHSH_type} takes $\alpha=1.5$ and input states are two-qubit states. We plot the estimated values of concurrence and EOF
with the blue solid line and the red dashed line, respectively. The estimations are both concave in $S\in(3,2\sqrt{3.25}]$ and range from $0$ to $1$.}
\label{fig:concurrence_EOF}
\end{figure}

In Fig.~\ref{fig:concurrence_EOF}, we depict the entanglement estimation result when $\alpha=1.5$ for a pair of qubits input. Given the consistent definition of EOF across all dimensions, we extend the two-qubit EOF estimation result in Eq.~\eqref{EOF_est_semiDI} to a general-state scenario. Since the two-qubit EOF estimation result is a concave function in the Bell value, a convex closure should be taken when extending the EOF estimation result to general states with an unknown dimension. For example, suppose the underlying state already has a block-diagonal form with respect to the measurement observables, $p^{(1)}\rho_{AB}^{(1)}\oplus p^{(2)}\rho_{AB}^{(2)}$, where $p^{(1)}=p^{(2)}=1/2$, and $\rho_{AB}^{(1)},\rho_{AB}^{(2)}$ are qubit pairs. In addition, $\rho_{AB}^{(1)}$ is not entangled, while $\rho_{AB}^{(2)}$ is a Bell state. In this case, the EOF of the underlying state is $E_{\mathrm{F}}(\rho_{AB})=p^{(1)}E_{\mathrm{F}}(\rho_{AB}^{(1)})+p^{(2)}E_{\mathrm{F}}(\rho_{AB}^{(2)})=1/2$. Suppose the measurement observables are such that the expected Bell values arising from $\rho_{AB}^{(1)}$ and $\rho_{AB}^{(2)}$ are $S^{(1)}=3$ and $S^{(2)}=2\sqrt{1+1.5^2}$, respectively. Alice and Bob can only observe the average expected Bell value of $S=p^{(1)}S^{(1)}+p^{(2)}S^{(2)}$, and they will overestimate the underlying state's EOF if they directly apply Eq.~\eqref{EOF_est_semiDI}. To bypass such a problem, we take a convex closure over Eq.~\eqref{EOF_est_semiDI} according to Eq.~\eqref{eq:convex_closure} and obtain the final estimation.


\begin{theorem}
For a given tilted CHSH expression in Eq.~\eqref{CHSH_type}, if the Bell expression value is $S$, then the amount of entanglement of formation in the underlying state can be lower-bounded,
\begin{equation}
    E_{\rm F}(\rho_{AB})\geq\frac{S-2\alpha}{2\sqrt{1+\alpha^2}-2\alpha}.
    \label{EOF_est_DI}
\end{equation}
\end{theorem}

In the literature, Ref.~\cite{arnon2017noise} provides EOF estimation results using threshold games that have a non-zero gap between classical and quantum strategies. Translating the result to the CHSH game, the EOF estimation result in Ref.~\cite{arnon2017noise} is
\begin{equation}
\label{EOF_arnon}
    E_{\mathrm{F}}(\rho_{AB})\geq\frac{(S-2)^5}{10\cdot 180^2\cdot2^{16}}.
\end{equation}
In comparison, our EOF estimation result in Eq.~\eqref{EOF_est_DI} is much tighter. In  \ref{sc:EOF_arnon}, we briefly review the results in Ref.~\cite{arnon2017noise} and explain how to arrive at Eq.~\eqref{EOF_arnon}.

\subsection{Negative conditional entropy and one-way distillable entanglement}
In this subsection, we estimate the one-way distillable entanglement, $E_D^\rightarrow(\rho_{AB})$, depicted by the negative conditional entropy, $-H(A|B)$, via Bell nonlocality. 
For the set of Bell-diagonal states on the qubit-pair systems, since the reduced density matrix of a subsystem is a maximally mixed state, $H(B)=1$, the conditional von Neumann entropy of the state is reduced to $H(A|B)=H(AB)-H(B)=H(AB)-1$. Using the notation in Eq.~\eqref{Bell_diagonal}, the 
term of joint von Neumann entropy can be expressed by
\begin{equation}
    H(AB)=H(\Vec{\lambda})=-\sum_{i=1}^{4}\lambda_i\log\lambda_i.
\end{equation}
Thus, the lower bound of one-way distillable entanglement for a pair of qubits becomes the following optimization problem,
\begin{equation}
    \begin{split}
    E_{D\rm ,est}^\rightarrow & =\underset{\lambda_i,i=1,2,3,4}{\text{min}} 1+\sum_{i=1}^4\lambda_i\log\lambda_i, \\
    \text{s.t.}\quad
    S &=2\sqrt{\alpha^2(\lambda_1+\lambda_2-\lambda_3-\lambda_4)^2+(\lambda_1-\lambda_2+\lambda_3-\lambda_4)^2},\\
    \lambda_1 & \geq\lambda_2\geq\lambda_3\geq\lambda_4,\\
    1 & =\sum_{i=1}^4\lambda_i, \lambda_i\geq 0 \; ,i=1,2,3,4.
    \end{split}
    \label{entropy_optm}
\end{equation}
As this is a convex optimization problem, we can solve it efficiently via off-the-shelf numerical toolboxes. We present numerical results for some values of $\alpha$ in Fig.~\ref{fig:numeric_entropy}. Given any $\alpha>1$, the estimation value $E_{D\rm ,est}^\rightarrow(S)$ is a convex function on $S\in(2\alpha,2\sqrt{1+\alpha^2}]$. Following Eq.~\eqref{convexity}, the solution can be directly lifted as the lower bound on one-way distillable entanglement for a general state. 
Notably, in the special case of $\alpha=1$ that corresponds to the original CHSH expression, our numerical estimation coincides with the existing analytical result~\cite{arnon2019device},
\begin{equation}
    E_{D}^{\rightarrow}(\rho_{AB})=-H(A|B)\geq \max\left\{0,1-2 h\left(\frac{1}{2}-\frac{S}{4 \sqrt{2}}\right)\right\}.
\end{equation}

\begin{figure}[hbt!]
    \centering
    \includegraphics[scale=0.5]{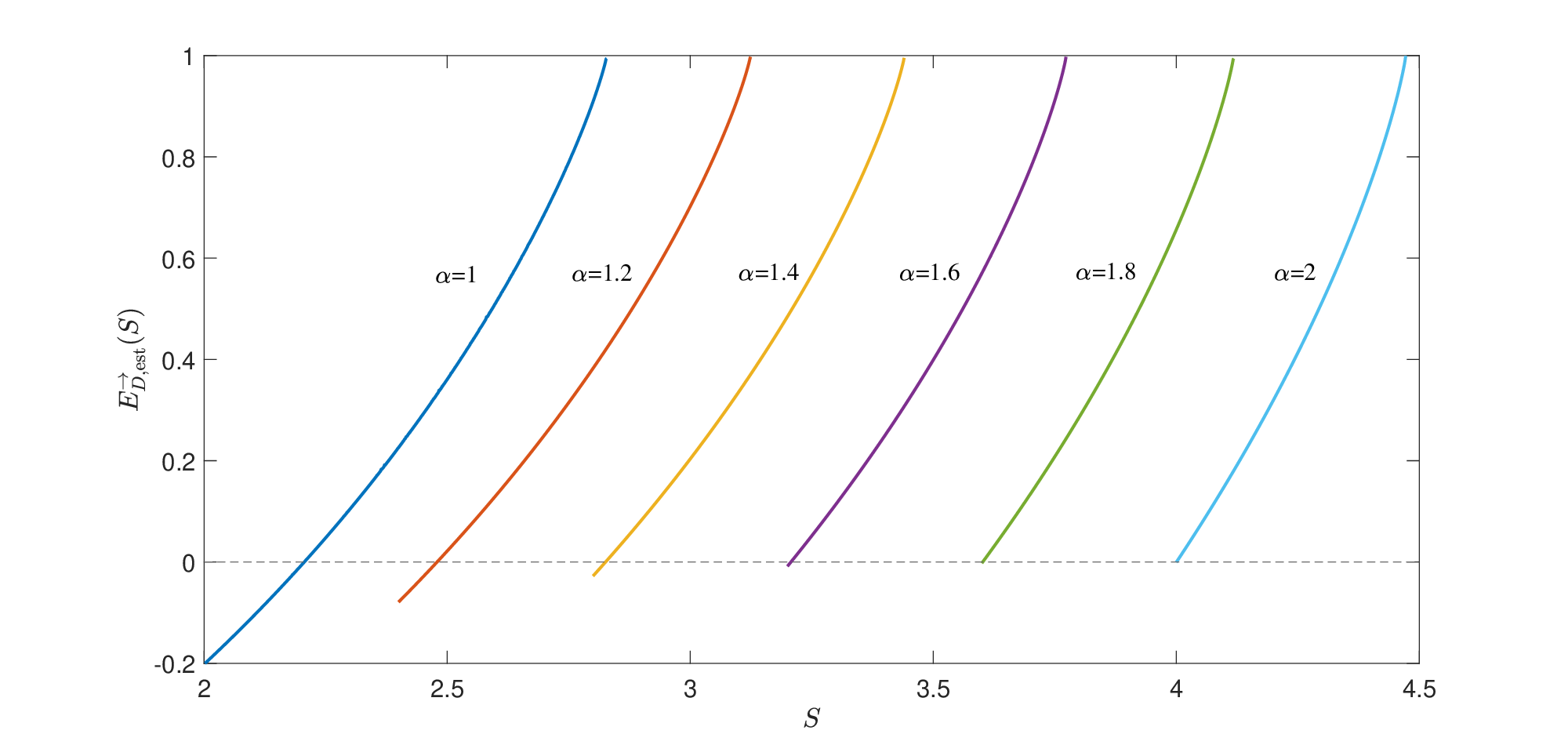}
    \caption{Diagram of one-way distillable entanglement estimation results. The estimation is depicted by CHSH-type Bell expressions with several discretely increasing $\alpha$. For each value of $\alpha$, the estimation result, $E_{D\rm ,est}^\rightarrow(S)$, depicted over the valid interval $S\in(2\alpha,2\sqrt{1+\alpha^2}]$ is convex. When $\alpha$ increases, $E_{D\rm ,est}^\rightarrow(S)$ at $S=2\alpha$ for each $\alpha$ increases and converges to $0$. Since $E_{D,\rm est}^\rightarrow(S)$ is convex in $S$, the estimation results hold valid without assuming the system dimension.}
    \label{fig:numeric_entropy}
\end{figure}


\subsection{Negativity of entanglement}
In this subsection, we estimate the negativity of entanglement given an $\alpha$-CHSH Bell value. In Sec.~\ref{subsc:ent measures}, we establish a connection between the concurrence and negativity for Bell-diagonal states: both measures are linear functions of eigenvalues $\lambda_i$ in Eq.~\eqref{Bell_diagonal}, and the negativity of entanglement is precisely half the concurrence for a given Bell-diagonal state. Therefore, for a pair of qubits, we can obtain the following analytical estimation result for negativity based on the concurrence estimation in Theorem~\ref{thm:concurrence_est}.

\begin{corollary}
    Suppose the underlying state is a pair of qubits. For a given tilted CHSH expression in Eq.~\eqref{CHSH_type}, if the Bell expression value is $S$, then the amount of negativity in the underlying state can be lower-bounded,
\begin{equation}
\label{negativity_est}
    \mathcal{N}(\rho_{AB})\geq \frac{1}{2}\sqrt{\frac{S^2}{4}-\alpha^2}.
\end{equation}
The equality can be saturated when measuring a Bell-diagonal state in Eq.~\eqref{Bell_diagonal} with eigenvalues
\begin{equation}
    \begin{gathered}
        \lambda_1=\frac{1}{2}+\frac{1}{2}\sqrt{\frac{S^2}{4}-\alpha^2},\\
        \lambda_2=\frac{1}{2}-\frac{1}{2}\sqrt{\frac{S^2}{4}-\alpha^2},\\
        \lambda_3=\lambda_4=0,
    \end{gathered}
\end{equation}
using measurements in Eq.~\eqref{Bell_opt_meas} with $\theta=\arctan(\frac{1}{\alpha}\sqrt{\frac{S^2}{4}-\alpha^2})$. 
\end{corollary}

In a fully DI scenario, similar to the estimation of concurrence, we take a convex closure to lower-bound the negativity of the underlying state with an unknown dimension, arriving at the following analytical result.

\begin{corollary}
    For a given tilted CHSH expression in Eq.~\eqref{CHSH_type}, if the Bell expression value is $S$, then the amount of negativity in the underlying state can be lower-bounded,
    \begin{equation}
        \label{negativity_est_DI}
            \mathcal{N}(\rho_{AB})\geq \frac{S-2\alpha}{4(\sqrt{1+\alpha^2}-\alpha)}.
        \end{equation}
\end{corollary}

Especially when $\alpha=1$, Eq.~\eqref{negativity_est_DI} becomes
\begin{equation}
\label{negativity_est_DI_alpha1}
    \mathcal{N}(\rho_{AB})\geq \frac{S-2}{4(\sqrt{2}-1)}.
\end{equation}
This analytical result proves the conjecture of Eq.~(5) in Ref.~\cite{moroder2013device}, where the authors observed a nearly linear relation between the lower bound of negativity and the underlying Bell value via the third level of an NPA-type hierarchy numerical algorithm~\cite{navascues2007bounding}. In Fig.~\ref{fig:semiDI_negativity}, we plot the negativity estimation results from the CHSH Bell value, i.e., $\alpha=1$, in the fully DI scenario in Eq.~\eqref{negativity_est_DI} and the semi-device-independent (semi-DI) case with the additional assumption of system dimensions in Eq.~\eqref{negativity_est}.

\begin{figure}[hbt!]
\centering
\includegraphics[scale=0.57]{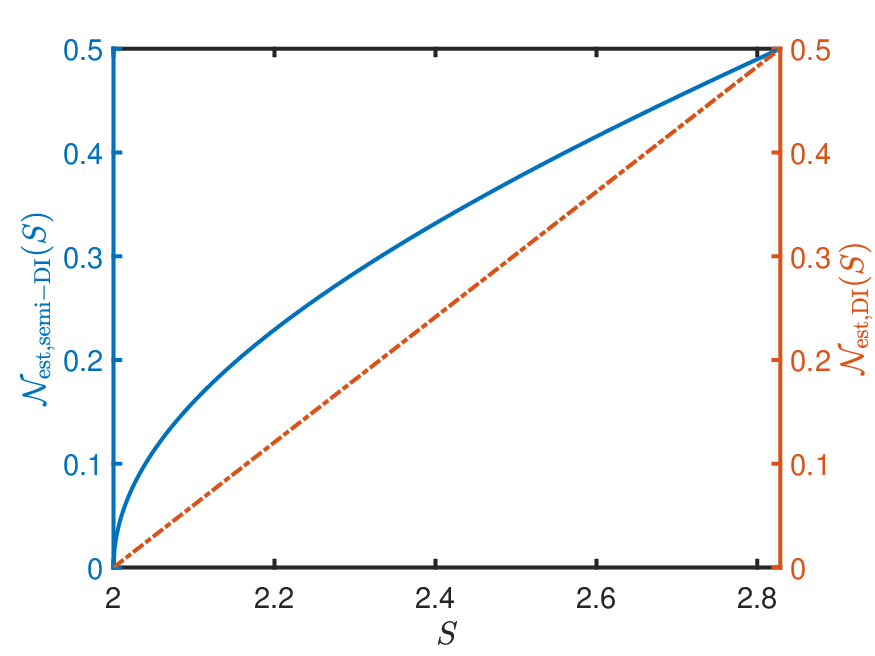}
\caption{Diagram of negativity estimation results. The semi-DI negativity estimation when the input state is a pair of qubits is plotted with the blue solid line. The complete DI negativity estimation result is plotted with the red dashed line.}
\label{fig:semiDI_negativity}
\end{figure}

\section{Exploration of the relation among entanglement, measurement incompatibility, and Bell nonlocality within a finite-dimensional system}\label{sc:interplay}

Besides entanglement, another key ingredient behind nonlocality is measurement incompatibility. Both entanglement and measurement incompatibility can be regarded as quantum resources to unveil non-classical physical phenomena; hence a natural intuition is that for a given Bell value, there is a trade-off relation between entanglement and measurement incompatibility, where more incompatible measurement may compensate an underlying system with less entanglement and \emph{vice versa}. However, as we have discussed for the notion of optimal measurements, the observables that yield the largest Bell value for a quantum state may not correspond to the maximally incompatible ones. Particularly, as shown in Theorem~\ref{thm:concurrence_est}, for the case of the least amount of entanglement for a nonlocal behavior, the observables are generally not maximally incompatible.
In this section, we make a detailed investigation into the relation between entanglement and measurement incompatibility under a given Bell nonlocal behavior. 


To simplify the discussion and manifest the main factors in the interplay, we restrict our analysis with the following assumptions: (1) the underlying system is a pair of qubits, and (2) the measurement operators are qubit observables. Note this setting is consistent with the DI estimation. Namely, this is the subsystem after applying Naimark’s dilation theorem and Jordan’s lemma. Moreover, when considering projective measurements, the quantification of measurement incompatibility becomes straightforward.


We quantify the least amount of entanglement that is necessary for a given Bell value,
\begin{equation}
\label{interplay_optm}
    \begin{split}
     E_{\mathrm{est}}(S,\theta)&=\min_{\rho_{AB}}  E(\rho_{AB}), \\
    \text{s.t.}\quad
    \Tr\left(\rho_{AB}\hat{S}_{\alpha}\right) &= S,\\
   \hat{A}_0 &= \sigma_z,\\
   \hat{A}_1 &= \sigma_x,\\
    \hat{B}_0 &=\cos\theta\sigma_z+\sin\theta\sigma_x,\\
    \hat{B}_1 &=\cos\theta\sigma_z-\sin\theta\sigma_x,\\
    \rho_{AB}&\geq 0, \\
    \rho_{AB}&\in\mathcal{D}(\mathcal{H}_2\otimes\mathcal{H}_2), \\
    \Tr(\rho_{AB})&=1.
    \end{split}
\end{equation}
where $E$ represents a chosen entanglement measure. We still denote the solution to the optimization as $E_{\mathrm{est}}$, while it now represents the least amount of entanglement that is necessary for the nonlocal behavior under the given measurement incompatibility. In this optimization, we assume that the measurement incompatibility is parameterized by one parameter, $\theta$. On Alice's side, the two local observables are fixed to be maximally incompatible with each other. On Bob's side, when $\theta=0$ and $\pi/2$, the two local observables commute. When $\theta=\pi/4$, the local observables enjoy the maximal incompatibility, which is the other extreme. In the following discussions, we restrict the parameter to be $\theta\in[0,\pi/4]$, as other cases can be obtained via symmetry.

Before presenting the results, we make some remarks on the scenario considered in the optimization. For both parties in the Bell test, their local measurement observables need to be incompatible to violate a Bell inequality. Aside from our choice of fixing Alice's observables to be maximally incompatible while optimizing Bob's observables, one may consider alternative settings of incompatible measurements and analyze their relation with entanglement and nonlocality. The reason for our choice is that the measurement settings in Eq.~\eqref{interplay_optm} coincide with those in Eq.~\eqref{Bell_opt_meas}, which are the optimal measurements that give the largest Bell value for the Bell-diagonal states in Eq.~\eqref{Bell_diagonal}. Moreover, if we further minimize $E_{\rm est}(S,\theta)$ over $\theta$ in Eq.~\eqref{interplay_optm}, the optimization degenerates to Eq.~\eqref{equivalent_optm} with the additional assumption that the underlying system is a pair of qubits and the measurements are projective.

\subsection{Original CHSH (\texorpdfstring{$\alpha=1$}{Lg})}
To observe the interplay among entanglement, measurement incompatibility, and nonlocality, we numerically solve the optimization problem in Eq.~\eqref{interplay_optm} by taking concurrence and one-way distillable entanglement as an entanglement measure and varying $\theta\in[0,\pi/4]$ and $S$ discretely from the $\alpha$-CHSH Bell values.

In Fig.~\ref{fig:concurrence-incompatibility}, we choose the original CHSH Bell expression and present the numerical results when taking the concurrence as the entanglement measure. For a nonlocal behavior, where $S\in (2,2\sqrt{2}]$, we denote $\theta=\theta_C^*$ when the estimated concurrence reaches its minimum, $C_{\rm est}(S,\theta)=\sqrt{S^2/4-1}$. As we have derived in Theorem~\ref{thm:concurrence_est}, $\theta_C^*=\arctan\sqrt{S^2/4-1}$. 
When the amount of measurement incompatibility between the local observables is smaller than that of this point, which corresponds to $\theta<\theta_C^*$, there is a trade-off relation between concurrence and measurement incompatibility, where less entanglement is required for the given Bell value as the amount of measurement incompatibility increases.
However, when $\theta>\theta_C^*$, as the underlying state enjoys more entanglement of concurrence, larger measurement incompatibility is also required for the observed Bell value. As $S$ increases from $2$ to $2\sqrt{2}$, the range of feasible values of $(\theta,C_{\mathrm{est}})$ shrinks as $S$ grows. When $S=2\sqrt{2}$, the underlying state is maximally entangled and the local measurement observables are the most incompatible ones, and the range of possible values of $(\theta,C_{\mathrm{est}})$ degenerates to the point of $(\pi/4,1)$. This result coincides with the self-testing finding~\cite{bardyn2009device}, where the only feasible experimental setting for the maximum CHSH Bell value enjoys the above properties.


\begin{figure}[hbt!]
    \centering
    \includegraphics[scale=0.57]{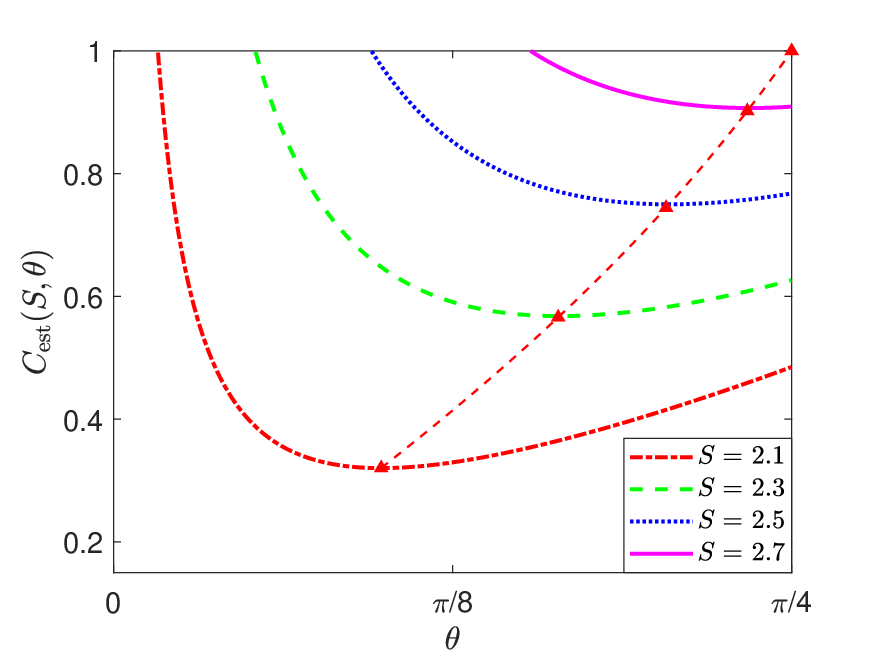}
    \caption{Illustration of the interplay among Bell nonlocality, measurement incompatibility, and concurrence. In this figure, we consider the original CHSH Bell expression and parameterize the measurement observables as in Eq.~\eqref{Bell_opt_meas}, where incompatibility is quantified through $\theta$. We focus on the interval of $\theta\in[0,\pi/4]$, and the results elsewhere can be obtained using symmetry. 
    For a given value of $S$, when $\theta<\theta_C^*=\arctan\sqrt{S^2/4-1}$, there is a trade-off relation between entanglement and measurement incompatibility, where less entanglement of concurrence is required for the nonlocal behavior when the measurements become more incompatible and \emph{vice versa}. Afterward, more entanglement is required for the given Bell value as $\theta$ increases. As $S$ increases, the range of possible values of $(\theta,C_{\mathrm{est}})$ shrinks and $\theta_C^*$ gets close to $\pi/4$. 
    }
    \label{fig:concurrence-incompatibility}
\end{figure}

In Fig.~\ref{entropy-incompatibility}, we present the numerical results when taking the one-way distillable entanglement as the entanglement measure. Under a fixed Bell value, there is a strict trade-off relation between entanglement and measurement incompatibility. The more incompatible the measurement observables are, the less entanglement is necessary for the nonlocal behavior, and \emph{vice versa}. In addition, the range for the trade-off shrinks with a larger Bell violation value. In the extreme of the largest Bell violation value, $S=2\sqrt{2}$, the setting should involve both the maximally entangled state and measurement observables that are maximally incompatible, in accordance with the self-testing result. One thing to note is that the estimated negative conditional entropy reaches its minimum exactly when $\theta=\pi/4$ for all $S\in(2,2\sqrt{2}]$, which holds no longer valid in $\alpha$-CHSH inequality when $\alpha>1$.

\begin{figure}[hbt!]
    \centering
    \includegraphics[scale=0.57]{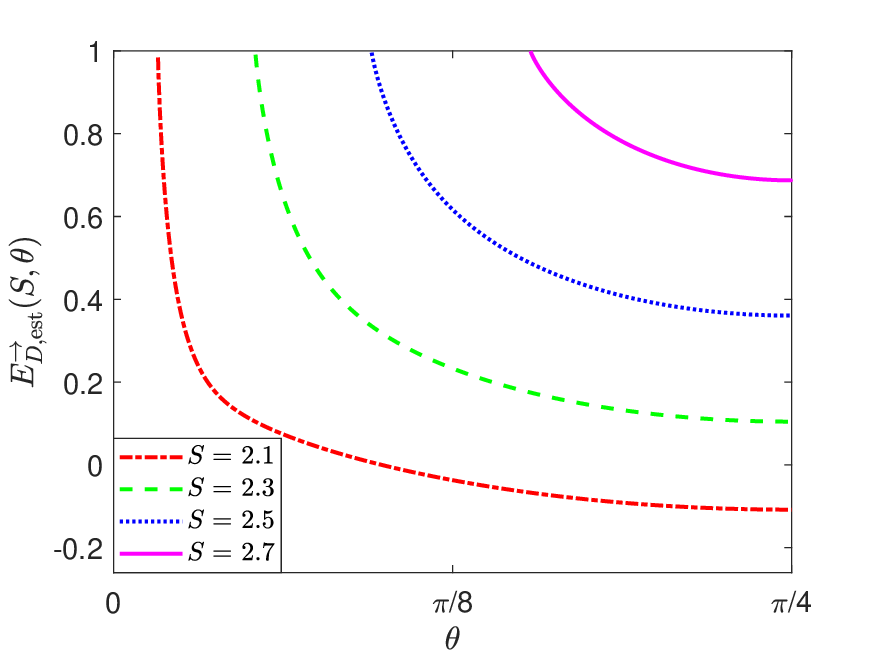}
    \caption{Illustration of the interplay among Bell nonlocality, measurement incompatibility, and one-way distillable entanglement. In this figure, we consider the original CHSH Bell expression and parameterize the measurement observables as in Eq.~\eqref{Bell_opt_meas}.
    As $S$ increases, the range of possible values of $(\theta,E_{D,{\rm est}}^\rightarrow)$ shrinks. For a given Bell value, less entanglement is required when $\theta$ increases in the valid region.}
\label{entropy-incompatibility}
\end{figure}



\subsection{General CHSH-type (\texorpdfstring{$\alpha>1$}{Lg})}

Besides the original CHSH Bell expression, we also study the relation among entanglement, measurement incompatibility, and nonlocality for general $\alpha$-CHSH expressions. Fixing parameter $\alpha>1$, for any Bell value $S\in(2\alpha,2\sqrt{1+\alpha^2}]$, denote the range of plausible values of parameter $\theta$ by $\theta_{\min}\leq\theta\leq\theta_{\max}$. In Fig.~\ref{incompatibility_alpha12}, we investigate the issue under parameter $\alpha=1.2$. For both the concurrence of entanglement and one-way distillable entanglement, when $\theta$ increases from $\theta_{\min}$ to $\theta_{\max}$, the corresponding amount of estimated entanglement first monotonically decreases from $1$, which corresponds to the maximally entangled state. In this region, there is a trade-off relation between entanglement and measurement incompatibility under the given Bell value. After reaching its minimum at $\theta=\theta^*_{E}$, a point that is related to the particular entanglement measure under study, more entanglement is required as the local measurement observables become more incompatible.
One thing worth noting is that under the same $S$, the values of $\theta_{\min}$ and $\theta_{\max}$ are the same for both entanglement measures we now study. As $S$ grows, the supported range of incompatibility and entanglement shrinks, which converges to the single point of $\theta=\arctan(1/\alpha)$ and $E_{\rm est}=1$ when $S$ approaches its maximum $2\sqrt{1+\alpha^2}$. Namely, the maximum value of the $\alpha$-CHSH expression requires a pair of non-maximally incompatible measurements on one side. This result also coincides with the self-testing findings \cite{acin2012randomness}. Another indication is that to yield a large $\alpha$-CHSH Bell value with $\alpha>1$, the measurement observables on one side cannot be too incompatible, where they lie outside the feasible region of the experimental settings.

\begin{figure}[hbt!]
    \centering
    \includegraphics[scale=0.42]{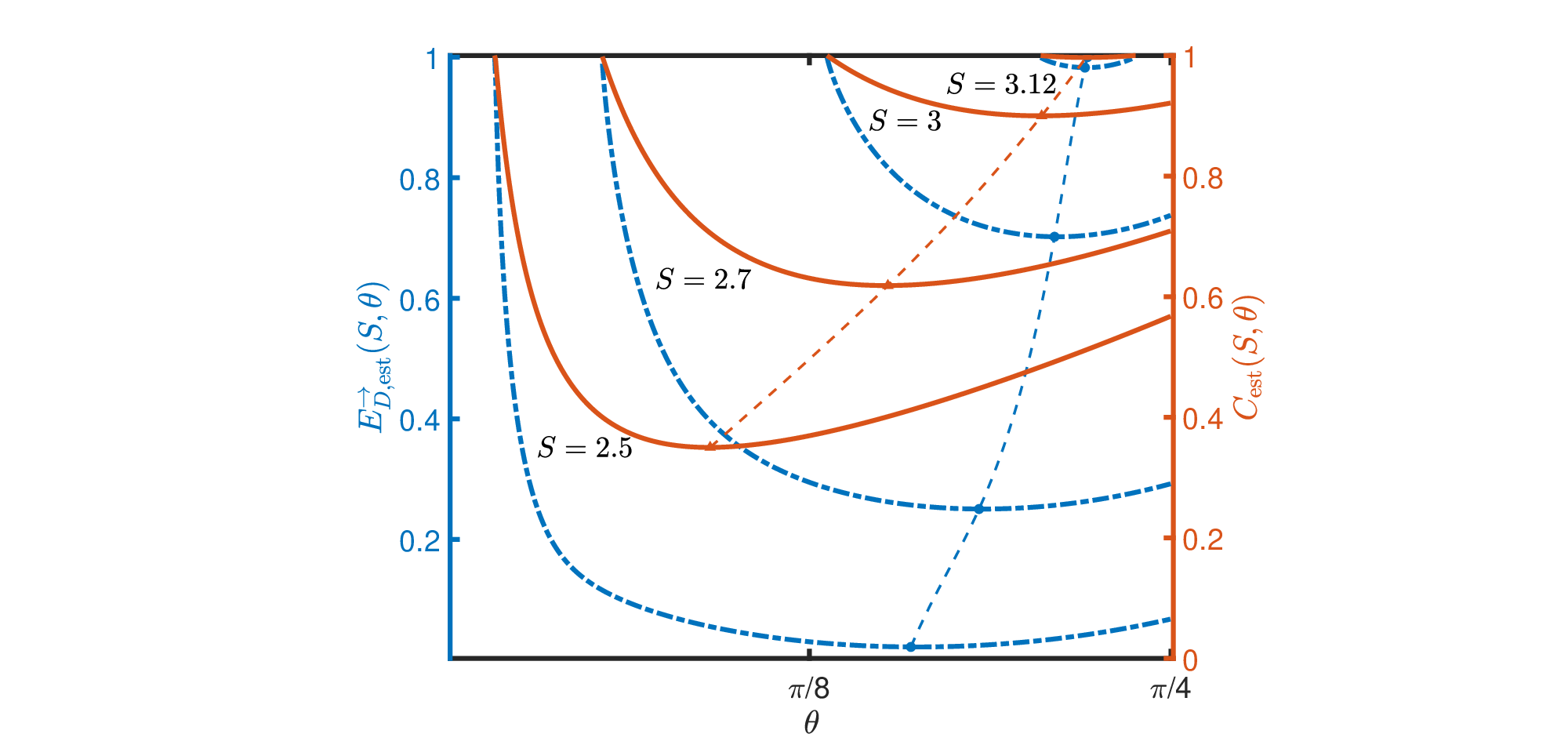}
    \caption{Illustration of the interplay among Bell nonlocality, measurement incompatibility, and entanglement. In this figure, we consider the $\alpha$-CHSH Bell expression with  $\alpha=1.2$. The blue curves depict the results of one-way distillable entanglement, and the red curves depict the results of concurrence. For both entanglement measures, given a Bell value, the least required amount of entanglement first monotonically decreases as $\theta$ increases. After $\theta$ is larger than a threshold value that depends on the entanglement measure, $\theta_E^*$, more entanglement is required as the measurements become more incompatible.
    The ranges of possible values of $\theta\in[\theta_{\min},\theta_{\max}]$ are the same for the two entanglement measures. 
    When $S<2.2\sqrt{2}$, $\theta_{\max}=\pi/4$. When $S\geq 2.2\sqrt{2}$, $\theta_{\max}$ is smaller than $\pi/4$. The supported range shrinks as $S$ increases. When $S$ reaches its maximum, $S=2\sqrt{1.2^2+1}$, the range degenerates to the point of $\theta=\arctan 1/1.2$. In this case, the underlying state can only be a maximally entangled state, corresponding to $E_{\rm est}=1$.}
\label{incompatibility_alpha12}
\end{figure}

For concurrence, we can derive the critical points analytically. Given $\alpha$-CHSH Bell value $S$, when $\theta=\theta^{*}_{C}=\arctan(\frac{1}{\alpha}\sqrt{\frac{S^2}{4}-\alpha^2})$, the system requires the least amount of concurrence, $C_{\mathrm{est}}=\sqrt{\frac{S^2}{4}-\alpha^2}$, which can be derived from Theorem~\ref{thm:concurrence_est}. When $\theta>\theta_{C}^*$, we find there is a region of $\theta$ where the least amount of concurrence behaves differently from that of one-way distillable entanglement. That is, though more concurrence is required in the underlying system as $\theta$ grows, the system may yield less distillable entanglement. In other words, the manifestation of entanglement properties through nonlocality highly depends on the particular entanglement measure under study.

The value of $\theta_{\max}$ and the value of corresponding $E_{\mathrm{est}}$ are related to the parameter, $\alpha$. A notable issue is that under particular value of $\alpha$ and Bell value $S$, $E_{\mathrm{est}}$ at $\theta=\theta_{\max}$ can reach $1$. We find that when $1<\alpha<\sqrt{2}+1$, for $S<\sqrt{2}(\alpha+1)$, $\theta_{\max}=\pi/4$ and the corresponding least amount of entanglement, $E_{\rm est}$ is strictly smaller than $1$. For a larger Bell value, $S\geq\sqrt{2}(\alpha+1)$, $\theta_{\max}$ may be smaller than $\pi/4$, and $E_{\rm est}$ at $\theta=\theta_{\max}$ always reaches $E_{\rm est}= 1$. For Bell expressions with $\alpha\geq \sqrt{2}+1$, as long as the Bell inequality is violated, $S>2\alpha$, we have $E_{\rm est}= 1$ at $\theta=\theta_{\max}$. 
In Fig.~\ref{incompatibility_alpha0}, we illustrate the interplay relation when $\alpha=\sqrt{2}+1$. From this example, we can see that there can be two experimental settings that give rise to the same Bell value, where the underlying systems enjoy the same amount of entanglement, yet the incompatibility between the local measurements can be significantly different.


\begin{figure}[hbt!]
    \centering
    \includegraphics[scale=0.42]{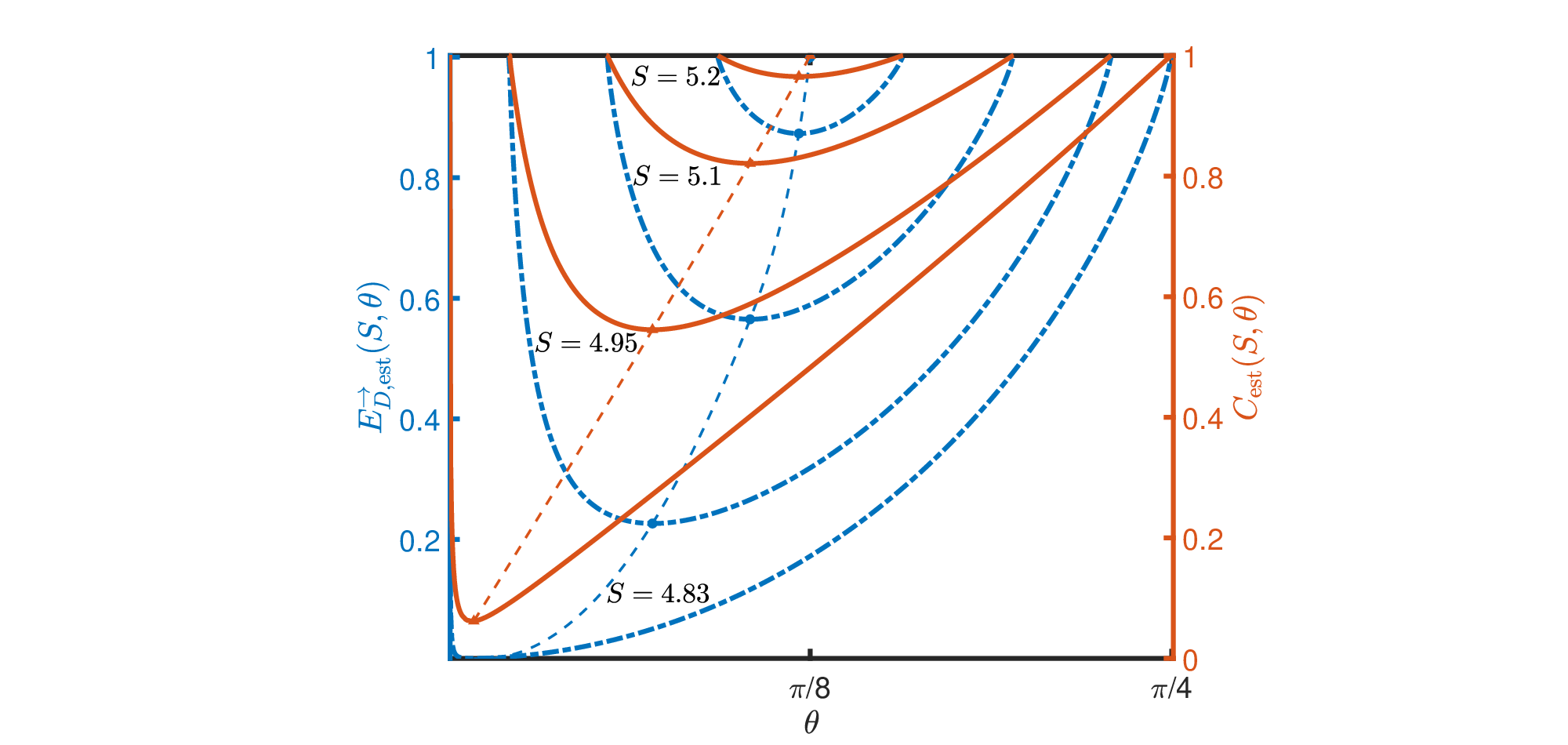}
    \caption{Illustration of the interplay among Bell nonlocality, measurement incompatibility, and entanglement. In this figure, $\alpha=\sqrt{2}+1$. The blue curves depict the results of one-way distillable entanglement, and the red curves depict the results of concurrence. The relation between entanglement and measurement incompatibility is similar to that in Fig.~\ref{incompatibility_alpha12}. Nevertheless, given any Bell value $S$ that is larger than $2\alpha$, which violates the $\alpha$-CHSH Bell inequality, the least amount of entanglement in the system at $\theta=\theta_{\max}$ is $1$, corresponding to the maximally entangled state. The feasible range of $\theta\in[\theta_{\min},\theta_{\max}]$ shrinks as $S$ grows and degenerates to the point of $\theta=\arctan(\sqrt{2}-1)$, where the Bell value reaches its maximum, $S=2\sqrt{2\sqrt{2}+4}$.}
\label{incompatibility_alpha0}
\end{figure}

\section{Optimizing entanglement estimation in realistic settings}\label{sc:numerical}
While the full probability distribution of a nonlocal behavior gives the complete description in a Bell test, for practical purposes, one often applies a Bell expression to characterize nonlocality.
As a given Bell expression only reflects a facet of the nonlocal behavior, one may expect a better entanglement estimation result via some well-chosen Bell expressions. In particular, realistic experiments unavoidably suffer from loss and noise in state transmission and detection. The robustness of such imperfections can differ for various Bell inequalities. In this section, we aim to specify when a non-trivial choice of $\alpha$-CHSH expression leads to better estimation. 
From an experimental point of view, the investigations may benefit experimental designs of DI information processing tasks. 
For this purpose, we simulate the nonlocal correlations that arise from two sets of states: Non-maximally entangled pure states and Werner states. The deliberate use of non-maximally entangled states has been proved beneficial for observing nonlocal correlations under lossy detectors~\cite{eberhard1993background}. The Werner states characterize the typical effect of transmission noise upon entanglement distribution through fiber links~\cite{liu2018device,li2021experimental}.

With respect to the computation bases that define Pauli operators $\sigma_z$ on each local system, the measurements are parametrized as
\begin{equation}
\label{meas_setting}
\begin{split}
    \hat{A}_0&=\sigma_z, \\
    \hat{A}_1&=\cos\theta_1\sigma_z+\sin\theta_1\sigma_x, \\
    \hat{B}_0&=\cos\theta_2\sigma_z+\sin\theta_2\sigma_x, \\
    \hat{B}_1&=\cos\theta_3\sigma_z+\sin\theta_3\sigma_x,
\end{split}
\end{equation}
for Alice and Bob, respectively. We examine the optimal choice of $\alpha$ for DI entanglement estimation if the statistics arise from the two types of states.

\subsection{Non-maximally entangled states}
In the first simulation model, the underlying state is a non-maximally entangled state. We express the state on its Schmidt basis,
\begin{equation}
    \ket{\phi_{AB}(\delta)}=\cos\delta\ket{00}+\sin\delta\ket{11}.
\label{pure}
\end{equation}
where parameter $\delta\in[0,\pi/2]$ fully determines the amount of entanglement in the system. We first present the estimation result through a concrete example. We specify the underlying system by $\delta=\pi/6$ and the measurements by $\theta_1=\pi/2,\theta_2=\pi/6$ and $\theta_3=-\pi/6$. As shown in Fig.~\ref{fig:pure_entropy_concurrence}, we estimate the amount of negative conditional entropy and EOF with respect to the simulated statistics. The estimation results vary with respect to the value of $\alpha$.
The curves show that the original CHSH expression, corresponding to $\alpha=1$, does not yield the best entanglement estimation result for the given statistics. One obtains the best estimation results with the value of $\alpha$ roughly in the range $[1.4,1.6]$ for both EOF and negative conditional entropy.



\begin{figure}[hbt!]
    \centering
    \includegraphics[width=0.48\textwidth]{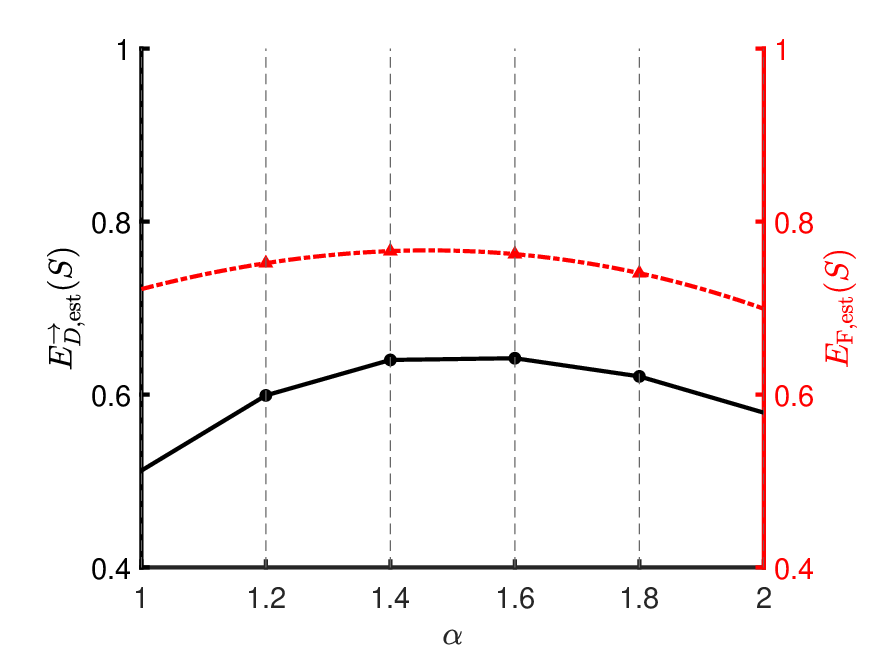}
    \caption{Entanglement estimation results for nonlocal correlations arising from non-maximally entangled states. The experimental setting is given by $\delta=\pi/6,\theta_1=\pi/2,\theta_2=\pi/6$ and $\theta_3=-\pi/6$. We depict the entanglement estimation results when using different $\alpha$-CHSH Bell expressions. We plot the estimated values of one-way distillable entanglement and EOF with the black solid line and the red dashed line, respectively.
    }
\label{fig:pure_entropy_concurrence}
\end{figure}

To see when better entanglement estimation is obtained with $\alpha>1$ for the family of non-maximally entangled states, we analytically derive the condition of the underlying system for the measure of EOF.
Using Eq.~\eqref{EOF_est_DI}, we have the following result.

\begin{theorem}\label{thm:DI_condition_pure}
    In a Bell test experiment, suppose the underlying state of the system takes the form of Eq.~\eqref{pure}, and the observables take the form of Eq.~\eqref{meas_setting}. For EOF estimation solely from the violation values of $\alpha$-CHSH Bell inequalities, if $\theta_1,\theta_2,\theta_3$ and $\delta$ satisfy
\begin{equation}
    \sin2\delta\sin\theta_1(\sin\theta_2-\sin\theta_3)+\cos\theta_2(\sqrt{2}+1+\cos\theta_1)+\cos\theta_3(\sqrt{2}+1-\cos\theta_1)>2(1+\sqrt{2}),
    \label{eq:DI_condition_pure}
\end{equation}
then there exists $\alpha>1$, where a better estimation of $E_{\rm F,est}(S)$ can be obtained by using the $\alpha$-CHSH inequality parameterized by this value than by using the original CHSH inequality (corresponding to $\alpha=1$).
\end{theorem}
Theorem~\ref{thm:DI_condition_pure} analytically confirms the nonlocality depicted by the original CHSH Bell value does not always provide the EOF estimation that approaches the real value most. When a fixed nonlocal behavior is given in a CHSH Bell test, once the non-maximally entangled state parameter $\delta$ in Eq.~\eqref{pure} and measurement parameters $\theta_1,\theta_2,\theta_3$ in Eq.~\eqref{meas_setting} satisfy Eq.~\eqref{eq:DI_condition_pure}, it is feasible to take a CHSH-type Bell value with $\alpha>1$ to estimate the EOF of the state. We leave the proof of Theorem~\ref{thm:DI_condition_pure} in  \ref{appendix:numerical}.

\noindent \textbf{Example.}
We take a special set of parameters in Eq.~\eqref{eq:DI_condition_pure} for an example. Suppose $\theta_1=\pi/2$ and $\theta_3=-\theta_2$, which resemble the optimal measurements in Eq.~\eqref{Bell_opt_meas} in form. Under this setting, we derive an explicit expression of $\alpha_0>1$, such that the estimation $E_{\rm F,est}(S)$ is optimal when taking $\alpha=\alpha_0$ in the CHSH inequality. When $0<\theta_2<\pi/4$, any non-maximally entangled state that satisfies
\begin{equation}
    \sin2\delta>(1+\sqrt{2})\frac{1-\cos\theta_2}{\sin\theta_2}
    \label{DI_condition_pure_delta}
\end{equation}
permits a better EOF estimation characterizing with some $\alpha>1$. When the state and measurements satisfy the condition in Eq.~\eqref{DI_condition_pure_delta}, one obtains the optimally estimated EOF when the parameter $\alpha$ equals
\begin{equation}
    \alpha_E^*=\frac{1}{2}\left(T-\frac{1}{T}\right)>1,
\end{equation}
where we denote $T=\frac{\sin2\delta\sin\theta_2}{1-\cos\theta_2}$. The optimally estimated EOF is then given by
\begin{equation}
    E_{\rm F,est}|_{\alpha_E^*}=\frac{1-\cos\theta_2}{2}(T^2+1).
\end{equation}
It is worth mentioning that if we have the additional assumption that the underlying state is a pair of qubits, we can analytically derive a more accurate estimation result of EOF. We leave the detailed conclusions and examples in  \ref{appendix:numerical}.

\subsection{Werner states}
In the second simulation model, we consider the set of Werner states,
\begin{equation}
\label{werner}
    \rho_{\mathrm{W}}(p)=(1-p)\ketbra{\Phi^+}+p\frac{I}{4},
\end{equation}
where we write $\ket{\Phi^+}=(\ket{00}+\ket{11})/\sqrt{2}$. The Werner state is entangled when $p<2/3$. 
Similarly, for the family of Werner states, there are examples that a non-trivial choice of $\alpha$-CHSH expression gives a better estimation result. In Fig.~\ref{fig:werner_entropy_concurrence}, we present such an example. In the simulation, the underlying system is parameterized by $p=0.05$, and the measurements are parameterized by $\theta_1=\pi/2,\theta_2=\pi/6$ and $\theta_3=-\pi/6$. The optimal estimation of negative conditional entropy is obtained with the value of $\alpha$ roughly in the range of $[1.2,1.4]$, while the optimal estimation of EOF is obtained when $\alpha\in [1,1.2]$. We derive an analytical result for the feasible region of state and measurement parameters that permits a better EOF estimation for a non-trivial value of $\alpha>1$.

\begin{figure}[hbt!]
    \centering
    \includegraphics[width=0.48\textwidth]{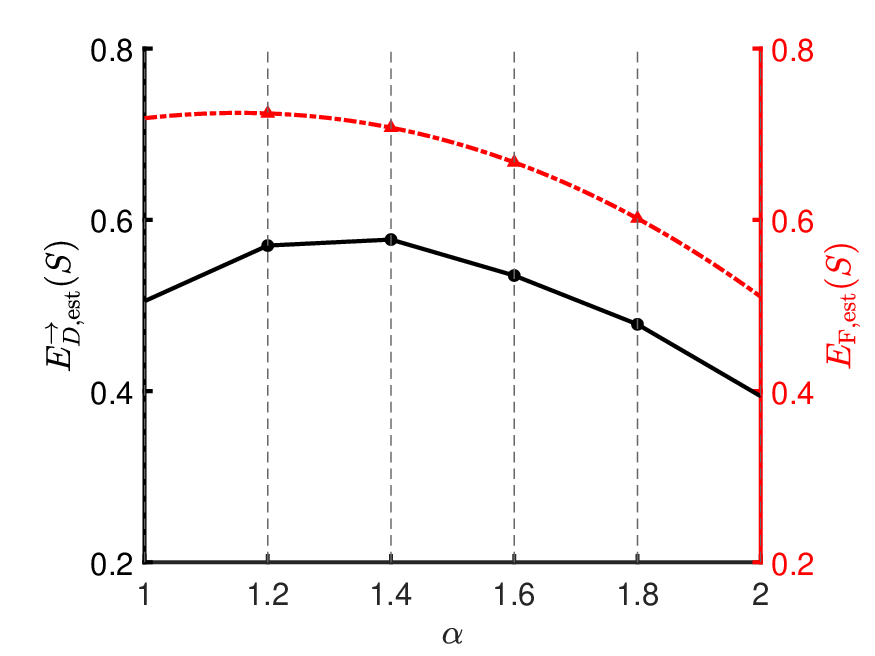}
    \caption{Entanglement estimation results for nonlocal correlations arising from Werner states. The experimental setting is given by $p=0.05,\theta_1=\pi/2,\theta_2=\pi/6$, and $\theta_3=-\pi/6$. We depict the entanglement estimation results using different $\alpha$-CHSH Bell expressions. We plot the estimated values of one-way distillable entanglement and EOF with the black solid line and the red dashed line, respectively.}
\label{fig:werner_entropy_concurrence}
\end{figure}

\begin{theorem}
\label{thm:DI_condition_werner}
    In a Bell test experiment, suppose the underlying state of the system takes the form of Eq.~\eqref{werner}, and the observables take the form of Eq.~\eqref{meas_setting}. For EOF estimation solely from the violation values of $\alpha$-CHSH Bell inequalities, if $\theta_1,\theta_2,\theta_3$ and $p$ satisfy
\begin{equation}
    (1-p)[\sin\theta_1(\sin\theta_2-\sin\theta_3)+\cos\theta_2(\sqrt{2}+1+\cos\theta_1)+\cos\theta_3(\sqrt{2}+1-\cos\theta_1)]>2(1+\sqrt{2}),
    \label{eq:DI_relation_werner}
\end{equation}
then there exists $\alpha>1$, where a better estimation of $E_{\rm F,est}(S)$ can be obtained by using the $\alpha$-CHSH inequality parameterized by this value than by using the original CHSH inequality (corresponding to $\alpha=1$).
\end{theorem}

Theorem~\ref{thm:DI_condition_werner} indicates that the nonlocality depicted by the original CHSH Bell value does not always provide the most accurate EOF estimation of Werner states. In a Bell test experiment, for Werner states parameter $p$ in Eq.~\eqref{werner} and measurement parameters $\theta_1,\theta_2,\theta_3$ in Eq.~\eqref{meas_setting} satisfy Eq.~\eqref{eq:DI_relation_werner}, it is helpful to take a CHSH-type Bell value with $\alpha>1$ to estimate EOF of the Werner state. We leave the proof and discussion of Theorem \ref{thm:DI_condition_werner} in  \ref{appendix:numerical}.

\noindent\textbf{Example.}
As a special example, we take the measurements setting in Eq.~\eqref{eq:DI_condition_pure} with $\theta_1=\pi/2,\theta_3=-\theta_2$, the same one as we use for the case study of non-maximally entangled states. For $0<\theta_2<\pi/4$, any Werner state in Eq.~\eqref{werner} with $p$ satisfying
\begin{equation}
    p<1-\frac{1}{(\sqrt{2}-1)\sin\theta_2+\cos\theta_2}
    \label{DI_condition_werner_p}
\end{equation}
promises a better estimation result of $E_{\rm F,est}$ by using an $\alpha$-CHSH expression with $\alpha>1$ in comparison with $\alpha=1$. The right hand side of Eq.~\eqref{DI_condition_werner_p} is upper bounded by $1-(\sqrt{2}+1)/(\sqrt{4+2\sqrt{2}})$. That is, only a Werner state with $p\leq 1-(\sqrt{2}+1)/(\sqrt{4+2\sqrt{2}})\doteq 0.0761$ is possible to yield the condition in Theorem \ref{thm:DI_condition_werner}. Denote $T=\frac{(1-p)\sin\theta_2}{1-(1-p)\cos\theta_2}$. Then for the underlying system of a Werner state and measurements satisfying Eq.~\eqref{DI_condition_werner_p}, the EOF estimation result reaches its optimal value with parameter $\alpha$
\begin{equation}
    \alpha_E^*=\frac{1}{2}\left(T-\frac{1}{T}\right)>1,
\end{equation}
and the estimation result is
\begin{equation}
    E_{\rm F,est}|_{\alpha_C^*}=1-\frac{p(2-p)}{2[1-(1-p)\cos\theta_2]}.
\end{equation}
Similarly, with an additional assumption on system dimension, we can obtain a more accurate EOF estimation result. We leave the details in  \ref{appendix:numerical}.

\section{Conclusions and discussion}
In this work, we study entanglement estimation via nonlocality, where we consider several entanglement measures for a family of generalized CHSH-type expressions. This family of Bell expressions allows us to effectively reduce the dimension of an unknown system to a pair of qubits, leading to results for particular entanglement measures like negativity, EOF, and one-way distillable entanglement. Under this framework, we also investigate the interplay among entanglement, measurement incompatibility, and nonlocality. While entanglement and measurement incompatibility are both necessary conditions for a nonlocal behavior, under a given nonlocal behavior, their interplay can be subtler than a simple trade-off relation. Given a Bell value, the measurements that require the minimum entanglement are not the most incompatible measurements in general. In addition, we also apply the entanglement estimation results in realistic scenarios. For non-maximally entangled states and Werner states, we analytically show that there exist state and measurements settings where a general CHSH Bell expression with $\alpha>1$ leads to better EOF estimation of the underlying state than the original CHSH expression.

When quantifying entanglement from nonlocality, the estimation results highly depend on the specific entanglement measures. Before our work, there are similar investigations focusing on different entanglement measures~\cite{moroder2013device,toth2015evaluating,arnon2017noise,chen2018exploring,arnon2019device}. A natural question is how nonlocality reflects different entanglement properties. Among various entanglement measures, a notable one is the distillable entanglement. While the necessity of entanglement for unveiling a nonlocal behavior is well-known, whether the underlying state in such a case can always be distilled into the “gold standard” in entanglement resource theory, namely the maximally entangled states, is not clear, which is known as the Peres conjecture~\cite{peres1999all,arnon2021upper}. When considering the entanglement estimation problem, the subtlety resides in that general entanglement distillation allows two-way communication, going beyond the one-way distillable entanglement estimation results in our work and Ref.~\cite{arnon2019device}. In addition, the problem is more involved in a multipartite scenario~\cite{masanes2006asymptotic}. The exact relation between nonlocality and entanglement distillability remains to be explored.

In studying the interplay of entanglement and measurement incompatibility under a given nonlocal behavior, we make some additional assumptions on the measurement observables to ease the quantification of measurement incompatibility. In the sense of a fully DI discussion, one may consider other incompatibility measures, such as the robustness of measurement incompatibility~\cite{designolle2019incompatibility}. Despite the freedom in measuring entanglement and measurement incompatibility, we believe our results unveil the subtlety of the interplay between these nonclassical notions, where more incompatible measurements may not compensate for the absence of entanglement and vice versa. From a resource-theoretic perspective, our results may indicate restrictions on the resource transformation between entanglement and measurement incompatibility in the sense of Bell nonlocality.

When applying our results to experiments, one may consider the practical issues in more detail. For instance, the problem of entanglement estimation via nonlocality can be generalized to the one-shot regime, where one considers dilution and distillation processes with a finite number of possibly non-i.i.d. quantum states. Notably, the results in Ref.~\cite{arnon2019device} provide an approach to estimating one-shot one-way distillable entanglement via nonlocality, and the techniques in Ref.~\cite{buscemi2011entanglement} may be applicable to the estimation of one-shot entanglement cost. We leave research in this direction for future works.

\ack
This work was supported by the National Natural Science Foundation of China Grant No.~12174216 and the Innovation Program for Quantum Science and Technology Grant No.~2021ZD0300804.

Y.Z.~and X.Z.~contributed equally to this work.

\appendix

\section{Reductions of the original optimization problem}
In Sec.~\ref{sc:framework}, we reduce the original optimization problem, including the essential steps of using Jordan's lemma to bypass the dimension problem and reducing a general two-qubit state to the Bell-diagonal state. Here we explain the two steps in detail.

\subsection{Jordan's lemma}\label{appendix:jordan}

Before applying Jordan's lemma to effectively reduce the system dimension, we first note that without loss of generality, the underlying measurements in a Bell test can be taken as projective in a DI analysis. Consider either side of the Bell test, where a measurement setting is characterized by a POVM with elements $\{\hat{M}_i\}_i$ and acts on the state $\rho$. Then, by applying Naimark's dilation theorem~\cite{neumark1943representation}, there exists a quantum channel $F$ that embeds $\rho$ with an ancillary state into a quantum state in some high-dimension Hilbert space, such that for every $\hat{M}_i$, there exists a projector $\hat{V}_i$,
\begin{equation}
    \Tr (\rho \hat{M}_i)=\Tr[\hat{V}_iF(\rho)\hat{V}_i],
\end{equation}
and $\{\hat{V}_i\}_i$ forms a valid projective measurement with $\sum_i \hat{V}_i=\hat{I}$. The dilation only takes a local ancillary state and does not change the measurement statistics. Therefore, in a fully DI setting, one can simply restrict the analysis to projective measurements. In the CHSH-type Bell test, we are dealing with binary observables. Since they can be taken as projective, they can also be described by Hermitian operators with eigenvalues $\pm1$.

We apply Jordan's lemma to bypass the system dimension problem. The description of Jordan's lemma is given below, with the proof can be found in~\cite{Pironio2009device}.

\begin{lemma}
    Suppose $\hat{A}_0$ and $\hat{A}_1$ are two Hermitian operators with eigenvalues $\pm 1$ that act on a Hilbert space with a finite or countable dimension, $\mathcal{H}$. Then there exists a direct-sum decomposition of the system, $\mathcal{H}=\bigoplus\mathcal{H}^{\mu}$, such that $\hat{A}_0=\bigoplus\hat{A}_0^{\mu}$, $\hat{A}_1=\bigoplus\hat{A}_1^{\mu}$, $\hat{A}_0^{\mu},\hat{A}_1^{\mu}\in\mathcal{L}(\mathcal{H}^{\mu})$, where the sub-systems satisfy $\operatorname{dim}\mathcal{H}^{\mu}\leq 2,\forall {\mu}$.
\end{lemma}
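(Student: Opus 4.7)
The plan is to reduce the problem to the commutative case by identifying a Hermitian operator that lies in the commutant of both $\hat{A}_0$ and $\hat{A}_1$, and then to exploit the small algebra that remains on each of its eigenspaces.

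First, I would introduce the anticommutator $\hat{C} := \hat{A}_0\hat{A}_1 + \hat{A}_1\hat{A}_0$, which is Hermitian because $(\hat{A}_0\hat{A}_1)^\dagger = \hat{A}_1\hat{A}_0$. A short calculation using $\hat{A}_0^2 = \hat{A}_1^2 = I$ gives $\hat{A}_0\hat{C} = \hat{A}_1 + \hat{A}_0\hat{A}_1\hat{A}_0 = \hat{C}\hat{A}_0$, and symmetrically $\hat{A}_1\hat{C} = \hat{C}\hat{A}_1$, so $\hat{C}$ commutes with both observables. Since $\hat{A}_0\hat{A}_1$ is unitary with spectrum on the unit circle, $\hat{C} = \hat{A}_0\hat{A}_1 + (\hat{A}_0\hat{A}_1)^\dagger$ has spectrum contained in $[-2,2]$.

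Second, I would apply the spectral theorem to $\hat{C}$ to obtain a decomposition $\mathcal{H} = \bigoplus_s \mathcal{H}_s$ into eigenspaces, each of which is a joint invariant subspace of $\hat{A}_0$ and $\hat{A}_1$ by the commutation established above. On each $\mathcal{H}_s$ the defining relations reduce to $\hat{A}_0^2 = \hat{A}_1^2 = I$ together with $\{\hat{A}_0,\hat{A}_1\} = sI$, so the algebra generated by the two observables collapses to a representation of a low-dimensional Clifford-type algebra. I would then split $\mathcal{H}_s$ according to the value of $s$. In the degenerate cases $s = \pm 2$ the identity $(\hat{A}_0 \mp \hat{A}_1)^2 = 0$ together with self-adjointness forces $\hat{A}_0 = \pm\hat{A}_1$, so $\mathcal{H}_s$ decomposes into one-dimensional common eigenspaces. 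In the generic range $|s|<2$, I would pick an eigenvector $\ket{v}$ of $\hat{A}_0|_{\mathcal{H}_s}$ with eigenvalue $\epsilon=\pm 1$ and set $\mathcal{H}^\mu := \operatorname{span}\{\ket{v},\hat{A}_1\ket{v}\}$. Invariance under $\hat{A}_1$ is immediate from $\hat{A}_1^2=I$, and invariance under $\hat{A}_0$ follows from
\begin{equation*}
\hat{A}_0\hat{A}_1\ket{v} = s\ket{v} - \hat{A}_1\hat{A}_0\ket{v} = s\ket{v} - \epsilon\,\hat{A}_1\ket{v}.
\end{equation*}
Iterating on the orthogonal complement within $\mathcal{H}_s$ yields blocks of dimension at most two, with one-dimensional blocks in the exceptional case where $\hat{A}_1\ket{v}$ is already proportional to $\ket{v}$.

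The main obstacle is the countably infinite-dimensional case, where $\hat{C}$ need not have pure point spectrum, so the informal ``$\bigoplus_s \mathcal{H}_s$'' must be interpreted as a direct integral against the spectral measure of $\hat{C}$. The ``pick an eigenvector'' step then has to be carried out measurably in $s$, which requires a standard measurable-selection argument applied separately on the spectral intervals $s=2$, $s=-2$, and $|s|<2$. Since the downstream use of the lemma in the main text only exploits the block structure in order to view the measurement process as a classical mixture $\{p^\mu,\rho_{AB}^\mu\}$ of qubit-pair measurements, these measure-theoretic subtleties can be absorbed into the probability distribution $p^\mu$ without affecting the subsequent entanglement-quantification arguments.
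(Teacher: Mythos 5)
Your argument is correct in its essential core, but it takes a somewhat different route from the paper's. The paper itself only states the lemma and defers the proof to Ref.~\cite{Pironio2009device}; the argument used there (and sketched in this paper's source) works directly with the unitary product $\hat{A}_1\hat{A}_0$: one picks an eigenvector $\ket{\mu}$ of that unitary and verifies that $\operatorname{Span}\{\ket{\mu},\hat{A}_1\ket{\mu}\}$ is invariant under both observables, which immediately yields the blocks of dimension at most two. You instead introduce the Hermitian anticommutator $\hat{C}=\hat{A}_0\hat{A}_1+\hat{A}_1\hat{A}_0$, decompose along its spectrum, and build the blocks $\operatorname{Span}\{\ket{v},\hat{A}_1\ket{v}\}$ from eigenvectors of $\hat{A}_0$ inside each $\hat{C}$-eigenspace (such eigenvectors always exist, $\hat{A}_0$ being an involution). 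The two constructions produce the same blocks — note $\hat{C}=U+U^\dagger$ with $U=\hat{A}_0\hat{A}_1$, so your eigenspaces of $\hat{C}$ are the paper's eigenspaces of $U$ with the phases $e^{\pm i\phi}$ lumped together — but your version cleanly isolates where the only nontrivial spectral input sits and treats the degenerate cases $s=\pm 2$ (where $\hat{A}_0=\pm\hat{A}_1$) explicitly, whereas the paper's route is shorter. Two minor remarks: for $|s|<2$ your ``exceptional'' one-dimensional case cannot actually occur, since $\hat{A}_1\ket{v}\propto\ket{v}$ would force $s=\pm 2$; and your caveat about the countably-infinite-dimensional case is well taken — if $\hat{A}_0\hat{A}_1$ has continuous spectrum, a literal direct-sum decomposition into blocks of dimension at most two does not exist, and the cited eigenvector-based proof (which asserts that eigenvectors of the unitary span $\mathcal{H}$) is subject to exactly the same caveat, so you are in fact more explicit than the paper in flagging that the honest resolution is a direct-integral refinement (or a finite-dimensionality assumption), which does not affect how the lemma is used downstream.
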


Jordan's lemma guarantees that the two possible observables measured by Alice can be represented as
\begin{equation}
  \hat{A}_x=\sum_{\mu}\hat{\Pi}^{\mu_A}\hat{A}_x\hat{\Pi}^{\mu_A}=\bigoplus_{\mu_A}\hat{A}_x^{\mu_A},
\end{equation}
where $x\in\{0,1\}$, $\hat{\Pi}^{\mu_A}$ are projectors onto orthogonal subspaces with dimension no larger than $2$, and $\hat{A}_x^{\mu_A}$ are qubit observables with eigenvalues $\pm1$. A similar representation applies to Bob's measurement observables. Due to the direct-sum representation, one can regard the measurement process as first applying a block-dephasing operation to the underlying quantum system. Consequently, one can equivalently regard the measurement process as measuring the following state,
\begin{equation}
    \bar{\rho}_{AB}=\sum_{\mu}(\hat{\Pi}^{\mu_A}\otimes\hat{\Pi}^{\mu_B})\rho_{AB}(\hat{\Pi}^{\mu_A}\otimes\hat{\Pi}^{\mu_B})=\bigoplus_{\mu}p^{\mu}\rho_{AB}^{\mu}.
\end{equation}
Here we relabel the indices with $\mu\equiv\{\mu_A,\mu_B\}$. As the block-dephasing operators act locally on each side, the measurement process does not increase entanglement in the system. Therefore, we can lower-bound the amount of entanglement in the initial system by studying the average amount of entanglement in the ensemble of qubit-pairs, $\{p^{\mu},\rho_{AB}^{\mu}\}$. 

Consequently, the expected CHSH Bell value in a test is the linear combination of the Bell values for the qubit pairs, $S=\sum_{\mu}p^{\mu}S^{\mu}$. Note that an observer cannot access to the probability distribution, $p^\mu$, and the Bell values for each pair of qubits, $S^\mu$, but only the expected Bell value, $S$, hence the final DI entanglement estimation result should be a function of $S$. On the other hand, we shall first derive entanglement estimation results for each pair of qubits in the form of $E_{\rm est}(S^\mu)$. It is thus essential to consider the convexity of the function, $E_{\rm est}$. If the function is not convex in its argument, i.e., $E_{\rm est}(\sum_\mu p^\mu S^\mu)$ is not smaller than $\sum_\mu p^\mu E_{\rm est}(S^\mu)$, one needs to take the convex closure of $E_{\rm est}$ to obtain a valid lower bound that holds for all possible configurations giving rise to the expected Bell value, $S$.

\subsection{Restriction to Bell-diagonal states}
\label{appendix:Bell-diagonal}
Following the route in Fig.~\ref{fig:flowchart}, the feasible region of the state variables in an entanglement estimation problem can be effectively restricted to the set of Bell-diagonal states on the two qubit systems. We present the following lemma.
\begin{lemma}
    Suppose the underlying system in a CHSH-type Bell test lies in a two-qubit state, $\rho_{AB}$. Then there exists an LOCC that transforms $\rho_{AB}$ into an ensemble of Bell states, $\rho_\lambda$, without changing the expected Bell value.
\end{lemma}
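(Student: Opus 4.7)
The strategy is to construct an explicit LOCC—a correlated Pauli twirl shared between the two parties—and verify that (i) its output is Bell-diagonal and (ii) the expected $\alpha$-CHSH value is left invariant. The starting observation is structural: for the optimal measurements of Eq.~\eqref{Bell_opt_meas}, the combinations $\hat B_0+\hat B_1 = 2\cos\theta\,\sigma_z$ and $\hat B_0-\hat B_1 = 2\sin\theta\,\sigma_x$ collapse the $\alpha$-CHSH operator into a two-term expression,
\[
\hat S_\alpha = 2\alpha\cos\theta\,\sigma_z\otimes\sigma_z + 2\sin\theta\,\sigma_x\otimes\sigma_x.
\]
Writing $\rho_{AB}$ in the Fano form $\rho_{AB} = \tfrac14\bigl(I\otimes I + \vec a\!\cdot\!\vec\sigma\otimes I + I\otimes \vec b\!\cdot\!\vec\sigma + \sum_{kl}T_{kl}\,\sigma_k\otimes\sigma_l\bigr)$, the Bell expectation therefore depends on $\rho_{AB}$ \emph{only} through the two correlators $T_{xx}$ and $T_{zz}$. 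Hence any LOCC that preserves these two numbers while sending $\rho_{AB}$ into Bell-diagonal form will complete the proof.

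The LOCC I would employ is the correlated Pauli twirl $\mathcal T(\rho) = \tfrac14\sum_{i\in\{0,x,y,z\}}(\sigma_i\otimes\sigma_i)\,\rho\,(\sigma_i\otimes\sigma_i)$, which is implemented through one shared classical dit: the parties sample $i$ uniformly at random over the four Paulis, exchange it classically, and each applies $\sigma_i$ locally. Using the sign rule $\sigma_i\sigma_k\sigma_i=\epsilon_{ik}\sigma_k$ with $\epsilon_{ik}=+1$ when $i\in\{0,k\}$ and $\epsilon_{ik}=-1$ otherwise, a short calculation yields the orthogonality identities $\tfrac14\sum_i\epsilon_{ik}=0$ and $\tfrac14\sum_i\epsilon_{ik}\epsilon_{il}=\delta_{kl}$ on $\{x,y,z\}$. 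These identities annihilate every Bloch-vector coefficient in $\vec a$ and $\vec b$ and every off-diagonal correlator $T_{kl}$ with $k\ne l$, while leaving the three diagonal correlators $T_{xx}$, $T_{yy}$, $T_{zz}$ untouched. Thus $\mathcal T(\rho_{AB})$ is automatically of the form of Eq.~\eqref{Bell_diagonal}, i.e., a convex mixture of the four Bell states, with eigenvalues read off as the four sign-combinations $\tfrac14(1\pm T_{xx}\pm T_{yy}\pm T_{zz})$ dictated by the action of $\sigma_x\otimes\sigma_x$, $\sigma_y\otimes\sigma_y$, and $\sigma_z\otimes\sigma_z$ on the Bell basis.

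Combining the two ingredients closes the argument: because $T_{xx}$ and $T_{zz}$ are preserved by $\mathcal T$, so is $\Tr[\rho_{AB}\hat S_\alpha]$, which is exactly the unchanged $\alpha$-CHSH value claimed. The one piece that needs real care is the sign bookkeeping in the nine conjugations $\sigma_i\sigma_k\sigma_i$; I would tabulate the full $4\times 3$ table of signs once at the outset and cite it both when verifying the annihilation of the unwanted Fano components and when checking that each individual $\sigma_i\otimes\sigma_i$ already fixes $T_{xx}$ and $T_{zz}$. This same reusable table also makes the connection to the construction of Ref.~\cite{Pironio2009device} transparent, and feeds directly into Lemma~\ref{lemma:maximal_value} by exhibiting the Bell-basis eigenvalues as explicit linear functions of $T_{xx}$, $T_{yy}$, $T_{zz}$.
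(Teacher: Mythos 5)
Your twirl $\mathcal{T}$ is a legitimate LOCC and its output is indeed Bell-diagonal (the sign bookkeeping you worry about is routine and not the issue), but the invariance step has a genuine gap: you assume from the outset that the Bell operator collapses to $2\alpha\cos\theta\,\sigma_z\otimes\sigma_z+2\sin\theta\,\sigma_x\otimes\sigma_x$, i.e.\ that the measurements already take the symmetric form of Eq.~\eqref{Bell_opt_meas}. In the paper that form is the set of optimal measurements \emph{for a Bell-diagonal state} --- precisely the class of states this lemma is supposed to let one restrict to --- so invoking it here is circular. In the situation the lemma must cover, the two boxes' observables are arbitrary; after fixing local computational bases one can only say they lie in the $x$--$z$ plane, so $\hat{S}_\alpha$ generically contains $\sigma_z\otimes\sigma_x$ and $\sigma_x\otimes\sigma_z$ terms (this happens whenever Alice's two observables are not orthogonal, or Bob's pair is not symmetric about Alice's $\hat{A}_0$ direction). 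Your four-element Bell twirl annihilates exactly the cross correlators $T_{zx}$ and $T_{xz}$ (together with the local Bloch vectors), so under the same measurements the expected Bell value is \emph{not} preserved. It can even be destroyed outright: take a maximally entangled state whose correlation matrix in the chosen frames is a rotation with small diagonal; $\mathcal{T}$ maps it to a Bell-diagonal state with correlation matrix $\operatorname{diag}(T_{xx},T_{yy},T_{zz})$ that violates no CHSH inequality, while the original state supports $S=2\sqrt{2}$ with suitable measurements.

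This is exactly why the paper does not use the full twirl. Its Step 1 is only the two-element twirl over $\{I\otimes I,\ \sigma_y\otimes\sigma_y\}$, which preserves \emph{all four} in-plane correlators $T_{11},T_{13},T_{31},T_{33}$ and hence the Bell value for arbitrary plane measurements; the remaining off-diagonal Bell-basis entries are then removed by local $R_y$ rotations and mixing with the complex conjugate, operations that keep the achievable Bell value and do not increase entanglement. Your route could be repaired, but only by supplying the missing reduction: first pass to the optimal measurements of the \emph{given} state (the duality step) and use the singular-value decomposition of its correlation matrix to choose local frames in which $T$ is diagonal and the optimal Bell operator really has your two-term form; only then does the twirl preserve the value, and one must still argue that reproducing the possibly suboptimal observed value $S$ remains feasible. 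As written, the claim that the Bell expectation depends on $\rho_{AB}$ only through $T_{xx}$ and $T_{zz}$ is unjustified for the general two-qubit state and arbitrary measurements that the lemma is meant to handle.
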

\begin{proof}
     In a CHSH-type Bell test, we transform an arbitrary pair of qubits $\rho_{AB}$ into a Bell-diagonal state $\rho_\lambda$ via three steps of LOCC. In each step, We verify that the $\alpha$-CHSH Bell values are equal for the states before and after the transformation with the same measurements.
    
    \textit{Step 1:} 
    In a CHSH Bell test, Alice and Bob fix their local computational bases, or, the axes of the Bloch spheres on each side. As there are only two observables on each side, one can represent them on the $x-z$ plane of the Bloch sphere without loss of generality. Then Alice and Bob flip their measurement results simultaneously via classical communication with probability $1/2$. This operation can be interpreted as transforming $\rho_{AB}$ into the following state,
    \begin{equation}
        \rho_1=\frac{1}{2}[\rho_{AB}+(\sigma_2\otimes\sigma_2)\rho_{AB}(\sigma_2\otimes\sigma_2)].
    \end{equation}
    To avoid confusion about the subscripts, we use the following convention to denote the Pauli operators in the Appendix,
    \begin{equation}
    \begin{split}
        \sigma_x&\equiv\sigma_1, \\
        \sigma_y&\equiv\sigma_2, \\
        \sigma_z&\equiv\sigma_3.
    \end{split}
    \end{equation}
    Under the Bell basis determined by the local computational bases, $\{\ket{\Phi^+},\ket{\Psi^-},\ket{\Phi^-},\ket{\Psi^+}\}$, $\rho_1$ can be denoted as
    \begin{equation}
    \rho_1=\begin{bmatrix} 
        \lambda_{\Phi^+} &l_1e^{i\phi_1} &0 &0 \\
        l_1e^{-i\phi_1} &\lambda_{\Psi^-} &0 &0 \\
        0 &0 &\lambda_{\Psi^+} &l_2e^{i\phi_2} \\
        0 &0 &l_2e^{-i\phi_2} &\lambda_{\Phi^-}
        \end{bmatrix},
    \end{equation}
where $\lambda_i,i=\Phi^\pm,\Psi^\pm$ are  eigenvalues of the corresponding Bell state and $l_1,l_2,\phi_1,\phi_2$ are off-diagonal parameters. It can be verified that the statistics of measuring $\sigma_i\otimes\sigma_j$ for $i,j=1,3$ are invariant under the operation of $\sigma_2\otimes\sigma_2$. Thus 
\begin{equation}
    \Tr [(\sigma_2\otimes\sigma_2)\rho_{AB}(\sigma_2\otimes\sigma_2)(\hat{A}_x\otimes\hat{B}_y)]=\Tr [\rho_{AB}(\hat{A}_x\otimes\hat{B}_y)]
\end{equation}
for $\hat{A}_x$ and $\hat{B}_y$, $x,y=0,1$, which indicates that $\rho_1$ and $\rho_{AB}$ share the common Bell value.

\textit{Step 2:} In this step, we apply LOCC to transform $\rho_1$ into a state where the off-diagonal terms on the Bell basis become imaginary numbers. For this purpose, Alice and Bob can each apply a local rotation around the $y$-axes of the Bloch spheres on their own systems for an angle $\theta$ by
\begin{equation}
    R_y(\theta)=\cos{\frac{\theta}{2}}I+i\sin{\frac{\theta}{2}}\sigma_2=\left(
    \begin{smallmatrix}
    \cos{\frac{\theta}{2}} & \sin{\frac{\theta}{2}} \\
    -\sin{\frac{\theta}{2}} & \cos{\frac{\theta}{2}}
    \end{smallmatrix}
    \right),
\end{equation}
with its action on a general observable characterized by $\gamma$ residing in the $x-z$ plane given by
    \begin{equation}
        R_y(\theta)(\cos\gamma\sigma_1+\sin\gamma\sigma_3)=\cos(\gamma+\frac{\theta}{2})\sigma_1+\sin(\gamma+\frac{\theta}{2})\sigma_3.
        \label{rotation}
    \end{equation}
After applying the operation, the resulting state becomes $\rho_2=[R_y(\alpha)\otimes R_y(\beta)]\rho_1[R_y(-\alpha)\otimes R_y(-\beta)]$, where the off-diagonal terms undergo the following transformations,
\begin{gather}
    l_1e^{i\phi_1}\rightarrow\frac{1}{2}(\lambda_{\Phi^+}-\lambda_{\Psi^-})\sin(\alpha-\beta)+l_1\cos\phi_1\cos(\alpha-\beta)+l_1\sin\phi_1 i,\label{terms_rotation1}\\
    l_2e^{i\phi_2}\rightarrow\frac{1}{2}(\lambda_{\Phi^-}-\lambda_{\Psi^+})\sin(\alpha+\beta)+l_2\cos\phi_2\cos(\alpha+\beta)+l_2\sin\phi_2 i.\label{terms_rotation2}
\end{gather}
By choosing $\alpha$ and $\beta$ properly, the real parts in the off-diagonal terms of $\rho_2$ can be eliminated. Similarly, as in the \textit{Step 1}, the measurement of $\sigma_i\otimes\sigma_j$ for $i,j=1,3$ remains invariant under local rotations around the $y$-axes, which indicates $\rho_2$ and $\rho_1$ give the same Bell value under the same measurements.

\textit{Step 3:}
Note that $\rho_2$ and $\rho_2^*$ give the same Bell value under the given measurements,
\begin{equation}
    \Tr[\rho_2(\sigma_i\otimes\sigma_j)]=\Tr[\rho_2^*(\sigma_i\otimes\sigma_j)],i,j=1,3.
\end{equation}
Hence without loss of generality, one can take the underlying state in the Bell test as $\rho_{\lambda}=(\rho_2+\rho_2^*)/2$, which is a Bell-diagonal state. 
\end{proof}

Based on the above simplification, we represent Eq.~\eqref{equivalent_optm} under Bell-diagonal states, which leads to the following lemma.
\begin{lemma}
The maximal value of the $\alpha$-CHSH expression in Eq.~\eqref{CHSH_type} for a Bell-diagonal state shown in Eq.~\eqref{Bell_diagonal}, $\rho_{\lambda}$, is given by
\begin{equation}
    S= 2\sqrt{\alpha^2(\lambda_{1}+\lambda_{2}-\lambda_{3}-\lambda_{4})^2+(\lambda_{1}-\lambda_{2}+\lambda_{3}-\lambda_{4})^2},
    \label{eq:maximal_value_appendix}
\end{equation}
where $\lambda_i$ is the $i$-th largest eigenvalue of $\rho_{\lambda}$.
\end{lemma}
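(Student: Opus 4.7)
The plan is to exploit the Bloch representation of a Bell-diagonal state and reduce the maximization to a finite-dimensional geometric problem. First I would expand each Bell projector in the Pauli basis to obtain
\begin{equation}
\rho_\lambda=\frac{1}{4}\Bigl(I\otimes I+T_{11}\,\sigma_x\otimes\sigma_x+T_{22}\,\sigma_y\otimes\sigma_y+T_{33}\,\sigma_z\otimes\sigma_z\Bigr),
\end{equation}
with correlation-matrix entries $T_{11}=\lambda_1-\lambda_2+\lambda_3-\lambda_4$, $T_{22}=-\lambda_1+\lambda_2+\lambda_3-\lambda_4$, and $T_{33}=\lambda_1+\lambda_2-\lambda_3-\lambda_4$. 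Since entanglement quantification is already reduced to qubit observables, I parametrize $\hat A_x=\vec a_x\cdot\vec\sigma$ and $\hat B_y=\vec b_y\cdot\vec\sigma$ with unit vectors, so that $\Tr[\rho_\lambda(\hat A_x\otimes\hat B_y)]=\vec a_x^{\,T}T\vec b_y$ for the diagonal matrix $T=\mathrm{diag}(T_{11},T_{22},T_{33})$.

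Next I would group terms as $\hat S_\alpha=\alpha\hat A_0\otimes(\hat B_0+\hat B_1)+\hat A_1\otimes(\hat B_0-\hat B_1)$ and introduce $\vec c=\vec b_0+\vec b_1$, $\vec d=\vec b_0-\vec b_1$. Because $|\vec b_0|=|\vec b_1|=1$, one has $\vec c\cdot\vec d=0$ and $|\vec c|^2+|\vec d|^2=4$; writing $|\vec c|=2\cos\theta$, $|\vec d|=2\sin\theta$, $\vec c=2\cos\theta\,\hat c$, $\vec d=2\sin\theta\,\hat d$ with $\hat c\perp\hat d$, the Bell value takes the form
\begin{equation}
S=2\alpha\cos\theta\,\vec a_0^{\,T}T\hat c+2\sin\theta\,\vec a_1^{\,T}T\hat d.
\end{equation}
Maximizing over Alice's unit vectors (aligning $\vec a_0\parallel T\hat c$ and $\vec a_1\parallel T\hat d$) gives $S=2\alpha\cos\theta\,|T\hat c|+2\sin\theta\,|T\hat d|$, and optimizing over $\theta$ via the elementary inequality $p\cos\theta+q\sin\theta\le\sqrt{p^2+q^2}$ yields
\begin{equation}
S_{\max}(\hat c,\hat d)=2\sqrt{\alpha^2|T\hat c|^2+|T\hat d|^2}.
\end{equation}

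The remaining task is to maximize $\alpha^2\hat c^{\,T}T^2\hat c+\hat d^{\,T}T^2\hat d$ over orthonormal pairs $(\hat c,\hat d)$ in $\mathbb{R}^3$. Since $T^2$ is diagonal with entries $T_{ii}^2$ and $\alpha\ge 1$, the optimum is attained by placing $\hat c$ along the eigenvector of $T^2$ with the largest eigenvalue and $\hat d$ along the eigenvector with the second largest. The main (but routine) obstacle is therefore to identify which two of $\{|T_{11}|,|T_{22}|,|T_{33}|\}$ dominate under the ordering $\lambda_1\ge\lambda_2\ge\lambda_3\ge\lambda_4$. I would verify by a short case analysis on the sign of $T_{22}$ that $|T_{33}|\ge|T_{11}|\ge|T_{22}|$: the inequality $|T_{33}|\ge|T_{11}|$ is immediate ($|T_{33}|-|T_{11}|=2(\lambda_2-\lambda_3)\ge 0$), while comparing with $|T_{22}|$ in both sign cases reduces to $\lambda_2\ge\lambda_4$, $\lambda_1\ge\lambda_3$, $\lambda_3\ge\lambda_4$, or $\lambda_1\ge\lambda_2$, all guaranteed by the ordering.

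Combining these steps, the maximum is
\begin{equation}
S_{\max}=2\sqrt{\alpha^2 T_{33}^2+T_{11}^2}=2\sqrt{\alpha^2(\lambda_1+\lambda_2-\lambda_3-\lambda_4)^2+(\lambda_1-\lambda_2+\lambda_3-\lambda_4)^2},
\end{equation}
as claimed. The attaining measurements can be read off directly: $\vec a_0,\vec a_1$ are the canonical $z$- and $x$-axes of the Bloch sphere, while $\vec b_0,\vec b_1$ are obtained from $\hat c$ along $\hat z$, $\hat d$ along $\hat x$ with the optimal $\theta$ given by $\tan\theta=T_{11}/(\alpha T_{33})$, consistent with Eq.~\eqref{Bell_opt_meas}.
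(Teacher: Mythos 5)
Your proof is correct and follows essentially the same route as the paper's: expanding $\rho_\lambda$ in the Pauli basis to get the diagonal correlation matrix $T$, introducing the orthonormal pair via $\vec b_0\pm\vec b_1$ with the $\cos\theta,\sin\theta$ split, aligning Alice's Bloch vectors with $T\hat c$ and $T\hat d$, and then showing $|T_{33}|\geq|T_{11}|\geq|T_{22}|$ under the eigenvalue ordering (the Horodecki-type argument of Ref.~\cite{horodecki1995violating}). The optimal angle $\tan\theta=T_{11}/(\alpha T_{33})$ and the attaining measurements you identify also match Eq.~\eqref{Bell_opt_meas}.
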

\begin{proof}
In an $\alpha$-CHSH Bell test, measurements corresponding to non-degenerate Pauli observables can be expressed as
\begin{gather}
        \hat{A_x}=\Vec{a}_x\cdot\Vec{\sigma},\\
        \hat{B_y}=\Vec{b}_y\cdot\Vec{\sigma},
\end{gather}
where $\Vec{\sigma}=(\sigma_1,\sigma_2,\sigma_3)$ are three Pauli matrices, and $\Vec{a}_x=(a_x^1,a_x^2,a_x^3)$ and $\Vec{b}_y=(b_y^1,b_y^2,b_y^3)$ are unit vectors for $x,y=0,1$. A Bell-diagonal state shown in Eq.~\eqref{Bell_diagonal} can be expressed on the Hilbert-Schmidt basis as
\begin{equation}
    \rho_{\lambda}=\frac{1}{4}\left(I+\sum_{i,j=1}^3 T_{\lambda,ij}\sigma_i\otimes\sigma_j\right),
\end{equation}
where 
\begin{equation}
    T_{\lambda}=
    \begin{bmatrix} 
        (\lambda_1 + \lambda_3)-(\lambda_2 +\lambda_4) &0 &0  \\
        0 &(\lambda_3 + \lambda_2)-(\lambda_1 + \lambda_4) &0 \\
        0 &0 &(\lambda_1 + \lambda_2)-(\lambda_3 + \lambda_4)
    \end{bmatrix}
    \label{T_lambda}
\end{equation}
is a diagonal matrix. The $\alpha$-CHSH expression in Eq.~\eqref{CHSH_type} can be expressed in terms of $T_{\lambda}$ as
\begin{equation}
    \begin{aligned}
        &\Tr\left\{\alpha\rho_{\lambda}(\Vec{a}_0\cdot\Vec{\sigma})\otimes[(\Vec{b}_0+\Vec{b}_1)\cdot\Vec{\sigma}]+\rho_{\lambda}(\Vec{a}_1\cdot\sigma)\otimes[(\Vec{b}_0-\Vec{b}_1)\cdot\Vec{\sigma}]\right\}\\
        =&\alpha[\Vec{a}_0\cdot T_{\lambda}(\Vec{b}_0+\Vec{b}_1)]+[\Vec{a}_1\cdot T_{\lambda}(\Vec{b}_0-\Vec{b}_1)].
    \end{aligned}
\end{equation}
Following the method in Ref.~\cite{horodecki1995violating}, we introduce a pair of normalized orthogonal vectors, $\Vec{c}_0$ and $\Vec{c}_1$,
\begin{gather}
    \Vec{b}_0+\Vec{b}_1=2\cos\theta\Vec{c}_0,\\
    \Vec{b}_0-\Vec{b}_1=2\sin\theta\Vec{c}_1
\end{gather}
where $\theta\in[0,\pi/2]$. This gives the maximal $\alpha$-CHSH Bell value,
\begin{equation}
    \begin{aligned}
        S
        &=\max_{\Vec{a}_0,\Vec{a}_1,\Vec{c}_0,\Vec{c}_1,\theta}2\alpha\cos\theta(\Vec{a}_0\cdot T_{\lambda}\Vec{c}_0)+2\sin\theta(\Vec{a}_1\cdot T_{\lambda}\Vec{c}_1).
    \end{aligned}
\end{equation}
The maximization of the Bell value is taken over parameters $\Vec{a}_x,\Vec{b}_y$ for $x,y=0,1$, with the parameters $\lambda_{i}$ fixed. We obtain
\begin{equation}
    \begin{aligned}
        S&=\max_{\Vec{c}_0,\Vec{c}_1,\theta}2\alpha\cos\theta|T_{\lambda}\Vec{c}_0|+2\sin\theta|T_{\lambda}\Vec{c}_1|\\
        &=\max_{\Vec{c}_0,\Vec{c}_1}2\sqrt{\alpha^2|T_{\lambda}\Vec{c}_0|^2+|T_{\lambda}\Vec{c}_1|^2},\label{max_CHSH_vio_Bell_c}
    \end{aligned}
\end{equation}
where the first equality in Eq.~\eqref{max_CHSH_vio_Bell_c} is saturated when $\Vec{a}_x=T_{\lambda}\Vec{c}_x/|T_{\lambda}\Vec{c}_x|, x=0,1$, and the second inequality is saturated when $\tan\theta=|T_{\lambda}\Vec{c}_1|/(\alpha|T_{\lambda}\Vec{c}_0|)$. Since $\alpha>1$ and $\Vec{c}_0$ and $\Vec{c}_1$ are orthonormal vectors, the maximum of the second line in Eq.~\eqref{max_CHSH_vio_Bell_c} is obtained when $|T_{\lambda}\Vec{c}_0|$ and $|T_{\lambda}\Vec{c}_1|$ equal to the absolute values of the largest and the second largest eigenvalues of $T_{\lambda}$, respectively. Without loss of generality, we assume $\lambda_1\geq\lambda_2\geq\lambda_3\geq\lambda_4$ in $\rho_{\lambda}$. This leads to the ordering of the absolute values of the elements of $T_{\lambda}$,
\begin{equation}
\begin{split}
    |T_{\lambda,33}|&=|(\lambda_{1}-\lambda_{4})+(\lambda_{2}-\lambda_{3})|\geq|(\lambda_{1}-\lambda_{4})-(\lambda_{2}-\lambda_{3})|=|T_{\lambda,11}|,\\
    |T_{\lambda,11}|&=|(\lambda_{3}-\lambda_{4})+(\lambda_{1}-\lambda_{2})|\geq|(\lambda_{3}-\lambda_{4})-(\lambda_{1}-\lambda_{2})|=|T_{\lambda,22}|.
\end{split}
\end{equation}
Thus, the second line in Eq.~\eqref{max_CHSH_vio_Bell_c} reaches its maximum when $|\Vec{c}_0|=(0,0,1)^T$ and $|\Vec{c}_1|=(1,0,0)^T$. Therefore, for any given Bell-diagonal state $\rho_{\lambda}$ in Eq.~\eqref{Bell_diagonal}, the maximal $\alpha$-CHSH Bell value is
\begin{equation}
    S= 2\sqrt{\alpha^2(\lambda_{1}+\lambda_{2}-\lambda_{3}-\lambda_{4})^2+(\lambda_{1}-\lambda_{2}+\lambda_{3}-\lambda_{4})^2},
\end{equation}
where measurements for $\rho_{\lambda}$ to achieve the maximal Bell value, i.e., optimal measurements, are given by
\begin{equation}
    \begin{gathered}
        \hat{A}_0=\pm\sigma_z,\\
        \hat{A}_1=\pm\sigma_x,\\
        \hat{B}_0=\pm\cos\theta\sigma_z\pm\sin\theta\sigma_x,\\
        \hat{B}_1=\pm\cos\theta\sigma_z\mp\sin\theta\sigma_x,
    \end{gathered}
    \label{opt_meas1}
\end{equation}
    or
\begin{equation}
    \begin{gathered}
        \hat{A}_0=\pm\sigma_z,\\
        \hat{A}_1=\mp\sigma_x,\\
        \hat{B}_0=\pm\cos\theta\sigma_z\mp\sin\theta\sigma_x,\\
        \hat{B}_1=\pm\cos\theta\sigma_z\pm\sin\theta\sigma_x,
    \end{gathered}
    \label{opt_meas2}
\end{equation}
with $\tan\theta=(\lambda_{1}-\lambda_{2}+\lambda_{3}-\lambda_{4})/[\alpha(\lambda_{1}+\lambda_{2}-\lambda_{3}-\lambda_{4})]$.
\end{proof}

From the proof, we see that any Bell-diagonal state $\rho_{\lambda}$ in Eq.~\eqref{Bell_diagonal} reaches its maximal Bell value of Eq.~\eqref{CHSH_type} when measurements are taken in the form Eq.~\eqref{opt_meas1} or Eq.~\eqref{opt_meas2}. In other words, measurements in Eq.~\eqref{opt_meas1} and Eq.~\eqref{opt_meas2} are the optimal measurements for $\rho_{\lambda}$ that yield the largest $\alpha$-CHSH Bell value. To solve the simplified entanglement estimation problem in Eq.~\eqref{equivalent_optm} for Bell-diagonal states, we need to solve the optimal measurements first. Given a general pair of qubits $\rho_{AB}$, the maximal $\alpha$-CHSH Bell value $S$ for $\rho_{AB}$ is expressed as a function of $T_{ij}$,
\begin{equation}
    S=[2(\alpha^2+1)(T_{11}^2+T_{13}^2+T_{31}^2+T_{33}^2)+2(\alpha^2-1)\sqrt{(T_{11}^2-T_{13}^2+T_{31}^2-T_{33}^2)^2+4(T_{11}+T_{13}+T_{31}+T_{33})^2}]^{1/2},
\label{max_CHSH_vio_gen}
\end{equation}
where $T_{ij}=\Tr[\rho_{AB}(\sigma_i\otimes\sigma_j)]$ is the coefficient of $\rho_{AB}$ under Hilbert-Schmidt basis. When $\rho_{AB}$ is Bell-diagonal, Eq.~\eqref{max_CHSH_vio_gen} degenerates to Eq.~\eqref{eq:maximal_value_appendix}.

\section{Proof of the lower bound of concurrence}
\label{appendix:concurrence_est}
In this section, we prove the analytical concurrence estimation result via the $\alpha$-CHSH Bell value. Here we restrict the underlying state as a pair of qubits.

\begin{theorem}
    Suppose the underlying quantum state is a pair of qubits. For a given $\alpha$-CHSH expression in Eq.~\eqref{CHSH_type} parametrized by $\alpha$, if the Bell value is $S$, then the amount of concurrence in the underlying state can be lower-bounded,
\begin{equation}
    C(\rho_{AB})\geq \sqrt{\frac{S^2}{4}-\alpha^2}.
\end{equation}
The equality can be saturated when measuring a Bell-diagonal state in Eq.~\eqref{Bell_diagonal} with eigenvalues
\begin{equation}
    \begin{gathered}
        \lambda_1=\frac{1}{2}+\frac{1}{2}\sqrt{\frac{S^2}{4}-\alpha^2},\\
        \lambda_2=\frac{1}{2}-\frac{1}{2}\sqrt{\frac{S^2}{4}-\alpha^2},\\
        \lambda_3=\lambda_4=0,
    \end{gathered}
\end{equation}
using measurements in Eq.~\eqref{Bell_opt_meas} with $\theta=\arctan(\frac{1}{\alpha}\sqrt{\frac{S^2}{4}-\alpha^2})$.
\end{theorem}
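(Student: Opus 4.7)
The plan is to combine the two preceding lemmas of this section with a short algebraic computation on Bell-diagonal states. First, by Lemma~\ref{lemma:LOCC} I may replace $\rho_{AB}$ by a Bell-diagonal state $\rho_\lambda$ that produces the same $\alpha$-CHSH value $S$, with $C(\rho_\lambda) \leq C(\rho_{AB})$ because the transformation used there is LOCC plus mixing. For such a state a direct calculation of the spin-flipped matrix in the Bell basis gives the standard closed form $C(\rho_\lambda) = \max\{0,\, 2\lambda_1 - 1\}$. By Lemma~\ref{lemma:maximal_value} the observed value satisfies $S^2/4 \leq \alpha^2 u^2 + v^2$ with $u = \lambda_1+\lambda_2-\lambda_3-\lambda_4$ and $v = \lambda_1-\lambda_2+\lambda_3-\lambda_4$, with equality when the optimal measurements are used. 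So the theorem reduces to the algebraic claim
\begin{equation*}
(2\lambda_1-1)^2 + \alpha^2 \;\geq\; \alpha^2 u^2 + v^2,
\end{equation*}
together with the sign statement $2\lambda_1 - 1 \geq 0$ whenever $S > 2\alpha$.

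For the algebraic inequality I will introduce the shorthand $A = \lambda_1-\lambda_2$, $B = \lambda_3-\lambda_4$, $a = \lambda_1+\lambda_2$, $b = \lambda_3+\lambda_4$, so that $a+b = 1$, $2\lambda_1-1 = A-b$, $v = A+B$, and $1-u^2 = 4ab$. The target becomes $(A-b)^2 + 4\alpha^2 ab \geq (A+B)^2$. Factoring the obstruction as $(A+B)^2 - (A-b)^2 = (b+B)(2A+B-b)$, I split into two cases: if $2A+B-b < 0$ the right-hand side is negative and the claim is immediate; otherwise both factors are nonnegative, and the orderings $b \geq B$ and $a \geq A$ yield $(b+B)(2A+B-b) \leq 2b\cdot 2A = 4Ab \leq 4ab \leq 4\alpha^2 ab$, using $\alpha \geq 1$. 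This gives $(2\lambda_1-1)^2 \geq S^2/4 - \alpha^2$.

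To fix the sign, I will invoke the Horodecki criterion: a Bell-diagonal state with largest eigenvalue $\lambda_1 \leq 1/2$ is separable and hence cannot violate the local bound $2\alpha$, so $S > 2\alpha$ forces $\lambda_1 > 1/2$ and therefore $C(\rho_\lambda) = 2\lambda_1 - 1 \geq \sqrt{S^2/4-\alpha^2}$. Tightness comes for free: plugging the stated eigenvalues ($\lambda_3 = \lambda_4 = 0$) into Lemma~\ref{lemma:maximal_value} reproduces the Bell value $S$, gives $C = 2\lambda_1-1 = \sqrt{S^2/4-\alpha^2}$, and fixes $\theta$ from the formula just below Eq.~\eqref{Bell_opt_meas}. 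The only real work is the case split in the central inequality; once the variables $(A,B,a,b)$ are introduced, the remaining manipulations are elementary.
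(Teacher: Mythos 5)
Your proposal is correct, and its skeleton matches the paper's: reduce to a Bell-diagonal state via Lemma~\ref{lemma:LOCC} (using monotonicity of concurrence under the LOCC-plus-mixing transformation), use $C(\rho_\lambda)=\max\{0,2\lambda_1-1\}$, and bound $S$ through Lemma~\ref{lemma:maximal_value}. Where you genuinely diverge is the final step. The paper formulates a constrained minimization over the spectrum $\{\lambda_i\}$, eliminates $\lambda_1,\lambda_4$ in favor of $\lambda_2,\lambda_3$, and then simply asserts that the resulting two-variable problem ``can be solved analytically'' at the stated optimum. You instead prove the equivalent pointwise inequality $(2\lambda_1-1)^2+\alpha^2\geq \alpha^2 u^2+v^2$ for every Bell-diagonal spectrum via the substitution $(A,B,a,b)$ and a two-case estimate, which I checked and which is sound (the orderings $b\geq B$, $a\geq A$, the factorization of the obstruction, and the use of $\alpha\geq1$ are all valid), and you separately fix the sign by noting that a Bell-diagonal state with $\lambda_1\leq 1/2$ is separable and hence cannot exceed the local bound $2\alpha$. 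This buys two things the paper's write-up leaves implicit: a fully explicit verification of the optimization step, and an explicit treatment of the $C=0$ branch of $\max\{0,2\lambda_1-1\}$, which the paper's ``solved analytically'' glosses over. You also only need $S\leq 2\sqrt{\alpha^2u^2+v^2}$ rather than equality with the maximal value, which fits the theorem's hypothesis (an observed Bell value under arbitrary, possibly suboptimal measurements) slightly more directly than the paper's duality phrasing. The saturation check is the same in both treatments.
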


\begin{proof}
Given any Bell value  $S\in(2\alpha,2\sqrt{1+\alpha^2}]$, we aim to determine the least amount of concurrence that is required to support the Bell value, $S$. We solve the simplified optimization problem in Eq.~\eqref{equivalent_optm}, restricting the underlying state as a Bell-diagonal state in Eq.~\eqref{Bell_diagonal} and taking the objective entanglement measure as $C(\cdot)$,
\begin{equation}
    \begin{split}
    C_{\rm est} & =\underset{\lambda_i,i=1,2,3,4}{\text{min}} \max\{0,2\lambda_{1}-1\}, \\
    \text{s.t.}\quad
    S &=2\sqrt{\alpha^2(\lambda_1+\lambda_2-\lambda_3-\lambda_4)^2+(\lambda_1-\lambda_2+\lambda_3-\lambda_4)^2},\\
    \lambda_1 & \geq\lambda_2\geq\lambda_3\geq\lambda_4,\\
    1 & =\sum_{i=1}^4\lambda_i, \lambda_i\geq 0 \; ,i=1,2,3,4.
    \end{split}
    \label{concurrence_optm}
\end{equation}
We first reduce the number of variables to simplify the optimization in Eq.~\eqref{concurrence_optm}. Since the variables in Eq.~\eqref{concurrence_optm} are not independent with each other, we express variables $\lambda_1$ and $\lambda_4$ as functions of variables $\lambda_2$ and $\lambda_3$, 
\begin{gather}
\lambda_{\max}=\lambda_{1}=\frac{1}{2}-\frac{1}{\alpha^2+1}(\alpha^2\lambda_{2}+\lambda_{3})+\frac{1}{\alpha^2+1}\sqrt{\frac{S^2(\alpha^2+1)}{16}-\alpha^2(\lambda_{2}-\lambda_{3})^2},\label{lambda_max}\\
\lambda_{\min}=\lambda_{4}=\frac{1}{2}-\frac{1}{\alpha^2+1}(\lambda_{2}+\alpha^2\lambda_{3})-\frac{1}{\alpha^2+1}\sqrt{\frac{S^2(\alpha^2+1)}{16}-\alpha^2(\lambda_{2}-\lambda_{3})^2}\label{lambda_min}.
\end{gather}
The non-negativity of $\lambda_{\min}$ in Eq.~\eqref{lambda_min} restricts $\lambda_{2}$ and $\lambda_{3}$ outside an ellipse,
\begin{equation}
    \left(\lambda_{2}-\frac{1}{2}\right)^2+\alpha^2\left(\lambda_{3}-\frac{1}{2}\right)^2\geq\frac{S^2}{16},
\end{equation}
and the fact that $\lambda_{4}$ in Eq.~\eqref{lambda_min} is the smallest among all $\lambda_i$ restricts $\lambda_{2}$ and $\lambda_{3}$ inside an ellipse,
\begin{equation}
    \lambda_{2}^2+(4\alpha^2+1)\lambda_{3}+2\lambda_{2}\lambda_{3}-\lambda_{2}-(2\alpha^2+1)\lambda_{3}\leq\frac{S^2}{16}-\frac{\alpha^2+1}{4}.
\end{equation}
With the above derivations, the optimization in Eq.~\eqref{concurrence_optm} can be rewritten with independent variables $\lambda_{2}$ and $\lambda_{3}$ as

\begin{equation}
    \begin{split}
    C_{\rm est} & =\underset{\lambda_{2},\lambda_{3}}{\text{min}} 2\lambda_{1}-1, \\
    \text{s.t.}\quad
    \lambda_{1} &=\frac{1}{2}-\frac{1}{\alpha^2+1}(\alpha^2\lambda_{2}+\lambda_{3})+\frac{1}{\alpha^2+1}\sqrt{\frac{S^2(\alpha^2+1)}{16}-\alpha^2(\lambda_{2}-\lambda_{3})^2},\\
    0 & \leq(\lambda_{2}-\frac{1}{2})^2+\alpha^2(\lambda_{3}-\frac{1}{2})^2-\frac{S^2}{16},\\
    0 & \geq \lambda_{2}^2+(4\alpha^2+1)\lambda_{3}+2\lambda_{2}\lambda_{3}-\lambda_{2}-(2\alpha^2+1)\lambda_{3}-(\frac{S^2}{16}-\frac{\alpha^2+1}{4}),\\
    \lambda_{1}&\geq\lambda_{2}\geq\lambda_{3},\\
    0&\leq \lambda_{2},\lambda_{3},\lambda_{2}+\lambda_{3}\leq 1.
    \end{split}
    \label{concurrence_optm_solvable}
\end{equation}
The optimization in Eq.~\eqref{concurrence_optm_solvable} can be solved analytically, with the global optimal value taken at
\begin{equation}
    \begin{gathered}
        \lambda_1=\frac{1}{2}+\frac{1}{2}\sqrt{\frac{S^2}{4}-\alpha^2},\\
        \lambda_2=\frac{1}{2}-\frac{1}{2}\sqrt{\frac{S^2}{4}-\alpha^2},\\
        \lambda_3=\lambda_4=0.
        \label{optimal_state}
    \end{gathered}
\end{equation}
Therefore the estimated concurrence is lower-bounded,
\begin{equation}
    C(\rho_{AB})\geq C_{\rm est}(S)= \sqrt{\frac{S^2}{4}-\alpha^2}.
    \label{concurrence_est_appendix}
\end{equation}
The lower bound of Eq.~\eqref{concurrence_est_appendix} is saturated when the $\alpha$-CHSH Bell value $S$ is obtained by measuring the Bell-diagonal state $\rho_{\lambda}$ with the parameters in Eq.~\eqref{optimal_state} under its optimal measurements. The optimal measurements are in Eq.~\eqref{opt_meas1} and Eq.~\eqref{opt_meas2} with $\theta=\arctan(\frac{1}{\alpha}\sqrt{\frac{S^2}{4}-\alpha^2})$.
\end{proof}

\section{Entanglement of formation estimation result in Ref.~\cite{arnon2017noise}}
\label{sc:EOF_arnon}
In Ref.~\cite{arnon2017noise}, the authors use the rigidity property of nonlocal games to estimate the EOF in the underlying system. Here, we briefly review their results. The Bell test can be formulated as a nonlocal game, $G$. Under this description, the nonlocal parties win the nonlocal game if their measurement outputs satisfy a set of relations with respect to their input settings that correspond to the Bell test~\cite{elkouss2016nearly}. For instance, in the nonlocal game related to the CHSH Bell test, Alice and Bob win the game if $ab=(-1)^{xy}$. We denote the nonlocal parties' maximum winning probability with classical strategies as $\mathrm{cval}(G)$ and their maximum winning probability with quantum strategies as $\mathrm{qval}(G)$. In the CHSH game, $\mathrm{qval}(G)=(2+\sqrt{2})/4$, corresponding to the Tsirelson bound, $S=2\sqrt{2}$, and $\mathrm{cval}(G)=3/4$, corresponding to the maximum Bell value subjected to local hidden variables, $S=2$. We denote the gap between classical and quantum strategies as $\Delta=\textrm{qval}(G)-\textrm{cval}(G)$. Given the nonlocal game $G$ with a non-zero gap $\Delta$, suppose the nonlocal parties play the game for $n$ independent instances in parallel. For a given value, $\nu\in[0,\mathrm{qval}(G)]$, we say the nonlocal parties win a threshold game, $G_{\mathrm{qval}(G)-\nu}^n$, if they win a fraction of at least $[\mathrm{qval}(G)-\nu]$ instances in total. Suppose there exists a quantum strategy for the nonlocal parties such that they win $G_{\mathrm{qval}(G)-\nu}^n$ with probability $\kappa$. Then, the EOF in the underlying system of the threshold game can be lower-bounded,
\begin{equation}
    E_{\mathrm{F}}(\rho)\geq c_2\kappa^2 n,
    \label{EOF_2017}
\end{equation}
where
\begin{equation}
    c_2=\frac{(\Delta-\nu)^5}{10\cdot 180^2\log |\mathcal{A}\times\mathcal{B}|},
\end{equation}
with $\mathcal{A}$ and $\mathcal{B}$ denoting the answer alphabets of Alice and Bob, respectively.

We compare this result with ours when using the CHSH Bell test. As we estimate the EOF in the Shannon limit with the i.i.d. assumption, for a fair comparison, we take $n\rightarrow\infty$ and evaluate the average EOF per instance in Eq.~\eqref{EOF_2017}. When the Bell value is $S>2$, Alice can Bob can win the corresponding nonlocal game $G$ with probability $1/2+S/8$. Applying this strategy to each instance in the threshold game in an i.i.d. manner, Alice and Bob can win the threshold game, $G_{\mathrm{qval}(G)-\nu}^{n\rightarrow\infty}$ with $\nu=\sqrt{2}/4-S/8$, with probability $\kappa=1$. Taking these values into Eq.~\eqref{EOF_2017}, the EOF of the system per instance is lower-bounded by
\begin{equation}
\begin{split}
    E_{\mathrm{F}}(\rho_{AB})&\geq \frac{[(\frac{2+\sqrt{2}}{4}-\frac{3}{4})-(\frac{\sqrt{2}}{4}-\frac{S}{8})]^5}{10\cdot 180^2\cdot\log (2\times 2)}\\
    &=\frac{(S-2)^5}{10\cdot 180^2\cdot2^{16}}.
\end{split}
\end{equation}

\section{Realistic settings in experiment}
\label{appendix:numerical}
In this section, we analyze the realistic settings and analytically derive the condition where a better concurrence estimation can be derived with a tilted CHSH Bell expression. That is, by using the family of $\alpha$-CHSH expressions, a Bell expression with parameter $\alpha>1$ gives a better estimation result than $\alpha=1$. Considering the estimation function, $C_{\rm est}(S)$, as a function parameterized by $\alpha$, then our target is to determine the condition for the following inequalities,
\begin{equation}
\begin{gathered}
    \frac{\partial C_{\rm est}(S)}{\partial \alpha}|_{\alpha=1}>0,\\
    C_{\rm est}(S)|_{\alpha=1}>0.
\end{gathered}
\label{increasing_estimation_condition}
\end{equation}
The conclusions of Theorem~\ref{thm:DI_condition_pure} and Theorem~\ref{thm:DI_condition_werner} can be directly solved from  Eq.~\eqref{increasing_estimation_condition} by substituting the corresponding estimation equation and the $\alpha$-CHSH Bell value.

For a better understanding of Theorem~\ref{thm:DI_condition_pure}, we take $\theta_1=\pi/2$ in Eq.~\eqref{eq:DI_condition_pure}. With straightforward derivations, we find that when the measurement parameters, $\theta_2$ and $\theta_3$, satisfy
\begin{equation}
    (\sqrt{2}+1)(\cos\theta_2+\cos\theta_3)+(\sin\theta_2-\sin\theta_3)>2(\sqrt{2}+1),
    \label{DI_condition_pure_werner}
\end{equation}
there exists a value of $\delta$ such that $\theta_1=\pi/2$, and $\theta_2,\theta_3$ and $\delta$ satisfy the inequality in  Eq.~\eqref{eq:DI_condition_pure}. In other words, when measurement parameters are set in Eq.~\eqref{meas_setting} with $\theta_1=\pi/2$ and $\theta_2,\theta_3$ follow Eq.~\eqref{DI_condition_pure_werner}, there exists a proper state, $\ket{\phi_{AB}(\delta)}$, such that concurrence estimation result $C_{\rm est}(S)$ of the $\ket{\phi_{AB}(\delta)}$ for some $\alpha>1$ is larger than that with $\alpha=1$. The conclusion from Eq.~\eqref{DI_condition_pure_werner} also applies to Werner states.

\section{Semi-device-independent optimal entanglement estimation}
In some scenarios, one may trust the functioning of the source, such as knowing the input states to be pairs of qubits, which can be seen as a semi-DI scenario. With the additional information, one may obtain a better entanglement estimation result. 
In this section, we compare the performance of this semi-DI scenario with the fully DI scenario under a realistic setting. Suppose the underlying state is a non-maximally entangled state $\ket{\phi_{AB}(\delta)}$ in Eq.~\eqref{pure} with $\delta=0.6$, and the measurements are given by Eq.~\eqref{meas_setting} with $\theta_1=\pi/2$ and $\theta_2=-\theta_3=\pi/2-1.2$. Under this setting, the fully DI and semi-DI EOF estimation results are illustrated in Fig.~\ref{fig:DI_semi-DI_pure_concurrence}. The semi-DI estimation, $E_{\rm F,est,semi-DI}(S)$, is strictly larger than the fully DI estimation, $E_{\rm F,est,DI}(S)$, for any value of $\alpha>1$. Besides, when $\alpha$ takes the value of $\alpha^*_{E}\doteq 2.3973$, $E_{\rm F,est,semi-DI}|_{\alpha^*_E}\doteq 0.9031$ rigorously equals to the real system EOF when $\delta=0.6$.

\begin{figure}[hbt!]
    \centering    \includegraphics[scale=0.57]{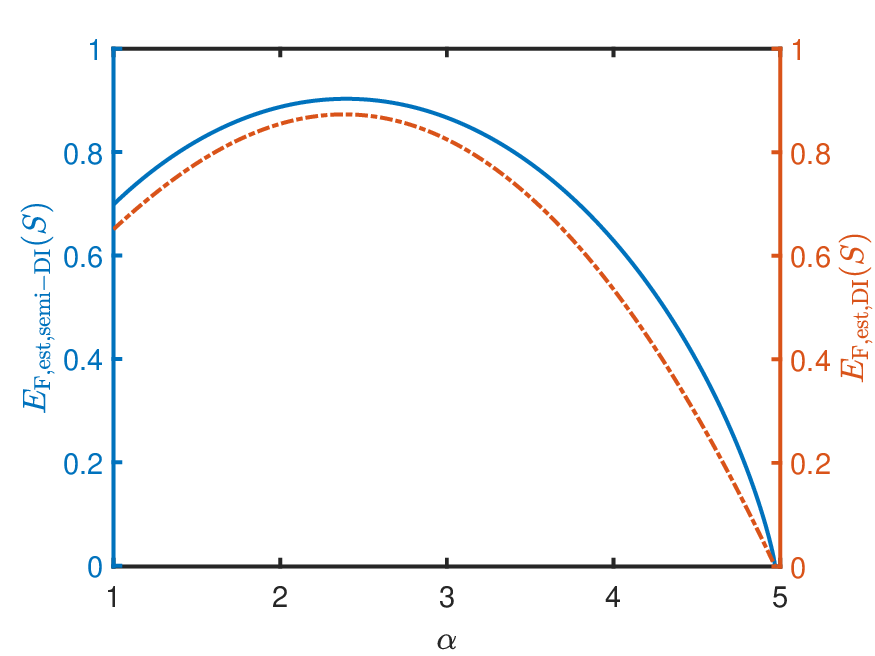}
    \caption{Illustration of the comparison between DI and semi-DI EOF estimation results. The experimental setting is given by $\delta=0.6,\theta_1=\pi/2,\theta_2=-\theta_3=\pi/2-1.2$. For the estimation results varying in $\alpha>1$, we plot the estimated values from the settings of semi-DI and DI with the blue solid line and the red dashed line, respectively. The semi-DI EOF estimation with confirmed knowledge of input dimensions is strictly larger than the DI EOF estimation.
    }
    \label{fig:DI_semi-DI_pure_concurrence}
\end{figure}

In addition, we also study the condition where a better entanglement estimation result using general $\alpha$-CHSH expressions is obtained under some $\alpha>1$. We present the following theorems for the state families of Werner states and non-maximally entangled states, using concurrence as the entanglement measure.

\begin{theorem}
\label{thm:semi-DI_condition_pure}
In a Bell test experiment where the input states are pairs of qubits, suppose the underlying state of the system takes the form of Eq.~\eqref{pure} and the observables take the form of Eq.~\eqref{meas_setting}. When $\theta_1,\theta_2,\theta_3$ and $\delta$ satisfy
\begin{equation}
    (\cos\theta_2+\cos\theta_3)[\sin2\delta\sin\theta_1(\sin\theta_2-\sin\theta_3)+(1+\cos\theta_1)\cos\theta_2+(1-\cos\theta_1)\cos\theta_3]> 4,
    \label{eq:semi-DI_condition_pure}
\end{equation}
there exists $\alpha>1$, where a better estimation of $C_{\rm est}(S)$ can be obtained by using the $\alpha$-CHSH inequality parameterized by this value than by using the original CHSH inequality (corresponding to $\alpha=1$).
\end{theorem}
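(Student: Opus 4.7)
The strategy is to mimic the proof of Theorem~\ref{thm:DI_condition_pure} but to replace the convex-closure lower bound in Eq.~\eqref{linear_concurrence_EOF_est} with the sharper estimate $C_{\rm est}(S,\alpha)=\sqrt{S^2/4-\alpha^2}$ from Theorem~\ref{thm:concurrence_est}, which is now legitimate because we assume the inputs are qubit pairs. Concretely, I would parametrize $S$ as an explicit function of $\alpha$ through the state and measurement parameters and then test the semi-DI analogue of Eq.~\eqref{increasing_estimation_condition}, namely $\partial_\alpha C_{\rm est}(S(\alpha),\alpha)\big|_{\alpha=1}>0$ together with $C_{\rm est}|_{\alpha=1}>0$.

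First I would compute $S(\alpha)$ for $\rho_{AB}=\ket{\phi_{AB}(\delta)}\bra{\phi_{AB}(\delta)}$ with the observables in Eq.~\eqref{meas_setting}. The only nonzero Pauli correlators of $\ket{\phi_{AB}(\delta)}$ in the $x$--$z$ plane are $\langle\sigma_z\otimes\sigma_z\rangle=1$ and $\langle\sigma_x\otimes\sigma_x\rangle=\sin 2\delta$, while $\langle\sigma_z\otimes\sigma_x\rangle$ and $\langle\sigma_x\otimes\sigma_z\rangle$ vanish. Substituting into Eq.~\eqref{CHSH_type} produces the affine form $S(\alpha)=\alpha P+Q$ with
\begin{equation*}
    P=\cos\theta_2+\cos\theta_3,\qquad Q=\cos\theta_1(\cos\theta_2-\cos\theta_3)+\sin\theta_1\sin 2\delta(\sin\theta_2-\sin\theta_3).
\end{equation*}
This is essentially the only direct computation in the proof.

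Next, applying the chain rule to $C_{\rm est}(\alpha)=\sqrt{S(\alpha)^2/4-\alpha^2}$ shows that its sign at $\alpha=1$ is controlled by $\tfrac{1}{2}S(1)S'(1)-2=\tfrac{1}{2}(P+Q)P-2$, so the derivative is strictly positive precisely when $(P+Q)P>4$. Expanding $(P+Q)P$ and grouping $\cos\theta_2$ and $\cos\theta_3$ with the coefficients $(1+\cos\theta_1)$ and $(1-\cos\theta_1)$ respectively reproduces Eq.~\eqref{eq:semi-DI_condition_pure} verbatim. To upgrade this infinitesimal improvement into the claimed existence of $\alpha>1$, I would invoke the continuous differentiability of $C_{\rm est}(\alpha)$ on any open interval where the radicand stays positive: a strictly positive derivative at $\alpha=1$ yields strict improvement over a right neighborhood of $1$.

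The step I expect to be the main obstacle is checking that $C_{\rm est}(S(1),1)$ is itself well defined and strictly positive under the hypothesis, i.e., that the CHSH inequality is actually violated at $\alpha=1$. This is where one has to be a bit careful, since \emph{a priori} the signs of $P$ and $Q$ are not constrained. However, $|P|\leq 2$, so the assumption $(P+Q)P>4$ forces $|P+Q|=|S(1)|>2$ and hence $S(1)^2/4-1>0$; since the estimator depends only on $S^2$, the sign of $S(1)$ is immaterial and $C_{\rm est}(S(1),1)>0$ as required, closing the argument.
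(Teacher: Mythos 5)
Your proposal is correct and follows essentially the same route as the paper: substituting the qubit-input estimator $C_{\rm est}=\sqrt{S^2/4-\alpha^2}$ from Theorem~\ref{thm:concurrence_est} and the affine Bell value $S(\alpha)=\alpha(\cos\theta_2+\cos\theta_3)+Q$ into the derivative-positivity condition of Eq.~\eqref{increasing_estimation_condition} at $\alpha=1$, which reproduces Eq.~\eqref{eq:semi-DI_condition_pure}. Your explicit check that the hypothesis forces $|S(1)|>2$ (so the radicand is positive and the estimator is differentiable at $\alpha=1$) is a detail the paper leaves implicit, but it is consistent with its argument.
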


The proof of Theorem~\ref{thm:semi-DI_condition_pure} is similar to the proof of Theorem~\ref{thm:DI_condition_pure}. Here, we alternatively apply the concurrence estimation result for pairs of qubits input, $C_{\rm est}(S)$ in Eq.~\eqref{concurrence_est}, to the condition in Eq.~\eqref{increasing_estimation_condition}. To better understand the theorem, we present an example with $\theta_1=\pi/2$ in Eq.~\eqref{eq:semi-DI_condition_pure}. When
\begin{equation}
\begin{split}
  \theta_2+\theta_3&=\arccos(\frac{4}{1+\sqrt{2}k}-1),\\
  -\frac{\pi}{4}-\arccos k&<\theta_3-\theta_2<-\frac{\pi}{4}+\arccos k,
\end{split}
\label{semi-DI_condition_pure_werner_k}
\end{equation}
where $k\in[\sqrt{2}/2,1)$, there exists $\delta$ such that $\theta_1=\pi/2,\theta_2,\theta_3$ and the $\delta$ satisfy Eq.~\eqref{eq:semi-DI_condition_pure}. In other words, when measurement parameters are set in Eq.~\eqref{meas_setting} with $\theta_1=\pi/2$ and $\theta_2,\theta_3$ following Eq.~\eqref{semi-DI_condition_pure_werner_k}, there exist non-maximally entangled states where a better semi-DI concurrence estimation result is obtained for some $\alpha>1$ in comparison with $\alpha=1$.

In Fig.~\ref{fig:DI_semi-DI_pure_concurrence}, we observe that under a well-chosen value of $\alpha$, the semi-DI concurrence estimation coincides with the real value. In many semi-DI CHSH Bell tests, the existence of $\alpha$ that yields an accurate estimation of state concurrence is ubiquitous. Theorem~\ref{thm:concurrence_est} indicates that under the assumption of qubit inputs, the lower bound of concurrence in Eq.~\eqref{concurrence_est} can be saturated at any non-maximally entangled state $\ket{\phi_{AB}(\delta)}$, once the Bell value is obtained by the optimal measurements of the $\ket{\phi_{AB}(\delta)}$. In fact, earlier research indicates that the observables,
 \begin{equation}
  \begin{split}
      \hat{A}_0&=\pm\sigma_z,\\
    \hat{A}_1&=\sigma_x, \\
    \hat{B}_0&=\pm\cos\theta\sigma_z+\sin\theta\sigma_x,\\
    \hat{B}_1&=\pm\cos\theta\sigma_z-\sin\theta\sigma_x,
  \end{split}
  \label{optimal_meas_pure}
 \end{equation}
with $\tan\theta=\sin2\delta/\alpha$, are the optimal measurements of $\ket{\phi_{AB}}$ with any fixed $\alpha$~\cite{acin2012randomness} . The form of observables in Eq.~\eqref{optimal_meas_pure} coincides with our initialization in Eq.~\eqref{meas_setting} when  $\theta_1=\pi/2,\theta_2+\theta_3=0,0<\theta_2<\pi/4$. In this case, any non-maximally entangled state $\ket{\phi_{AB}(\delta)}$ with $\delta$ satisfying $\sin2\delta>\tan\theta_2$ reaches its optimal concurrence estimation when $\alpha$ takes the value of
\begin{equation}
\alpha^*_C=\frac{\sin2\delta}{\tan\theta_2},
\end{equation}
and the optimal semi-DI estimation value is
\begin{equation}
    C_{\rm est}|_{\alpha^*_C,\rm semi-DI}=C(\ket{\phi_{AB}}).
\end{equation}

\begin{theorem}
\label{thm:semi-DI_condition_werner}
In a Bell test experiment where the input states are pairs of qubits, suppose the underlying state of the system takes the form of Eq.~\eqref{werner} and the observables take the form of Eq.~\eqref{meas_setting}. When $\theta_1,\theta_2,\theta_3$ and $p$ satisfy
\begin{equation}
    (1-p)^2(\cos\theta_2+\cos\theta_3)\cdot[\sin\theta_1(\sin\theta_2-\sin\theta_3)+(1+\cos\theta_1)\cos\theta_2+(1-\cos\theta_1)\cos\theta_3]>4,
    \label{eq:semi-DI_condition_werner}
\end{equation}
there exists $\alpha>1$, where a better estimation of $C_{\rm est}(S)$ can be obtained by using the $\alpha$-CHSH inequality parameterized by this value than by using the original CHSH inequality (corresponding to $\alpha=1$).
\end{theorem}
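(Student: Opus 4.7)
The plan is to mirror the argument already used for Theorem~\ref{thm:DI_condition_pure} and Theorem~\ref{thm:semi-DI_condition_pure}, with the only new ingredient being the evaluation of the $\alpha$-CHSH expression on a Werner state. Since the input dimension is trusted to be a pair of qubits, I can invoke the tight semi-DI estimator $C_{\rm est}(S)=\sqrt{S^{2}/4-\alpha^{2}}$ from Theorem~\ref{thm:concurrence_est} directly, rather than its convex-closed fully DI counterpart, and then apply the two-part criterion in Eq.~\eqref{increasing_estimation_condition} at $\alpha=1$.

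First I would compute $S(\alpha)=\Tr[\rho_{\mathrm{W}}(p)\hat S_{\alpha}]$ with the measurements parametrized in Eq.~\eqref{meas_setting}. Because $\Tr[(I/4)(\sigma_i\otimes\sigma_j)]=0$ for $i,j\in\{1,3\}$, the maximally mixed component contributes nothing and one simply gets a factor $(1-p)$ multiplying the Bell expression evaluated on $\ket{\Phi^{+}}$. Using $\langle\sigma_z\otimes\sigma_z\rangle_{\Phi^{+}}=\langle\sigma_x\otimes\sigma_x\rangle_{\Phi^{+}}=1$ and the off-diagonal correlators vanishing, this gives
\begin{equation}
S(\alpha)=(1-p)\bigl[\alpha A+B\bigr],
\end{equation}
with $A=\cos\theta_{2}+\cos\theta_{3}$ and $B=\cos\theta_{1}(\cos\theta_{2}-\cos\theta_{3})+\sin\theta_{1}(\sin\theta_{2}-\sin\theta_{3})$.

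Next I would differentiate $C_{\rm est}^{2}(\alpha)=S(\alpha)^{2}/4-\alpha^{2}$ with respect to $\alpha$ (working with the square avoids a spurious singularity at the concurrence threshold) and impose positivity at $\alpha=1$. A short computation yields
\begin{equation}
\left.\frac{d\,C_{\rm est}^{2}}{d\alpha}\right|_{\alpha=1}=\frac{(1-p)^{2}A(A+B)}{2}-2,
\end{equation}
so positivity of this derivative is exactly $(1-p)^{2}A(A+B)>4$. Rewriting $A+B=(1+\cos\theta_{1})\cos\theta_{2}+(1-\cos\theta_{1})\cos\theta_{3}+\sin\theta_{1}(\sin\theta_{2}-\sin\theta_{3})$ recovers Eq.~\eqref{eq:semi-DI_condition_werner} verbatim.

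It remains to check the second half of the criterion, namely $C_{\rm est}(S)|_{\alpha=1}>0$, so that the curve is not merely sitting at zero with a formally positive derivative. Since $A\leq 2$, the derivative inequality implies $(1-p)^{2}(A+B)>2$, and using $(1-p)\leq 1$ then gives $(1-p)(A+B)=S(1)-(1-p)A\cdot 0\cdots$; more cleanly, $(1-p)(A+B)\geq(1-p)^{2}(A+B)>2$, so $S(1)>2$ and $C_{\rm est}(S)|_{\alpha=1}>0$ automatically. Because $C_{\rm est}$ is continuous in $\alpha$ and its derivative at $\alpha=1$ is strictly positive, there is an open interval $\alpha\in(1,1+\epsilon)$ on which $C_{\rm est}(S(\alpha))>C_{\rm est}(S(1))$, completing the proof. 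The whole argument is essentially bookkeeping; the main obstacle I anticipate is purely cosmetic—tidying the algebra so the final inequality appears in the symmetric form written in the theorem statement—rather than any genuine technical difficulty.
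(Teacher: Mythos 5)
Your proof is correct and follows essentially the same route as the paper: substitute the qubit-pair estimator $C_{\rm est}(S)=\sqrt{S^{2}/4-\alpha^{2}}$ and the Werner-state Bell value $S(\alpha)=(1-p)(\alpha A+B)$ into the criterion of Eq.~\eqref{increasing_estimation_condition} at $\alpha=1$, which reproduces Eq.~\eqref{eq:semi-DI_condition_werner} exactly. Your extra check that $C_{\rm est}|_{\alpha=1}>0$ is automatic (using $A\leq 2$) is a nice addition the paper leaves implicit, though it tacitly assumes $A>0$ and $A+B>0$; in the sign-reversed case the identical bound applied to $|S(1)|$ gives $|S(1)|>2$, and since $C_{\rm est}$ depends only on $S^{2}$ the conclusion is unaffected.
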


The proof of Theorem~\ref{thm:semi-DI_condition_werner} is similar to the proof of Theorem~\ref{thm:DI_condition_werner}. Theorem~\ref{thm:semi-DI_condition_werner} is derived from the concurrence estimation result for pairs of qubits input, $C_{\rm est}(S)$ in Eq.~\eqref{concurrence_est}, and the condition in Eq.~\eqref{increasing_estimation_condition}. Here we take $\theta_1=\pi/2$ for convenience, when $\theta_2$ and $\theta_3$ satisfy Eq.~\eqref{semi-DI_condition_pure_werner_k} for $k\in[\sqrt{2}/2,1)$, there exists Werner states $\rho_{W}$ such that, the semi-DI concurrence estimation under the settings performs better when taking an $\alpha>1$ CHSH-type Bell expression compared with $\alpha=1$.

We further interpret Theorem~\ref{thm:semi-DI_condition_werner} via a special example. In a semi-DI CHSH Bell test, if measurements in Eq.~\eqref{meas_setting} are set with $\theta_1=\pi/2,\theta_2+\theta_3=0,0<\theta_2<\pi/4$, then any Werner state $\rho_{W}$ with parameter $p$,
\begin{equation}
    p<1-\frac{1}{\sqrt{\sqrt{2}\cos\theta_2\cos(\theta_2-\pi/4)}},
    \label{semi-DI_condition_werner_p}
\end{equation}
promises a better concurrence estimation when taking an $\alpha>1$ CHSH-type Bell expression compared with $\alpha=1$. The right hand side (RHS) of Eq.~\eqref{semi-DI_condition_werner_p} is no large than $1-\frac{1}{\cos(\pi/8)2^{1/4}}$, which implies only Werner state with $p\leq 1-\frac{1}{\cos(\pi/8)2^{1/4}}\doteq 0.0898$ is possible to fit in the condition in Theorem~\ref{thm:semi-DI_condition_werner}. It is worth mentioning that for any $0<\theta_2<\pi/4$, the RHS of Eq.~\eqref{semi-DI_condition_werner_p} is strictly larger than the RHS of  Eq.~\eqref{DI_condition_werner_p}. It allows a wide choice of Werner states to promise a better estimation when taking an $\alpha>1$ CHSH-type Bell expression compared with $\alpha=1$ in the semi-DI experiment. In a semi-DI system with the Werner state and measurements satisfying Eq.~\eqref{semi-DI_condition_werner_p}, the semi-DI concurrence estimation reaches the optimal when the CHSH Bell value is taken at $\alpha$ equals to
\begin{equation}
    \alpha^*_C=\frac{(1-p)^2\cos\theta_2\sin\theta_2}{1-(1-p)^2\cos^2\theta_2}
\end{equation}
and the optimal semi-DI estimation value is
\begin{equation}
    C_{\rm est}|_{\alpha^*_C,{\rm semi-DI}}=\frac{(1-p)\sin\theta_2}{\sqrt{1-(1-p)^2\cos^2\theta_2}}.
\end{equation}
\bibliographystyle{unsrt.bst}

\bibliography{reference.bib}

\end{document}